\newtheorem*{rep@theorem}{\rep@title}
\newcommand{\newreptheorem}[2]{\newenvironment{rep#1}[1]{\def\rep@title{#2 \ref{##1}}\begin{rep@theorem}}{\end{rep@theorem}}}
\newtheorem{theorem}{Theorem}[section]
\newtheorem{lemma}[theorem]{Lemma}
\newtheorem{corollary}[theorem]{Corollary}
\newtheorem{definition}[theorem]{Definition}
\newtheorem{claim}[theorem]{Claim}
\theoremstyle{definition}
{
	\newtheorem{remark}[theorem]{Remark}
	\newtheorem{obs}[theorem]{Observation}
}
\newcommand{\snorm}[1]{\norm{#1}_{\mathrm {\infty}}}    \renewcommand{\comment}[1]{}
\newcommand{\tinyspace}{\mspace{1mu}}
\newcommand{\abs}[1]{\left\lvert\tinyspace #1 \tinyspace\right\rvert}
\newcommand{\norm}[1]{\left\lVert\tinyspace #1 \tinyspace\right\rVert}
\newcommand{\tr}{\operatorname{Tr}}
\newcommand{\pr}{\operatorname{Pr}}
\newcommand{\class}[1]{\textup{#1}}
\newcommand{\ayes}{A_{\rm yes}}
\newcommand{\ano}{A_{\rm no}}
\mathchardef\mhyphen="2D
\newcommand{\inner}[2]{\langle #1, #2 \rangle}
\newcommand{\calC}{\mathcal{C}}
\newcommand{\calL}{\mathcal{L}}
\newcommand{\ketbra}[2]{\ket{#1}\!\bra{#2}}
\def\complex{\mathbb{C}}
\def\real{\mathbb{R}}
\def\natural{\mathbb{N}}
\def\({\left(}
\def\){\right)}
\def\yes{\text{yes}}
\def\no{\text{no}}
\newcommand{\QMA}{\class{QMA}}
\newcommand{\QMAt}{\class{QMA}(2)}
\newcommand{\PSPACE}{\class{PSPACE}}
\newcommand{\BQP}{\class{BQP}}
\newcommand{\QCMA}{\class{QCMA}}
\newcommand{\PrQCMA}{\class{Precise}\text{-}\QCMA}
\newcommand{\PrBQP}{\class{Precise}\text{-}\BQP}
\newcommand{\BQPC}{\class{QCIRCUIT}}
\newcommand{\DDET}{\class{NON-EMPTY GAP}}
\newcommand{\EXP}{\class{EXP}}
\newcommand{\coEXP}{\class{co-EXP}}
\newcommand{\NEXP}{\class{NEXP}}
\newcommand{\QCSi}{\mathrm{QC} \Sigma}
\newcommand{\QSi}{\mathrm{Q} \Sigma}
\newcommand{\QCPi}{\mathrm{QC} \Pi}
\newcommand{\QPi}{\mathrm{Q} \Pi}
\newcommand{\QCPH}{\class{QCPH}}
\newcommand{\QPH}{\class{QPH}}
\newcommand{\poly}{\textup{poly}}
\newcommand{\mpoly}{\textup{mpoly}}
\newcommand{\sigmatwo}{\Sigma_2}
\newcommand{\sigmai}{\Sigma_i}
\newcommand{\pit}{\Pi_2}
\newcommand{\pii}{\Pi_i}
\newcommand{\PH}{\class{PH}}
\newcommand{\PP}{\class{PP}}
\newcommand{\MA}{\class{MA}}
\newcommand{\Po}{\class{P}}
\newcommand{\NP}{\class{NP}}
\newcommand{\pp}{\class{PP}}
\newcommand{\cp}{\#\Po}
\newcommand{\BPP}{\class{BPP}}
\newcommand{\Ppoly}{\class{P$_{/\poly}$}}
\newcommand{\Bpoly}{\class{BQP$_{/\mpoly}$}}
\newcommand{\coNP}{\class{co}\text{-}\NP}
\newcommand{\QRGone}{\class{QRG}(1)}
\definecolor{greenn}{rgb}{0,0.8,0.2}
\definecolor{bluue}{rgb}{0.3,0,0.7}
\begin{document}
\title{\bf Quantum generalizations of the polynomial hierarchy \\with applications to $\class{QMA(2)}$}

\author{
	Sevag Gharibian\thanks{University of Paderborn, Paderborn, North Rhine-Westphalia, Germany, and Virginia Commonwealth University, Richmond, VA, USA.}
	\and
	Miklos Santha\thanks{CNRS, IRIF, Universit\'{e} de Paris, France and Centre for Quantum Technologies, National University of Singapore, Singapore.}
	\and
	Jamie Sikora\thanks{Perimeter Institute for Theoretical Physics, Waterloo, Ontario, Canada.}
	\and
	Aarthi Sundaram\thanks{Joint Center for Quantum Information and Computer Science, University of Maryland, College Park, Maryland, USA.}
	\and
	Justin Yirka\thanks{The University of Texas at Austin, Austin, Texas, USA}
}

\date{}

\maketitle

\vspace{-5mm}

\begin{abstract}
The polynomial-time hierarchy (PH) has proven to be a powerful tool for providing separations in computational complexity theory (modulo standard conjectures such as PH does not collapse). Here, we study whether two quantum generalizations of PH can similarly prove separations in the quantum setting. The first generalization, $\QCPH$, uses classical proofs, and the second, $\QPH$, uses quantum proofs. For the former, we show quantum variants of the Karp-Lipton theorem and Toda's theorem. For the latter, we place its third level, $\QSi_3$, into $\class{NEXP}$ {using the Ellipsoid Method for efficiently solving semidefinite programs}. These results yield two implications for $\class{QMA(2)}$, the variant of Quantum Merlin-Arthur (QMA) with two unentangled proofs, a complexity class whose characterization has proven difficult. First, if $\QCPH=\QPH$ (i.e., alternating quantifiers are sufficiently powerful so as to make classical and quantum proofs ``equivalent''), then $\class{QMA(2)}$ is in the Counting Hierarchy (specifically, in $\Po^{\pp^{\pp}}$). Second, unless $\class{QMA(2)}={\QSi_3}$ (i.e., alternating quantifiers do not help in the presence of ``unentanglement''), $\QMA(2)$ is strictly contained in NEXP.
\end{abstract}

\section{Introduction}
\label{sec:intro}

The polynomial-time hierarchy ($\PH$)~\cite{MS72} is a staple of computational complexity theory, and generalizes $\Po$, $\NP$ and $\coNP$ with the use of alternating existential ($\exists$) and {universal} ($\forall$) operators.
Roughly, a language $L\subseteq\set{0,1}^*$ is in $\sigmai$, the $i${th} level of PH, if there exists a polynomial-time deterministic Turing machine $M$ that acts as a verifier and accepts $i$ proofs $y_1,\ldots, y_i$, each polynomially bounded in the length of the input $x$, such that:
	\begin{eqnarray}
    x\in L&\Rightarrow&\exists y_1\forall y_2\exists y_3\cdots Q_i y_i\text{ such that $M$ accepts }(x,y_1,\ldots, y_i), \\
    x\not\in L&\Rightarrow&\forall y_1\exists y_2\forall y_3\cdots \overline{Q}_i y_i\text{ such that $M$ rejects }(x,y_1,\ldots, y_i),
\end{eqnarray}
where $Q_i=\exists$ if $i$ is odd and $Q_i=\forall$ if $i$ is even, and $\overline{Q}$ denotes the complement of $Q$.
Then, \PH\ is defined as the union over all $\sigmai$ {for all $i \in \mathbb{N}$}.
The study of PH has proven remarkably fruitful in the classical setting, from celebrated results such as Toda's Theorem~\cite{T91}, which shows that $\class{PH}$ is contained in $\Po^{\#\Po}$, to the Karp-Lipton Theorem~\cite{KL80}, which says that unless PH collapses to its second level, \NP\ does not have polynomial size non-uniform circuits.

As PH has played a role in separating complexity classes (assuming standard conjectures like "PH does not collapse"), it is natural to ask whether {quantum} generalizations of PH can be used to separate \emph{quantum} complexity classes. Here, there is some flexibility in defining ``quantum PH'', as there is more than one well-defined notion of ``quantum NP'': The first, Quantum-Classical Merlin Arthur (\QCMA)~\cite{AN02}, is a quantum analogue of Merlin-Arthur (MA) with a classical proof but quantum verifier. The second, Quantum Merlin Arthur (\QMA)~\cite{KSV02}, is \QCMA\ except with a quantum proof. Generalizing each of these definitions leads to (at least) two possible definitions for ``quantum PH'', the first using classical proofs (denoted \QCPH), and the second using quantum proofs (denoted \QPH) (formal definitions in Section~\ref{sec:preliminaries}).

With these definitions in hand, our aim is to separate quantum classes whose complexity characterization has generally been difficult to pin down. A prime example is $\QMAt$, the variant of QMA with two ``unentangled'' quantum provers. While the classical analogue of $\QMAt$ (i.e. an MA proof system with two provers) trivially equals MA, in the quantum regime multiple unentangled provers are conjectured to yield a more powerful proof system (e.g. there exist problems in $\QMAt$ not known to be in $\QMA$)~\cite{LCV07, BT10, Bei08, ABDFS09}. For this reason, $\QMAt$ has received much attention, despite which the strongest bounds known on $\QMAt$ remain the trivial ones:
\[
	\QMA\subseteq \QMAt\subseteq \NEXP.
\]
(Note: $\QMA\subseteq\PP$~\cite{KW00, Vy03, MW05, GY16}.) In this work, we show that, indeed, results about the structure of \QCPH\ or \QPH\ yield implications about the power of QMA(2).

\subsection{Results, techniques, and discussion}
\label{sec:results}
We begin by informally defining the two quantum generalizations of PH to be studied (formal definitions in Section~\ref{sec:preliminaries}).

\paragraph{How to define a ``quantum \PH''?} The first definition, $\QCPH$, has its $i$th level $\QCSi_i$ defined analogously to $\sigmai$, except we replace the  Turing machine $M$ with a polynomial-size uniformly generated quantum circuit $V$ such that:
\begin{eqnarray}
    x\in \ayes\!&\!\!\Rightarrow\!\!&\!\exists y_1\forall y_2\exists y_3\cdots Q_i y_i\text{ such that $V$ accepts }(x,y_1,\ldots, y_i)\text{ with probability $\geq 2/3$}, \label{eq:Ayes} \\
    x\in\ano\!&\!\!\Rightarrow\!\!&\!\forall y_1\exists y_2\forall y_3\cdots \overline{Q}_i y_i\text{ such that $V$ accepts }(x,y_1,\ldots, y_i)\text{ with probability $\leq 1/3$}, \label{eq:Ano}
\end{eqnarray}
where the use of a language $L$ has been replaced with a \emph{promise problem\footnote{Recall that unlike a decision problem, for a promise problem $A =(\ayes, \ano)$, it is not necessarily true that for all inputs $x\in\Sigma^*$, either $x\in\ayes$ or $x\in \ano$. In the case of proof systems such as \QCPH, when $x\not\in \ayes\cup\ano$, $V$ can output an arbitrary answer. Additionally, $\ayes \cap \ano = \emptyset$.}} $A=(\ayes,\ano)$ (since $\QCSi_i$ uses a bounded error verifier).  The values $(2/3, 1/3)$ are \emph{completeness} and \emph{soundness} parameters for $A$ and the interval $(1/3, 2/3)$ where no acceptance probabilities are present is termed the \emph{promise gap} for $A$. Notice that \QCPH\, defined as $\bigcup_{i \in \natural} \QCSi_i$, is a generalization of \QCMA\, in that $\QCSi_1=\QCMA$.

We next define \QPH\ using \emph{quantum} proofs. Here, however, there are various possible definitions one might consider. Can the quantum proofs be {entangled} between alternating quantifiers? If not, we are enforcing ``unentanglement'' as in $\class{QMA(2)}$. Allowing entanglement, on the other hand, might yield classes similar to \class{QIP}; however, note that $\class{QIP}=\class{QIP(3)}$ (i.e. $\class{QIP}$ collapses to a $3$-message proof system)~\cite{KW00,MW05}, and so it is not clear that allowing entanglement leads to an interesting hierarchy. Assuming proofs are unentangled, should the proofs be {pure} or {mixed} quantum states? (For QMA and $\QMAt$, standard convexity arguments show both classes of proofs are equivalent, but such arguments fail when \emph{alternating} quantifiers are allowed.)

Here, we define $\QPH$ to have its $i$th level, $\QSi_i$, defined similarly to $\QCSi_i$, except each classical proof $y_j$ is replaced with a mixed quantum state $\rho_j$ on polynomially many qubits (for clarity, each $\rho_j$ acts on a disjoint set of qubits). We say a promise problem $A = (\ayes, \ano)$ is in $\QSi_i$ if it satisfies the following conditions:
\begin{eqnarray*}
    x\in \ayes\!&\!\!\Rightarrow\!\!&\!\exists \rho_1\forall \rho_2\exists \rho_3\cdots Q_i \rho_i\text{ such that $V$ accepts }(x,\rho_1,\ldots, \rho_i)\text{ with probability $\geq 2/3$}, \\
    x\in\ano\!&\!\!\Rightarrow\!\!&\!\forall \rho_1\exists \rho_2\forall \rho_3\cdots \overline{Q}_i \rho_i\text{ such that $V$ accepts }(x,\rho_1,\ldots, \rho_i)\text{ with probability $\leq 1/3$}.
\end{eqnarray*}
\noindent Note that $\QMA=\QSi_1$ and $\QMAt \subseteq \QSi_3$ (simply ignore the second proof).

Our results are now stated as follows under three headings.

\paragraph{1. An analogue of Toda's theorem for $\QCPH$.}
As previously mentioned, \PH\ is one way to generalize $\NP$ using alternations. Another approach is to {count} the number of solutions for  an $\NP$-complete problem such as SAT, as captured by $\cp$. Surprisingly, these two notions are related, as shown by the following celebrated theorem of Toda.

\begin{theorem}[Toda's theorem~\cite{T91}]
\label{thm:Toda}
$\PH \subseteq \Po^{\#\Po}$.
\end{theorem}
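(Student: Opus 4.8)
The plan is to prove the (equivalent) inclusion by routing $\PH$ through the parity class $\pP$, following the standard two-step strategy: first establish $\PH \subseteq \class{BP}\cdot\pP$, the class of problems decidable by a bounded-error randomized reduction to a $\pP$ oracle, and then establish $\class{BP}\cdot\pP \subseteq \Po^{\cp}$. Since $\Po^{\cp}=\Po^{\pp}$, chaining the two inclusions yields $\PH\subseteq\Po^{\cp}$.

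For the first inclusion I would induct on the level $k$ to show $\Sigma_k \subseteq \class{BP}\cdot\pP$; this also gives $\Pi_k\subseteq\class{BP}\cdot\pP$, because both the $\class{BP}$ operator and the $\oplus$ operator are closed under complementation. The base case is immediate from $\Po\subseteq\pP\subseteq\class{BP}\cdot\pP$. For the step, write $\Sigma_k=\exists\cdot\Pi_{k-1}$, apply the inductive hypothesis to pull $\Pi_{k-1}$ inside $\class{BP}\cdot\pP$, and then reduce to the single absorption claim $\exists\cdot\class{BP}\cdot\pP\subseteq\class{BP}\cdot\pP$. Two tools drive this: (a) the Valiant--Vazirani isolation lemma, which randomly isolates a unique witness so that the parity of the number of witnesses detects satisfiability, giving $\exists\cdot\pP\subseteq\class{BP}\cdot\pP$; and (b) a swapping/amplification lemma $\exists\cdot\class{BP}\subseteq\class{BP}\cdot\exists$ together with the collapse of nested randomness $\class{BP}\cdot\class{BP}=\class{BP}$. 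Chaining these gives $\exists\cdot\class{BP}\cdot\pP\subseteq\class{BP}\cdot\exists\cdot\pP\subseteq\class{BP}\cdot\class{BP}\cdot\pP=\class{BP}\cdot\pP$.

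For the second inclusion I would exploit the arithmetic closure of $\#P$. The key gadget is the degree-raising polynomial $p(a)=3a^4+4a^3$, which satisfies $p(a)\equiv -1\pmod{m^2}$ whenever $a\equiv -1\pmod m$ and $p(a)\equiv 0\pmod{m^2}$ whenever $a\equiv 0\pmod m$, since $p(a)+1=(a+1)^2(3a^2-2a+1)$. Starting from the accepting-path count $g$ of the $\pP$ machine (odd iff accept, so $g\equiv -1$ versus $g\equiv 0 \pmod 2$), and noting $\#P$ is closed under $g\mapsto 3g^4+4g^3$, logarithmically many iterations produce a function $G\in\#P$ with $G\equiv -1\pmod{2^m}$ on acceptance and $G\equiv 0\pmod{2^m}$ on rejection, for $m$ exceeding the number of random bits. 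Then the single $\#P$ quantity $S=\sum_r G(x,r)$ satisfies $S\equiv -N\pmod{2^m}$, where $N$ counts the ``good'' random strings; since $N<2^m$ we recover $N$ exactly from one oracle call and compare it against the acceptance threshold in $\Po$, yielding $\class{BP}\cdot\pP\subseteq\Po^{\cp}$.

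The hard part, I expect, will be the quantifier-absorption bookkeeping in the first inclusion: to move existential (and, by complementation, universal) quantifiers past the $\class{BP}$ operator one must first amplify the bounded-error success probability enough to survive a union bound over all quantified strings, and then verify that the repeatedly nested randomness genuinely collapses while the error stays bounded. The delicate point in the second inclusion is comparatively mechanical but still essential: confirming $\#P$ closure under the degree-raising polynomial and carrying out the modular arithmetic so that summation over random strings counts the good strings exactly. Once these two technical cores are in place, the induction and the final $\Po$-level thresholding are routine.
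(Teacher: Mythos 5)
The paper states this result only as a citation to Toda~\cite{T91} and gives no proof of its own, so there is no internal argument to compare against. Your sketch is a correct outline of Toda's original two-stage proof --- $\PH \subseteq \class{BP}\cdot\pP$ via the Valiant--Vazirani isolation lemma together with the quantifier-swapping and error-amplification calculus, followed by $\class{BP}\cdot\pP \subseteq \Po^{\cp}$ via the degree-raising polynomial $3a^4+4a^3$ --- and the algebraic identity $3a^4+4a^3+1=(a+1)^2(3a^2-2a+1)$ you rely on checks out, so the proposal matches the cited source's approach.
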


\noindent In the quantum setting, for $\QCPH$, it can be shown using standard arguments involving enumeration over classical proofs that $\QCPH \subseteq \PSPACE$. However, here  we show a stronger result.

\begin{theorem}[A quantum-classical analogue of Toda's theorem]\label{thm:QCTODA}
$\QCPH \subseteq \Po^{\pp^{\pp}}$.
\end{theorem}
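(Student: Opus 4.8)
\emph{The plan is} to reduce the statement to a relativized form of Toda's theorem, after first replacing the bounded-error quantum verifier by an \emph{exact} counting predicate. Concretely, I would establish the chain
\[
\QCPH \;\subseteq\; \PH^{\pp} \;\subseteq\; \Po^{\cp^{\pp}} \;=\; \Po^{\pp^{\pp}},
\]
where the first containment is the new quantum ingredient, the second is Toda's theorem (Theorem~\ref{thm:Toda}) relativized to a $\pp$ oracle, and the last is the (relativized) identity $\Po^{\cp}=\Po^{\pp}$. Intuitively, the inner $\pp$ comes from simulating the $\BQP$ verifier at the base of the hierarchy, while the outer $\pp$ comes from collapsing the alternating classical quantifiers via Toda.

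For the first containment it suffices to prove $\QCSi_i\subseteq\sigmai^{\pp}$ for every $i$. I would fix a gate set with exactly representable amplitudes (e.g.\ Hadamard and Toffoli), so that on any complete assignment $(x,y_1,\ldots,y_i)$ the acceptance probability $p=p(x,y_1,\ldots,y_i)$ of the verifier $V$ equals $N/2^m$ for a polynomially bounded $m$ and an integer $N$ computable by a $\class{GapP}$ function. The threshold predicate $R(x,y_1,\ldots,y_i)\equiv[\,p\geq 1/2\,]$ then reduces to deciding the sign of the $\class{GapP}$ function $2N-2^m$ and hence lies in $\pp$ (this is the exact version of $\BQP\subseteq\pp$). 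The crucial observation is that $1/2$ lies strictly inside the promise gap $(1/3,2/3)$, together with the fact that a quantified formula is \emph{monotone} under pointwise weakening of its innermost predicate: if $x\in\ayes$, then the true formula with innermost predicate $[p\geq 2/3]$ stays true when weakened to $[p\geq 1/2]=R$, so $x\in L$ for $L=\{x:\exists y_1\forall y_2\cdots Q_i y_i\,R(x,y_1,\ldots,y_i)\}$; if $x\in\ano$, then the true negated formula with innermost predicate $[p\leq 1/3]$ stays true when weakened to $[p<1/2]=\neg R$, so $x\notin L$. Hence $\ayes\subseteq L$ and $\ano\subseteq\overline{L}$, and since $R\in\pp\subseteq\Po^{\pp}$ with an $i$-alternation prefix starting with $\exists$, we get $L\in\sigmai^{\pp}$. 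Taking the union over $i$ yields $\QCPH\subseteq\PH^{\pp}$.

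Applying the relativized Toda's theorem $\PH^{\pp}\subseteq\Po^{\cp^{\pp}}$ and the relativized collapse $\Po^{\cp^{\pp}}=\Po^{\pp^{\pp}}$ then completes the argument. I expect the main obstacle to be the soundness of the verifier-replacement step under alternation: one must rule out that adversarially chosen ``off-strategy'' proofs drive $p$ into the promise gap and corrupt the counting predicate. The monotonicity argument above is exactly what resolves this, but it must be applied to the \emph{entire} quantifier prefix rather than leaf-by-leaf, and it relies on $p$ being a single well-defined real for each complete assignment so that thresholding at $1/2$ is unambiguous; this is precisely where the completeness/soundness bounds $2/3,1/3$ straddling $1/2$ are used, and no error amplification is needed. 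A secondary, routine point is to confirm that both Toda's theorem and the identity $\Po^{\cp}=\Po^{\pp}$ relativize to the oracle $\pp$; these are standard, since every step of their proofs (Valiant--Vazirani, closure properties of $\class{GapP}$, and the Turing-interreducibility of $\cp$ and $\pp$) relativizes.
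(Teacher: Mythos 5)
Your proposal is correct in essence and follows the same strategy as the paper: replace the promise-gapped quantum verifier at the base of the hierarchy by a \emph{total} $\PP$-decidable threshold predicate (exploiting exact dyadic acceptance probabilities from a Hadamard--Toffoli gate set), observe that the quantified sentence over this total predicate correctly classifies YES and NO instances, and finish with relativized Toda. The paper packages the first step as the ``Cleaning Lemma'' (Lemma~\ref{l:clean}), computing $p_{y_1,\ldots,y_i}$ exactly by binary search over a $\PrBQP\subseteq\PP$ oracle and then testing $p\geq c$, and it separately proves (Lemma~\ref{l:contain}) that the resulting operator class lies in $\sigmai^{\PP}$; your single threshold query plus a direct appeal to the quantifier characterization of $\sigmai^{\PP}$ achieves the same effect more economically, and your ``monotonicity over the entire quantifier prefix'' observation is precisely the content of the paper's equivalences~(\ref{eq:1})--(\ref{eq:2}) --- it is indeed the step that sidesteps the $\exists\boldsymbol{\cdot}\BPP$-versus-$\MA$ recursion obstacle. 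One loose end: Definition~\ref{def:QCSigmam} allows arbitrary polynomial-time computable $c,s$ with $c-s\in\Omega(1/\poly(n))$, which need not straddle $1/2$, so your threshold must be placed at $(c+s)/2$ rather than at $1/2$ (still a $\PP$ predicate, since $c,s$ are polynomial-time computable and $p$ is an exact rational with polynomially many bits, so the comparison is again a $\class{GapP}$ sign test), or the gap must first be amplified via Theorem~\ref{res:err}; with that one-line adjustment your argument goes through.
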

\noindent Thus, we {almost} recover the original bound of Toda's theorem\footnote{$\pp$ captures all problems for which a majority of all possible answers is correct and it is known that $\Po^{\pp} = \Po^{\#\Po}$.}, except we require an oracle for the \emph{second} level of the Counting Hierarchy (CH). CH can be defined with its first level as $\mathbf{C}_1 = \pp$ and its $k$th level for $k\geq 2$ as $\mathbf{C}_{k} = \pp^{\mathbf{C}_{k-1}}$.

{Why} did we move up to the next level of CH? There are two difficulties in dealing with \QCPH\ (see Section~\ref{sec:toda} for a detailed discussion). The first can be sketched as follows. Classically, many results involving PH, from basic ones implying the collapse of PH to more advanced statements such as Toda's theorem, use the following recursive idea (demonstrated with $\sigmatwo$ for simplicity): By fixing the existentially quantified proof of $\sigmatwo$ the remnant reduces to a $\class{co-NP}$ problem, i.e. we can recurse to a lower level of PH. In the quantum setting, however, this does not hold --- fixing the existentially quantified proof for $\QCSi_2$ does \emph{not} necessarily yield a $\class{co-QCMA}$ problem as some acceptance probabilities may fall in the $(1/3, 2/3)$ promise gap which cannot happen for a  problem in $\class{co-QCMA}$. (This is due to the same phenomenon that has been an obstacle to resolving whether $\exists\boldsymbol{\cdot}\BPP$ equals $\MA$ (see Section~\ref{sec:rel} and Remark~\ref{rem:langvsprom}).)
Thus, we cannot directly generalize recursive arguments from the classical setting to the quantum setting. The second difficulty is trickier to explain briefly (see Section~\ref{sscn:bounding} for details). Roughly, Toda's proof that $\PH \subseteq \Po^{\pp}$ crucially relies on the Valiant-Vazirani (VV) theorem~\cite{VV86}, which has one-sided error (i.e.~VV may map \textsf{YES} instances of SAT to \textsf{NO} instances of UNIQUE-SAT, but \textsf{NO} instances of SAT are always mapped to \textsf{NO} instances of UNIQUE-SAT). The VV theorem for QCMA~\cite{ABBS08} also has this property, but in addition it can output instances which are ``invalid''. Roughly, an ``invalid'' instance of a promise problem $\Pi$ is an instance violating the promise of $\Pi$. Such instances pose a problem, because feeding an oracle an invalid instance results in an arbitrary output; coupled with the two-sided error which arises in \QCPH\ due to the presence of {alternating} quantifiers, it is unclear how to extend the parity arguments used in Toda's proof to the \QCPH\ setting. 

To circumvent these difficulties, we exploit a high-level idea from~\cite{GY16}, where an oracle for SPECTRAL GAP\footnote{This problem determines whether the spectral gap of a given local Hamiltonian is ``small'' or ``large''.} was used to detect ``invalid'' QMA instances\footnote{This was used, in turn, to show in conjunction with~\cite{A14} that SPECTRAL GAP is $\Po^{\class{Unique-QMA[log]}}$-hard.}.
In our setting, the ``correct'' choice of oracle turns out to be a \PrBQP\ oracle\footnote{{For the purposes of our Cleaning Lemma, we may instead use a PQP oracle, where recall PQP is BQP except in the YES case, the verifier accepts with probability $>1/2$, and in the NO case accepts with probability $\leq 1/2$. Note that in contrast to \PrBQP, PQP is defined without completeness/soundness parameters~\cite{W09_2}; that one may impose an inverse exponential promise gap on PQP is a non-trivial consequence of the fact that one may choose an ``appropriately nice'' gate set for PQP to ensure acceptance probabilities are rational. For this reason, and since whether \PrBQP\ equals PQP or not depends strongly on the choice of completeness/soundness parameters, we have opted to treat \PrBQP\ as a generally distinct entity from PQP; see Section~\ref{sec:prbqp} for details.}}, where \PrBQP\ is roughly \BQP\ with an {inverse} exponentially small promise gap. Using this, we are able to essentially ``remove'' the promise gap of \QCPH\ altogether, thus recovering a ``decision problem'' which does not pose the difficulties above. Specifically, this mapping is achieved by Lemma~\ref{l:clean} (Cleaning Lemma), which shows that $\forall i \in \natural,$ we have $\QCSi_i\subseteq \exists\boldsymbol{\cdot}\forall\boldsymbol{\cdot}\cdots\boldsymbol{\cdot} Q_i\boldsymbol{\cdot} \class{P}^{\PP}$. {The latter expression applies the existential ($\exists$) and universal ($\forall$) operators to a complexity class $\calC$. Informally, $\exists \boldsymbol{\cdot} \calC$ is the class of languages such that an input $x$ is in the language if and only if there is a polynomial-sized witness $y$ such that $\langle x, y \rangle$ is in a language in $\calC$. Correspondingly, the $\forall \boldsymbol{\cdot} \calC$ class is defined when for every witness $y$, $\langle x, y \rangle$ is in some language in $\calC$. (See Definition~\ref{def:exists} for formal definitions of $\exists$ and $\forall$.)

Notice that although we use a \PrBQP\ oracle above, the Cleaning Lemma shows containment using a \PP\ oracle. This is because, as shown in Lemma~\ref{l:containedPP} and Corollary~\ref{cor:equalsPP}, $\PrBQP\subseteq \PP$. One may ask whether our proof technique also works with an oracle \emph{weaker} than \PP. We show in Theorem~\ref{thm:ppcomplete} that this is unlikely, since the problem of detecting proofs in promise gaps of quantum verifiers is \PP-complete.

Finally, an immediate corollary of Theorem~\ref{thm:QCTODA} and the fact that $\QMAt\subseteq \QPH$ is:
\begin{corollary}\label{cor:QMA2_1}
    If $\QCPH=\QPH$, then $\QMAt\subseteq \Po^{\PP^{\PP}}$.
\end{corollary}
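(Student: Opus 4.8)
The plan is to derive the result by chaining together three containments, two of which are already available in the excerpt, so the argument is genuinely immediate. First I would record the embedding of $\QMAt$ into the hierarchy with quantum proofs. A $\QMAt$ verifier receives two \emph{unentangled} proofs, both \emph{existentially} quantified; this is exactly the $\exists \rho_1 \forall \rho_2 \exists \rho_3$ pattern of $\QSi_3$, once the verifier is instructed to ignore the universally quantified middle proof $\rho_2$ entirely. Since in the definition of $\QPH$ the proofs act on disjoint registers, $\rho_1$ and $\rho_3$ are automatically unentangled, matching the $\QMAt$ promise. Hence $\QMAt \subseteq \QSi_3 \subseteq \QPH$, precisely as noted when $\QPH$ was defined.

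Second, I would invoke the hypothesis $\QCPH = \QPH$. Combined with the previous step, this gives $\QMAt \subseteq \QPH = \QCPH$. Finally I would apply Theorem~\ref{thm:QCTODA}, which places all of $\QCPH$ inside $\Po^{\pp^{\pp}}$. Substituting yields $\QMAt \subseteq \QCPH \subseteq \Po^{\pp^{\pp}} = \Po^{\PP^{\PP}}$, which is the claimed bound (using that $\pp$ and $\PP$ denote the same class, so the two oracle towers coincide).

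I do not expect any genuine obstacle in this step, since the statement is, as advertised, an immediate corollary: all of the real work lives inside Theorem~\ref{thm:QCTODA}, whose proof in turn rests on the Cleaning Lemma (Lemma~\ref{l:clean}) and the containment $\PrBQP \subseteq \PP$ (Lemma~\ref{l:containedPP} and Corollary~\ref{cor:equalsPP}). The only points deserving a sentence of justification are the embedding $\QMAt \subseteq \QSi_3$ — which uses the routine observation that a verifier may always disregard a supplied proof, together with the fact that disjoint registers enforce unentanglement — and the notational bookkeeping identifying $\Po^{\pp^{\pp}}$ with $\Po^{\PP^{\PP}}$. Neither requires any new idea beyond what has already been established.
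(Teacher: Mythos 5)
Your proposal is correct and follows exactly the paper's route: the paper likewise obtains this as an immediate consequence of $\QMAt\subseteq\QSi_3\subseteq\QPH$ (ignore the universally quantified middle proof), the hypothesis $\QCPH=\QPH$, and Theorem~\ref{thm:QCTODA}. No differences worth noting.
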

\noindent In other words, if alternating quantifiers are so powerful so as to make classical and quantum proofs equivalent in power, then $\QMAt$ is contained in CH (and thus in \PSPACE). For comparison, $\QMA\subseteq \Po^{\class{QMA[log]}}\subseteq \PP$~\cite{KW00,Vy03,MW05,GY16}.

\paragraph{2. \QPH\ versus \NEXP.} We next turn to the study of \emph{quantum} proofs, i.e.~\QPH. As mentioned above, the best known upper bound on $\QMAt$ is \NEXP\ --- a non-deterministic verifier can simply guess an exponential-size description of the proof. When alternating quantifiers are present, however, this strategy seemingly no longer works. In other words, it is not even clear that $\QPH \subseteq\NEXP$! This is in stark contrast to the explicit $\Po^{\cp}$ upper bound for $\PH$~\cite{T91}. In this part, our goal is to use semidefinite programming to give bounds on some levels of \QPH. As we will see, this will yield the existence of a complexity class lying ``between'' $\QMAt$ and \NEXP.

\begin{theorem}[Informal statement]
\label{thm:infEXP}
It holds that $\QSi_2 \subseteq \EXP$ and
$\QPi_2 \subseteq \EXP$, even when the completeness-soundness gap is inverse doubly-exponentially small.
\end{theorem}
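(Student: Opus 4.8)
The plan is to show $\QSi_2 \subseteq \EXP$ (and the $\QPi_2$ case follows by an analogous argument or by complementation) by formulating the acceptance condition of the $\QSi_2$ verifier as a semidefinite program of exponential size and solving it to sufficient precision in exponential time. Recall a problem $A=(\ayes,\ano)$ is in $\QSi_2$ if there is a poly-size verifier circuit $V$ such that $x \in \ayes \Rightarrow \exists \rho_1 \forall \rho_2$ [$V$ accepts $(x,\rho_1,\rho_2)$ with prob.\ $\geq 2/3$], and $x \in \ano \Rightarrow \forall \rho_1 \exists \rho_2$ [$V$ accepts with prob.\ $\leq 1/3$]. The first step is to fix the input $x$ and write the acceptance probability as $\ip{Q}{\rho_1 \otimes \rho_2}$ for a fixed positive semidefinite operator $Q = V^\dagger (\kb{1}_{\text{out}} \otimes \I) V$ acting on the two proof registers (tracing out ancillas), where $Q$ has exponential dimension but is specified implicitly by the poly-size circuit $V$.

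The heart of the argument is to express the quantity
\[
 w(x) \;=\; \max_{\rho_1} \; \min_{\rho_2} \; \ip{Q}{\rho_1 \otimes \rho_2}
\]
as a semidefinite program. The key observation is that the inner minimization over $\rho_2$ is, for each fixed $\rho_1$, a \emph{linear} optimization over the density operators of the second register, and by standard arguments its optimum is attained at a pure state (an extreme point), so $\min_{\rho_2}\ip{Q}{\rho_1\otimes\rho_2}$ equals the smallest eigenvalue of the operator $\tr_1\!\big((\rho_1 \otimes \I)\,Q\big)$ obtained by ``partially contracting'' $Q$ against $\rho_1$. The next step I would take is to encode the constraint ``this smallest eigenvalue is at least $t$'' by the linear matrix inequality $\tr_1\!\big((\rho_1 \otimes \I)\,Q\big) \succeq t\,\I$, which is jointly linear in $\rho_1$ and $t$. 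Maximizing $t$ subject to this LMI together with $\rho_1 \succeq 0$, $\tr(\rho_1)=1$, yields an SDP whose optimal value is exactly $w(x)$. This SDP has $\exp(\poly)$ variables and constraints but is explicitly generated from $V$ in exponential time.

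I would then invoke the Ellipsoid Method to solve this exponential-size SDP to additive precision finer than the completeness--soundness gap, which the theorem allows to be as small as inverse doubly-exponential. The run time of the Ellipsoid Method is polynomial in the problem dimension and in $\log(1/\varepsilon)$, so dimension $\exp(\poly)$ and precision $\varepsilon = \exp(-\exp(\poly))$ both contribute only an exponential running time overall; I would need to verify that the requisite facet-separation oracle (checking the LMI by an eigenvalue computation) and the a priori radius/conditioning bounds on the feasible region are computable in exponential time and have bit-complexity at most exponential. Comparing the computed value $w(x)$ against the threshold $1/2$ then decides membership, since $w(x) \geq 2/3$ in the $\ayes$ case and $w(x) \leq 1/3$ in the $\ano$ case.

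The main obstacle I anticipate is the careful precision and well-conditioning analysis needed to certify that the Ellipsoid Method terminates in exponential time at the doubly-exponentially small gap: one must bound the bit-sizes of the entries of $Q$ (which are algebraic numbers arising from products of gates), supply explicit inner/outer radii for the feasible ellipsoid, and argue that all intermediate rational approximations remain of exponential bit-length. A secondary subtlety is justifying the reduction of the inner $\forall \rho_2$ minimization to a smallest-eigenvalue constraint uniformly in $\rho_1$ so that the whole $\min$--$\max$ collapses into a single SDP rather than a nested optimization; this is where the linearity of the acceptance functional in each proof register, and the fact that extreme points of the density-operator set are pure states, are essential.
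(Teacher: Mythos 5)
Your proposal follows essentially the same route as the paper's proof: rewrite the inner minimization as a smallest-eigenvalue computation, lift it to the linear matrix inequality $tI \preceq \tr_1[(\rho_1\otimes I)C]$ to obtain a single SDP for the max--min value, and solve it with the Ellipsoid Method in exponential time, with the same attention to entrywise precision of the POVM element and to the inner/outer radius bounds for the feasible region. The approach and the anticipated technical issues match the paper's argument.
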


\noindent The proof idea is to map alternating quantifiers to an optimization problem with alternating minimizations and maximizations.
Namely, to decide if $x \in A_{\yes}$ or $x \in A_{\no}$ for a $\QSi_i$ promise problem $A=(\ayes,\ano)$, where $i$ is even, we can solve for $\alpha$ defined as the optimal value of the optimization problem:
\begin{equation}
\alpha := \max_{\rho_1} \min_{\rho_2} \max_{\rho_3} \cdots \min_{\rho_i}\;
\inner{C}{\rho_1 \otimes \rho_2 \otimes \cdots \otimes \rho_i} \label{eq:alpha1}
\end{equation}
where $C$ is the POVM operator\footnote{A POVM is a set of Hermitian positive semidefinite operators that sum to the identity. In this case, the POVM has two operators --- corresponding to the \textsf{ACCEPT} and \textsf{REJECT} states of the verifier.} corresponding to the  \textsf{ACCEPT} state of the verifier.
{(Note that each optimization attains an optimal solution by a simple compactness/continuity argument, hence the use of ``max'' instead of ``sup'' and ``min'' instead of ``inf''.)}
This is a non-convex problem, and as such is (likely) hard to solve in general.
Our approach is to cast the case of $i=2$ as a semidefinite program (SDP), allowing us to {efficiently} approximate $\alpha$.

The next natural question is whether a similar SDP reformulation might be used to show whether $\class{Q$\Sigma$}_3$ or $\class{Q$\Pi$}_3$ is also contained in \EXP. Unfortunately, this is likely to be difficult --- indeed, if there exists a ``nice'' SDP for the optimal success probability of $\QSi_3$ protocols, then it would imply $\QMAt \subseteq \EXP$, resolving the longstanding open problem of separating $\QMAt$ from $\NEXP$ (recall $\QMAt\subseteq\QSi_3$).
Likewise, a ``nice'' SDP for $\class{Q$\Pi$}_3$ would place $\class{co-}\QMAt \subseteq \EXP$.

To overcome this, we resort to non-determinism by stepping up to NEXP. Namely, one can non-deterministically guess the first proof of a
$\QSi_3$ protocol, then approximately solve the SDP for the resulting $\QPi_2$-flavoured computation. Hence, we have:

\begin{theorem}[Informal Statement] \label{IntroThm:NEXP}
It holds true that $\QMA(2) \subseteq \QSi_3 \subseteq \NEXP$ and $\mathrm{co} \textrm{-} \QMA(2) \subseteq \QPi_3 \subseteq {\mathrm{co} \textrm{-} \NEXP}$,
even when the completeness-soundness gap is inverse doubly-exponentially small. All containments hold with equality in the inverse exponentially small completeness-soundness gap setting as $\QMA(2) = \NEXP$ in this case~\textup{\cite{attila}}.
\end{theorem}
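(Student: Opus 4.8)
The plan is to build directly on the SDP machinery that establishes Theorem~\ref{thm:infEXP}, and then layer a single nondeterministic guess on top of it. The key observation is that $\QMA(2) \subseteq \QSi_3$ is immediate (as noted in the excerpt, one simply ignores the second proof), so the real content is the upper bound $\QSi_3 \subseteq \NEXP$; the $\co\textrm{-}\NEXP$ statement for $\QPi_3$ then follows by complementation. Recall that a $\QSi_3$ protocol has the form $\exists \rho_1 \forall \rho_2 \exists \rho_3$, with a verifier $V$ acting on the three proofs. The idea is that the outermost existential quantifier $\exists \rho_1$ is exactly the kind of object a nondeterministic machine can handle: $\NEXP$ is allowed to guess an exponentially long string, which is enough to write down (a sufficiently precise classical description of) the polynomially many qubits comprising $\rho_1$.

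First I would set up the reduction formally. Fix the input $x$ and recall that once $\rho_1$ is fixed to some specific state, the remaining computation $\forall \rho_2 \exists \rho_3$ is precisely a $\QPi_2$-flavoured optimization: we must decide whether $\min_{\rho_2}\max_{\rho_3} \inner{C}{\rho_1 \otimes \rho_2 \otimes \rho_3}$ is large (completeness case) or small (soundness case). By Theorem~\ref{thm:infEXP} (applied to the $\QPi_2$ side, so that the inner $\min\max$ can be approximated in exponential time), this quantity can be approximated to within the relevant completeness--soundness gap by an $\EXP$ machine that takes the description of $\rho_1$ as part of its input. So the $\NEXP$ algorithm is: nondeterministically guess a classical description of $\rho_1$, then deterministically run the $\EXP$ subroutine from Theorem~\ref{thm:infEXP} on the residual $\QPi_2$ instance (now with $\rho_1$ hardwired into the POVM operator $C$), and accept iff the approximated value exceeds the appropriate threshold. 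In the $\ayes$ case some guess for $\rho_1$ drives the inner value above completeness, so an accepting branch exists; in the $\ano$ case \emph{every} choice of $\rho_1$ leaves the inner value below soundness, so all branches reject. This is exactly the $\NEXP$ acceptance criterion.

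The main obstacle — the one step I would spend the most care on — is the \emph{discretization of the guessed proof} $\rho_1$. A genuine mixed state lives in a continuous space, but an $\NEXP$ machine can only guess a finite string, so I would guess a description with exponentially many bits of precision per matrix entry and then argue that this truncation perturbs the optimal inner value by only an inverse-doubly-exponentially-small amount. The quantitative content is a continuity/Lipschitz estimate: since $\inner{C}{\cdot}$ is linear in $\rho_1$ with $\norm{C}_\infty \leq 1$, a perturbation of $\rho_1$ of trace-norm $\epsilon$ changes the value by at most $\epsilon$, and this bound is preserved under the inner $\min_{\rho_2}\max_{\rho_3}$ by a standard argument that optimizing a uniformly-close family of functions yields uniformly-close optima. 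Choosing $\epsilon$ inverse-doubly-exponentially small (which costs only exponentially many bits, within the $\NEXP$ guessing budget) guarantees the discretization error stays comfortably inside the completeness--soundness gap — and crucially this is exactly why the result survives the inverse doubly-exponentially small gap regime claimed in the theorem. I would also verify that the $\EXP$ subroutine's own approximation error, inherited from the Ellipsoid-Method running time in Theorem~\ref{thm:infEXP}, composes additively with the discretization error without overrunning the gap.

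Finally, the $\QPi_3$ statement follows by observing that $\QPi_3$ is the complement of $\QSi_3$ under swapping $\ayes$ and $\ano$ (equivalently, negating the acceptance condition), so $\QPi_3 \subseteq \co\textrm{-}\NEXP$ is immediate from $\QSi_3 \subseteq \NEXP$; the $\co\textrm{-}\QMA(2) \subseteq \QPi_3$ containment is the complement of $\QMA(2)\subseteq\QSi_3$. For the final sentence of the theorem, the equality of all containments in the inverse-exponential-gap regime is not proved here but cited directly from~\cite{attila}, which establishes $\QMA(2) = \NEXP$ in that parameter setting; since $\QMA(2) \subseteq \QSi_3 \subseteq \NEXP$ already holds, sandwiching forces every containment in the chain to be an equality, and I would simply remark that this collapse is an immediate consequence of the cited result together with the chain of inclusions established above.
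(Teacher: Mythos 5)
Your proposal is correct and follows essentially the same route as the paper: nondeterministically guess a finite-precision description of $\rho_1$, hardwire it into the accepting POVM element to obtain a residual $\min_{\rho_2}\max_{\rho_3}$ problem, bound the discretization and numerical errors so they stay well inside the completeness--soundness gap, solve the resulting SDP in exponential time via the Ellipsoid Method, and invoke Pereszl\'{e}nyi's $\QMA(2)=\NEXP$ result to upgrade the containments to equalities in the small-gap regime. The only cosmetic difference is that you phrase the perturbation bound via $\norm{C}_{\mathrm{\infty}}$ and the trace norm while the paper uses Frobenius-norm estimates, and the paper is slightly more explicit that the $\EXP$ step is a re-run of the Theorem~\ref{thm:inEXP} argument on an exponential-size operator $C''$ rather than a black-box invocation of that theorem.
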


\noindent Three remarks are in order. First, note that our results are independent of the gate set. Second, in principle, it remains plausible that the \emph{fourth} level of \QPH\ already exceeds \NEXP\ in power. Finally, we have the following implication for $\QMAt$. Assuming \PH\ does not collapse, alternating quantifiers strictly add power to NP proof systems. If alternating quantifiers similarly add power in the {quantum} setting, then it would separate \QMAt\ from \NEXP\ via the following immediate corollary.
\begin{corollary}
If $\QMAt \neq \QSi_3$, i.e. if the second universally quantified proof of $\QSi_3$ adds proving power, then $\QMAt \neq \NEXP$.
Similarly, if $\mathrm{co} \textrm{-} \QMA(2) \neq \QPi_3$, then $\mathrm{co} \textrm{-} \QMA(2) \neq \mathrm{co} \textrm{-} \NEXP$.
\end{corollary}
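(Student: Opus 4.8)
The plan is to prove each implication by contraposition, using nothing beyond the sandwiching chains supplied by Theorem~\ref{IntroThm:NEXP}. For the first implication I would assume the negation of the conclusion, namely $\QMAt = \NEXP$, and aim to derive the negation of the hypothesis, $\QMAt = \QSi_3$. Theorem~\ref{IntroThm:NEXP} furnishes the containments $\QMAt \subseteq \QSi_3 \subseteq \NEXP$. Substituting the assumed equality into the two endpoints of this chain yields $\NEXP \subseteq \QSi_3 \subseteq \NEXP$, which forces every inclusion to collapse to equality; in particular $\QSi_3 = \NEXP = \QMAt$. Reading this contrapositively, $\QMAt \neq \QSi_3$ implies $\QMAt \neq \NEXP$, which is the first claim.

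The second implication follows the identical template applied to the dual chain $\mathrm{co} \textrm{-} \QMA(2) \subseteq \QPi_3 \subseteq \mathrm{co} \textrm{-} \NEXP$ from the same theorem. Assuming $\mathrm{co} \textrm{-} \QMA(2) = \mathrm{co} \textrm{-} \NEXP$ and substituting into the endpoints squeezes $\QPi_3$ between two copies of $\mathrm{co} \textrm{-} \NEXP$, giving $\QPi_3 = \mathrm{co} \textrm{-} \QMA(2)$. Taking the contrapositive then produces the stated conclusion that $\mathrm{co} \textrm{-} \QMA(2) \neq \QPi_3$ implies $\mathrm{co} \textrm{-} \QMA(2) \neq \mathrm{co} \textrm{-} \NEXP$.

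Because the argument is a pure collapse-of-a-sandwich deduction, I expect no genuine technical obstacle at the level of the corollary itself: all of the mathematical content resides in the nontrivial upper bounds $\QSi_3 \subseteq \NEXP$ and $\QPi_3 \subseteq \mathrm{co} \textrm{-} \NEXP$ of Theorem~\ref{IntroThm:NEXP} (together with the elementary inclusions $\QMAt \subseteq \QSi_3$ and $\mathrm{co} \textrm{-} \QMA(2) \subseteq \QPi_3$ obtained by ignoring the universally quantified proof), which we are entitled to assume. The only point deserving a sentence of care is parameter consistency: one should confirm that all three classes in each chain are taken in the same completeness-soundness gap regime, so that the collapse reflects a real equality of classes rather than an artifact of differing gap conventions. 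Since Theorem~\ref{IntroThm:NEXP} states its chains uniformly (and indeed robustly down to an inverse doubly-exponentially small gap), this check is routine, and the corollary follows immediately.
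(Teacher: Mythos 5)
Your proposal is correct and matches the paper's intent exactly: the paper states this as an ``immediate corollary'' of Theorem~\ref{IntroThm:NEXP}, with the implicit argument being precisely the sandwich collapse $\QMAt \subseteq \QSi_3 \subseteq \NEXP$ (and its dual) under the contrapositive assumption $\QMAt = \NEXP$. Your added remark about checking parameter consistency across the gap regimes is a reasonable point of care but does not change the argument.
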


\noindent{\textbf{Note added:} Since the original release of this article, new observations relevant to the discussion above have been made. Since these observations are closely related to the open questions of this article, they have been placed in Section~\ref{sscn:open} (Recent observations and open questions).}

\paragraph{3. A quantum generalization of the Karp-Lipton Theorem.}
Finally, our last result studies a topic which is unrelated to QMA(2) --- the well-known Karp-Lipton theorem \cite{KL80}. The latter shows that if $\NP$-complete problems can be solved by polynomial-size non-uniform Boolean circuits, then $\sigmatwo = \pit$ (formal definitions in Section~\ref{sec:preliminaries}), which in turn implies that $\PH$ collapses to its second level. {Here, a poly-size ``non-uniform'' circuit receives, in addition to the input instance, a poly-size ``advice string'' $y$ such that (1) $y$ depends only on the input size $n$, and (2) given $n$, computing $y$ need not be poly-time.} The class of decision problems solved by such circuits is $\Ppoly$.

\begin{theorem}[Karp-Lipton~\cite{KL80}]
If $\class{NP} \subseteq \Ppoly$ then $\pit = \sigmatwo$.
\end{theorem}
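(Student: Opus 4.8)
The plan is to prove only the single containment $\pit \subseteq \sigmatwo$ under the hypothesis $\NP \subseteq \Ppoly$, and then obtain the reverse containment for free by complementation. Recall that $\sigmatwo$ and $\pit$ are dual in the sense that $L \in \sigmatwo \iff \overline{L} \in \pit$. Hence, once $\pit \subseteq \sigmatwo$ is established, any $L \in \sigmatwo$ satisfies $\overline{L} \in \pit \subseteq \sigmatwo$, so $\overline{L} \in \sigmatwo$ and therefore $L \in \pit$; this yields $\sigmatwo \subseteq \pit$, and combined with the first direction gives the claimed equality. Thus the whole theorem reduces to a single containment.

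For that containment, I would fix $L \in \pit$, so that $x \in L \iff \forall y_1 \exists y_2 \, M(x, y_1, y_2) = 1$ for some polynomial-time machine $M$ with polynomially bounded $y_1, y_2$. The first observation is that for each fixed pair $(x, y_1)$, the inner predicate ``$\exists y_2 \, M(x, y_1, y_2) = 1$'' is an $\NP$ predicate, so by Cook--Levin it is captured by a formula $\phi_{x, y_1}$, computable in polynomial time from $(x, y_1)$, which is satisfiable exactly when such a $y_2$ exists (and whose satisfying assignments encode valid choices of $y_2$).

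The heart of the argument is to convert the hypothesis $\NP \subseteq \Ppoly$ --- which on its face only supplies a polynomial-size circuit family \emph{deciding} satisfiability of $\prob{SAT}$ --- into polynomial-size circuits that actually \emph{produce} satisfying assignments. This is the $\prob{SAT}$ self-reducibility (search-to-decision) step: from a decision circuit one builds, by fixing variables one at a time and re-querying the decision circuit on the (no larger) restricted formulas, a polynomial-size circuit $D$ that on input a satisfiable formula outputs a satisfying assignment. I would then write the $\sigmatwo$ predicate as $x \in L \iff \exists D \, \forall y_1 \, [\, M(x, y_1, D(\phi_{x, y_1})) = 1 \,]$, where the existential witness is the polynomial-size description of $D$, the universal proof is $y_1$, and the inner test runs in polynomial time since it merely simulates $D$ on $\phi_{x, y_1}$ and then runs $M$. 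Completeness is immediate: if $x \in L$, then every $\phi_{x, y_1}$ is satisfiable, and the genuine search circuit $D$ furnished by the hypothesis outputs a satisfying assignment, i.e. a valid $y_2$.

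The step I expect to demand the most care is soundness, namely ruling out that a maliciously guessed $D$ fools the construction in the NO case. If $x \notin L$, then some $y_1^{*}$ makes $\phi_{x, y_1^{*}}$ unsatisfiable, so \emph{no} assignment whatsoever satisfies $M(x, y_1^{*}, \cdot)$; hence for \emph{any} circuit $D$, the universal quantifier may select this $y_1^{*}$ and force $M(x, y_1^{*}, D(\phi_{x, y_1^{*}})) = 0$. The key point is thus that the guessed circuit is never required to be globally correct --- only the \emph{existence} of a circuit that works on the specific family $\{\phi_{x, y_1}\}_{y_1}$ matters, and the universal layer automatically defeats any incorrect guess. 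Assembling completeness and soundness places $L$ in $\sigmatwo$, establishing $\pit \subseteq \sigmatwo$ and, via the complementation step above, the theorem.
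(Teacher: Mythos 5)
Your proof is correct and follows the same strategy the paper attributes to Karp--Lipton and mirrors in its quantum analogue (Theorem~\ref{thm:QCKL}): fix the universally quantified proof, apply the $\NP \subseteq \Ppoly$ hypothesis to the inner $\NP$ predicate, upgrade the decision circuit to a witness-producing circuit via self-reducibility (the classical counterpart of Lemma~\ref{lem:ckt-eq}), and swap quantifiers, with the universal layer defeating any bad circuit guess in the soundness case. No gaps; the complementation step and the soundness argument are both handled correctly.
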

\noindent Denote the bounded-error analogue of $\Ppoly$ with polynomial-size non-uniform quantum circuits as $\Bpoly$. In this work, we ask: Does $\QCMA\subseteq \Bpoly$ imply $\QCPi_2=\QCSi_2$? Unfortunately, generalizing the proof of the Karp-Lipton theorem is problematic for the same ``$\exists\boldsymbol{\cdot}\BPP$ versus \MA\ phenomenon'' encountered in extending Toda's result. Namely, the proof of Karp-Lipton proceeds by fixing the outer, universally quantified, proof of a $\pit$ machine, and applying the $\class{NP} \subseteq \Ppoly$ hypothesis to the resulting \NP\ computation. However, for $\QCPi_2$, it is not clear that fixing the outer, universally quantified, proof yields a \QCMA\ computation; thus, it is not obvious how to use the hypothesis $\QCMA\subseteq \Bpoly$.

To sidestep this, our approach is to strengthen the hypothesis. Specifically, using the results of~\cite{JKNN12} on perfect completeness for \QCMA, fixing the outer proof of a $\QCPi_2$ computation can be seen to yield a \PrQCMA\ ``decision problem'', where by ``decision problem'', we mean no proofs for the \PrQCMA\ verifier are accepted within the promise gap. Here, \PrQCMA\ is \QCMA\ with {inverse} exponentially small promise gap. We hence obtain the following.

\begin{theorem}[A quantum-classical Karp-Lipton theorem]\label{thm:QCKL}
If $\PrQCMA \subseteq \Bpoly$, then {${\QCPi_2 = \QCSi_2}$}.
\end{theorem}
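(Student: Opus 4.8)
The plan is to prove the single containment $\QCPi_2 \subseteq \QCSi_2$ and then obtain equality for free. Indeed, $\class{co-}\QCSi_2 = \QCPi_2$ (flip the verifier's \textsf{ACCEPT}/\textsf{REJECT} outcome and swap the completeness/soundness thresholds), so once $\QCPi_2 \subseteq \QCSi_2$ is established, complementing gives $\QCSi_2 = \class{co-}\QCPi_2 \subseteq \class{co-}\QCSi_2 = \QCPi_2$ as well. It therefore suffices to take an arbitrary $A = (\ayes, \ano) \in \QCPi_2$ with verifier $V$, and to exhibit a $\QCSi_2$ protocol for it.

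The template I would follow is the classical Karp--Lipton argument, which rewrites $\forall y_1 \exists y_2\,\phi(x, y_1, y_2)$ as $\exists D\,\forall y_1\,\phi(x, y_1, D(x, y_1))$: the outer existential player guesses a polynomial-size circuit $D$ that, via the self-reducibility of SAT, produces a candidate inner witness $y_2$, and the universal player then \emph{verifies} $\phi$ directly. Crucially, soundness is enforced by the verification, not by trusting $D$ --- on a no-instance the bad $y_1$ admits no valid $y_2$, so whatever $D$ outputs fails the check. I would mirror this exactly, with $D$ replaced by the advice of a $\Bpoly$ circuit (guessed existentially, since it depends only on $|x|$ and the fixed length of $y_1$), and with the verification step implemented by running $V$ on the found $y_2$.

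The quantum obstacle is precisely the ``$\exists \cdot \BPP$ versus $\MA$'' phenomenon flagged in the introduction: upon fixing $y_1$, the inner predicate ``$\exists y_2 : \Pr[V \text{ accepts}] \geq 2/3$'' need not be a valid $\QCMA$ promise problem, since for some $y_1$ the optimal inner acceptance probability may lie in the gap $(1/3, 2/3)$, and the outer $\forall y_1$ quantifies over all such $y_1$. To repair this I would first apply the perfect-completeness construction for $\QCMA$ of~\cite{JKNN12} to the inner existential computation. With perfect completeness, and a gate set yielding suitably quantized acceptance amplitudes, every fixed $(x, y_1)$ satisfies either $\max_{y_2}\Pr[V \text{ accepts}] = 1$ or $\max_{y_2}\Pr[V \text{ accepts}] \leq 1 - 2^{-\poly}$. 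The inner problem thereby becomes a \emph{total} $\PrQCMA$ decision problem (no input falls inside the promise gap), so the hypothesis $\PrQCMA \subseteq \Bpoly$ applies uniformly to \emph{every} universally quantified $y_1$, not merely to promise-satisfying ones.

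It then remains to turn the resulting decision circuit into a witness-finder. I would apply $\PrQCMA \subseteq \Bpoly$ not to the bare inner problem but to its prefix version $L_{\mathrm{pre}} = \{(x, y_1, p) : \exists\, y_2 \succeq p \text{ with } \Pr[V \text{ accepts}] = 1\}$, which is again a total $\PrQCMA$ problem by the same quantization argument, padding all prefixes to a common length so that one advice string serves all queries. Amplifying the resulting $\Bpoly$ circuit to error $2^{-\poly}$ and running the standard bit-by-bit self-reduction yields a polynomial-size quantum procedure that, given the correct advice, outputs a genuine inner witness $y_2$ with overwhelming probability. The final $\QCSi_2$ protocol guesses the advice existentially, receives $y_1$ universally, runs this procedure to obtain $y_2$, and accepts iff $V(x, y_1, y_2)$ accepts. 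For completeness, when $x \in \ayes$ the correct advice finds a valid $y_2$ for every $y_1$ and $V$ then accepts with certainty; for soundness, when $x \in \ano$ the bad $y_1$ has $\Pr[V \text{ accepts}] \leq 1/3$ for \emph{every} $y_2$, so the accept probability is at most $1/3$ regardless of the (possibly adversarial) advice supplied. The step I expect to be the main obstacle is exactly this totality guarantee: ensuring, via perfect completeness and amplitude quantization, that the inner problem is a genuine decision problem on \emph{all} fixed $y_1$, since this is what licenses invoking the hypothesis underneath the outer universal quantifier.
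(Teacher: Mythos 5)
Your proposal is correct and follows essentially the same route as the paper's proof: reduce to $\QCPi_2 \subseteq \QCSi_2$, use perfect completeness plus gate-set quantization of acceptance probabilities to make the inner existential computation a total $\PrQCMA$ decision problem for \emph{every} fixed outer proof, invoke the hypothesis together with bit-by-bit self-reducibility to obtain a non-uniform witness-finder, and then swap quantifiers with soundness guaranteed by re-running the original verifier. Your prefix-language formulation of the search-to-decision step is just a slightly more explicit packaging of the paper's Lemma~\ref{lem:ckt-eq}.
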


\noindent To give this result context, we also show that $\PrQCMA \subseteq \NP^{\PP}$ (Lemma~\ref{lem:bnd}). However, whether ${\QCPi_2 = \QCSi_2}$ collapses \QCPH\ remains open due to the same ``$\exists\boldsymbol{\cdot}\BPP$ versus \MA\ phenomenon''.

\subsection{Related work}
\label{sec:rel}

As far as we are aware, Yamakami~\cite{Y02} was the first to consider a quantum version of $\PH$. His version differs from our setting in that it considers quantum Turing machines (we use quantum circuits) and quantum {inputs} (we use classical inputs, like QMA). The next work, by Gharibian and Kempe~\cite{GK12}, introduced and studied ${\rm cq}\mhyphen\sigmatwo$, defined as our $\QCSi_2$ except with a quantum universally quantified proof. \cite{GK12} showed completeness and hardness of approximation results for ${\rm cq}\mhyphen\sigmatwo$ for (roughly) the following problem: What is the smallest number of terms required in a given local Hamiltonian for it to have a frustrated ground space? More recently, Lockhart and Gonz\'{a}lez-Guill\'{e}n~\cite{LG17} considered a hierarchy (denoted $\QCPH'$ here) which \emph{a priori} appears identical to our \QCPH, but is apparently not so due to the ``$\exists\boldsymbol{\cdot}\BPP$ versus \MA\ phenomenon'', which we now  discuss briefly (see also Remark~\ref{rem:langvsprom}).

In this work, the ``$\exists\boldsymbol{\cdot}\BPP$ versus \MA\ phenomenon'', refers to the following discrepancy: Unlike with \MA,  \emph{all} proofs in an $\exists\boldsymbol{\cdot}\BPP$ system {must} be accepted with probability at least $2/3$ or at most $1/3$ (i.e. no proof is accepted with probability in the gap $(1/3,2/3)$). The quantum analogue of this phenomenon yields the open question: Is $\exists\boldsymbol{\cdot}\BQP$ {(which equals $\NP^{\BQP}$)} equal to \QCMA? For this reason, it is not clear whether \QCPH\ equals $\QCPH'$, where {$\QCPH'$ is defined recursively as}
$\QCSi_1' = \exists\boldsymbol{\cdot}\BQP$,  $\QCPi_1' = \forall\boldsymbol{\cdot}\BQP$, and
\begin{equation*}
\forall i\geq 1, \QCSi_i' = \exists\boldsymbol{\cdot}\QCPi_{i-1}'; \qquad  \QCPi_i' = \forall\boldsymbol{\cdot}\QCSi_{i-1}'.
\end{equation*}
Thus, in our work $\QCSi_1=\QCMA$, but in~\cite{LG17} $\QCSi'_1=\exists\boldsymbol{\cdot}\BQP$. The advantage of the latter definition is that one avoids the recursion problems discussed earlier --- e.g., fixing the first existential proof in $\QCSi_2'$ \emph{does} reduce the problem to a {$\QCPi_1'$} computation, unlike the case with $\QCSi_2$. Hence, recursive arguments from the context of \PH\ can be extended to show that, for instance, $\QCPH'$ collapses to $\QCSi_2'$ when $\QCSi_2' = \QCPi_2'$. On the other hand, the advantage of our definition of \QCPH\ is that it generalizes the natural quantum complexity class \QCMA. 

Let us also remark on Toda's theorem in the context of $\QCPH'$ (for clarity, Toda's theorem is not studied in~\cite{LG17}). The recursive definition of $\QCPH'$ allows one to obtain Toda's $\Po^\pp$ upper bound for $\QCPH'$ with a simple argument:
\begin{equation*}
\forall i, \; \QCSi_i' = \NP^{\NP^{\iddots^{\BQP}}} = \sigmai^\BQP \quad\Longrightarrow\quad \forall i, \QCSi_i' \subseteq \left(\Po^{\pp}\right)^{\BQP} = \Po^{\pp},
\end{equation*}
{where the first equality holds due to the recursive definition of $\QCSi_i'$ (but is not known to hold for our $\QCSi_i$), the implication arises by relativizing Toda's theorem, and the last equality holds as \BQP\ is low for \pp~\cite{FR99}}. In contrast, our Theorem~\ref{thm:QCTODA} yields $\QCPH\subseteq\Po^{\pp^{\pp}}$, raising the question: is $\QCPH' = \QCPH$?  A positive answer may help shed light on whether $\exists\boldsymbol{\cdot}\BQP$ equals $\QCMA$; we leave this for future work.

Finally, a quantum version of the Karp-Lipton theorem was covered by Aaronson and Drucker in~\cite{AD14} and further improved by Aaronson, Cojocaru, Gheorghiu, and Kashefi~\cite{ACGK17}, where the consequences of $\NP$-complete problems being solved by small quantum circuits with polynomial sized quantum advice were considered.
Their results differ from ours in that different hierarchies are studied, and in their use of quantum advice as opposed to our use of classical advice. Other Karp-Lipton style results for \PH\ involving classes beyond \NP\ show a collapse of \PH\ to \MA\ (usually) if either $\pp$~\cite{LFKN92, Vin05}, $\Po^{\cp}$ or $\PSPACE$~\cite{KL80} has $\Ppoly$ circuits.

\subsection{Recent observations and open questions}\label{sscn:open}
\subsubsection{Recent observations}
Upon release of the current article, Sanketh Menda, Harumichi Nishimura, and John Watrous (whom we thank) made the observation that $\QSi_2=\QRGone$, where $\QRGone$ captures one-round zero-sum quantum games~\cite{QRGone}. Briefly, this equivalence follows immediately since $\QRGone$ from \cite{QRGone} can be defined as in Equation~(\ref{eq:alpha1}), but restricted to just the first two proofs, $\rho_1$ and $\rho_2$.\footnote{The use of \emph{mixed} state proofs in our definition of $\QSi_2$ and in Equation~\ref{eq:alpha1} is crucial for this equivalence.} This insight has led to some remarkable immediate corollaries regarding $\QSi_2$ and $\QPi_2$, which we now discuss.

The starting point for the discussion is that, as done for $\QRGone$ in~\cite{QRGone}, one can apply an extension of von Neumann's Min-Max Theorem~\cite{vN28} to conclude in Equation~(\ref{eq:alpha1}) that
 \[
    \max_{\rho_1}\min_{\rho_2}\inner{C}{\rho_1 \otimes \rho_2}=\min_{\rho_2}\max_{\rho_1}\inner{C}{\rho_1 \otimes \rho_2}.
 \]
In other words, $\QSi_2=\QPi_2$. In addition,~\cite{QRGone} shows $\QRGone\subseteq\PSPACE$. We thus immediately have the following.
\begin{corollary}\label{cor:wow}
    $\QSi_2=\QPi_2=\QRGone\subseteq\PSPACE$.
\end{corollary}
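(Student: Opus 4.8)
The plan is to assemble Corollary~\ref{cor:wow} from the two external inputs quoted in the paragraph preceding it, namely the min-max equivalence and the $\PSPACE$ bound of~\cite{QRGone}, together with the identification $\QSi_2 = \QRGone$ attributed to Menda, Nishimura, and Watrous. Concretely, I would establish the three asserted (in)equalities as a chain: $\QSi_2 = \QRGone$, then $\QRGone = \QPi_2$ (equivalently $\QSi_2 = \QPi_2$) via von Neumann's Min-Max Theorem, and finally $\QRGone \subseteq \PSPACE$. Each link is already available in usable form, so the ``proof'' is really a careful bookkeeping exercise confirming that the definitions line up.

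\emph{First I would pin down the equivalence $\QSi_2 = \QRGone$.} Recall from Equation~(\ref{eq:alpha1}) that, for even $i$, deciding a $\QSi_i$ promise problem reduces to computing $\alpha = \max_{\rho_1}\min_{\rho_2}\cdots\min_{\rho_i}\inner{C}{\rho_1\otimes\cdots\otimes\rho_i}$, where $C$ is the \textsf{ACCEPT} POVM operator of the verifier. Specializing to $i=2$ gives exactly $\max_{\rho_1}\min_{\rho_2}\inner{C}{\rho_1\otimes\rho_2}$, which is precisely the quantity a one-round zero-sum quantum game $\QRGone$ optimizes. I would emphasize here, as the footnote in the excerpt does, that the use of \emph{mixed}-state proofs $\rho_1,\rho_2$ is what makes the two optimization problems literally the same object; with pure-state proofs the feasible sets would differ and the identification could fail. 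Matching the completeness/soundness thresholds $(2/3,1/3)$ of $\QSi_2$ with the acceptance thresholds defining $\QRGone$ then yields the promise-problem-level identity $\QSi_2 = \QRGone$.

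\emph{Next I would derive $\QSi_2 = \QPi_2$.} The dual class $\QPi_2$ optimizes $\min_{\rho_2}\max_{\rho_1}\inner{C}{\rho_1\otimes\rho_2}$, i.e. the same bilinear form with the order of the quantifiers swapped. The map $(\rho_1,\rho_2)\mapsto \inner{C}{\rho_1\otimes\rho_2}$ is linear (hence concave) in $\rho_1$ for fixed $\rho_2$ and linear (hence convex) in $\rho_2$ for fixed $\rho_1$, and the density-operator sets $\density{\X}$ and $\density{\Y}$ are compact and convex. These are exactly the hypotheses of an extension of von Neumann's Min-Max Theorem~\cite{vN28}, so $\max_{\rho_1}\min_{\rho_2}\inner{C}{\rho_1\otimes\rho_2} = \min_{\rho_2}\max_{\rho_1}\inner{C}{\rho_1\otimes\rho_2}$, and the two optimal values agree. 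Since both classes are decided by thresholding the \emph{same} optimal value against $(2/3,1/3)$, this value-equality upgrades to the class-equality $\QSi_2 = \QPi_2$. Finally, quoting $\QRGone \subseteq \PSPACE$ from~\cite{QRGone} closes the chain, giving $\QSi_2 = \QPi_2 = \QRGone \subseteq \PSPACE$.

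I do not anticipate a genuine mathematical obstacle, since all three ingredients are imported wholesale; the only point requiring real care is the \textbf{first} step, verifying that the $\QSi_2$ optimization of Equation~(\ref{eq:alpha1}) and the $\QRGone$ game value coincide as optimization problems rather than merely in spirit. This amounts to checking that the verifier circuit $V$ of a $\QSi_2$ protocol induces precisely the payoff operator $C$ of a $\QRGone$ game and vice versa, and that the promise gaps are transported faithfully under the identification. Once that definitional matching is made explicit, the Min-Max step and the $\PSPACE$ containment follow immediately, and the corollary is complete.
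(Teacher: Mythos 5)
Your proposal matches the paper's own argument essentially verbatim: the paper likewise obtains Corollary~\ref{cor:wow} by identifying $\QSi_2$ with $\QRGone$ through Equation~(\ref{eq:alpha1}) restricted to two mixed-state proofs, invoking an extension of von Neumann's Min-Max Theorem to swap the quantifiers and conclude $\QSi_2=\QPi_2$, and importing $\QRGone\subseteq\PSPACE$ from~\cite{QRGone}. Your added emphasis on verifying that the two optimization problems coincide as stated (and that mixed-state proofs are essential for this) is exactly the point the paper flags in its footnote, so the proof is correct and follows the same route.
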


\noindent \emph{Relation to current work.} For the standard completeness-soundness gap regime ($s = 2/3, c = 1/3$), Corollary~\ref{cor:wow} improves upon our result of Theorem~\ref{thm:infEXP} (which recall showed $\QSi_2,\QPi_2\subseteq \EXP$). However, Theorem~\ref{thm:infEXP} and its proof, are still useful for the results in this paper: First, Theorem~\ref{thm:infEXP} works in the very small completeness-soundness gap regime. Second, the proof technique of Theorem~\ref{thm:infEXP} allows us to prove Theorem~\ref{IntroThm:NEXP} (e.g. $\QSi_3\subseteq \NEXP$), which also holds in the very small completeness-soundness gap regime.\\
\\

\noindent\emph{Further important implications of Corollary~\ref{cor:wow}.}
\begin{enumerate}
    \item  (Showing a ``collapse theorem'' for QPH will be ``hard'') One of the most frequently used results about $\PH$ is that if $\sigmatwo=\pit$, then $\PH$ collapses to $\sigmatwo$. Does a quantum analogue of this statement hold for \QPH? Corollary~\ref{cor:wow} yields the following.
        \begin{corollary}
            If $\QSi_2=\QPi_2$ implies $\QPH=\QSi_2$, then $\QMAt\subseteq \PSPACE$.
        \end{corollary}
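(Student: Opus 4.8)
The plan is to observe that the antecedent of the hypothesized collapse implication is in fact \emph{unconditionally true}, so that the implication can be discharged by modus ponens and the conclusion then follows from a one-line chain of inclusions.

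First, I would invoke Corollary~\ref{cor:wow}, which establishes unconditionally that $\QSi_2=\QPi_2$, and moreover that $\QSi_2=\QPi_2=\QRGone\subseteq\PSPACE$. In particular, the hypothesis ``$\QSi_2=\QPi_2$'' appearing in the assumed meta-implication holds as a theorem, with no further assumptions. Next, assuming the meta-statement of the corollary, namely that ``$\QSi_2=\QPi_2$ implies $\QPH=\QSi_2$'', I would apply modus ponens to the previous step to conclude $\QPH=\QSi_2$. Finally, I would chain the relevant containments: from Theorem~\ref{IntroThm:NEXP} (or simply by ignoring the second proof) we have $\QMAt\subseteq\QSi_3$, and trivially $\QSi_3\subseteq\QPH$ since $\QPH=\bigcup_i\QSi_i$. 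Substituting $\QPH=\QSi_2$ and then applying $\QSi_2\subseteq\PSPACE$ from Corollary~\ref{cor:wow} yields $\QMAt\subseteq\QSi_3\subseteq\QPH=\QSi_2\subseteq\PSPACE$, as desired.

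As for the main obstacle: there is essentially none in the formal derivation — the entire content of the corollary lies in its \emph{interpretation} rather than its proof. The point is that $\QSi_2=\QPi_2$ is \emph{already known} (Corollary~\ref{cor:wow}), so any proof of the natural $\PH$-style collapse principle for $\QPH$ (the direct analogue of ``$\sigmatwo=\pit$ collapses $\PH$'') would \emph{automatically} place $\QMAt$ inside $\PSPACE$, resolving a longstanding open problem, since the best unconditional bound remains $\QMAt\subseteq\NEXP$. Thus the corollary should be read as a ``hardness-of-proving'' statement, and the only genuine step is recognizing that the premise of the assumed implication is not merely satisfiable but in fact satisfied, which collapses the claim to the short inclusion chain above.
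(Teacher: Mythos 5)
Your proposal is correct and matches the paper's own argument exactly: the paper likewise notes that $\QSi_2=\QPi_2\subseteq\PSPACE$ holds unconditionally by Corollary~\ref{cor:wow}, discharges the assumed implication by modus ponens, and concludes via $\QMAt\subseteq\QSi_3\subseteq\QPH=\QSi_2\subseteq\PSPACE$. Your framing of the corollary as a ``hardness-of-proving'' statement is also precisely the interpretation the paper intends.
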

        This follows immediately since recall $\QMAt\subseteq\QSi_3\subseteq\QPH$. Thus, proving such a ``collapse theorem'' for \QPH~ would require a breakthrough regarding the complexity characterization of $\QMAt$, which is believed to be challenging.

    \item (Separation between the second levels of PH and QPH) Corollary~\ref{cor:wow} also yields a separation of PH and QPH in the following sense, assuming the standard conjecture that PH is infinite.
        \begin{corollary}
            If $\QSi_2=\Sigma_2$ and $\QPi_2=\Pi_2$, then $\PH$ collapses to $\Sigma_2$.
        \end{corollary}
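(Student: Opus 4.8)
The plan is to reduce this statement to the classical collapse theorem for $\PH$ by chaining the two hypothesized equalities together with the \emph{unconditional} equality $\QSi_2 = \QPi_2$ already established in Corollary~\ref{cor:wow}. Recall that Corollary~\ref{cor:wow} gives $\QSi_2 = \QPi_2$ with no assumptions, via the extension of von Neumann's Min-Max Theorem applied to Equation~(\ref{eq:alpha1}) restricted to the first two proofs. Once this is in hand, the corollary is essentially immediate.

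First I would invoke Corollary~\ref{cor:wow} to record $\QSi_2 = \QPi_2$. Substituting the two hypotheses $\QSi_2 = \Sigma_2$ and $\QPi_2 = \Pi_2$ into this identity yields the chain
\[
\Sigma_2 = \QSi_2 = \QPi_2 = \Pi_2,
\]
so that $\Sigma_2 = \Pi_2$. I would then apply the standard collapse theorem for the polynomial hierarchy, namely that $\Sigma_2 = \Pi_2$ implies $\PH = \Sigma_2$ (the $i=2$ instance of the general fact that $\sigmai = \pii$ forces $\PH$ to collapse to its $i$th level), which delivers exactly the claimed collapse of $\PH$ to $\Sigma_2$.

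The one point requiring care — more a bookkeeping matter than a genuine obstacle — is the distinction between promise classes and language classes. The classes $\QSi_2$ and $\QPi_2$ are defined as classes of \emph{promise} problems, so the hypotheses $\QSi_2 = \Sigma_2$ and $\QPi_2 = \Pi_2$ should be read with $\Sigma_2$ and $\Pi_2$ taken in their promise-problem formulations. Since every language is in particular a promise problem (with empty ``don't-care'' region), the equality $\Sigma_2 = \Pi_2$ of promise classes immediately restricts to the same equality on languages, which is the form in which the classical collapse theorem is usually stated; the collapse of $\PH$ to $\Sigma_2$ then follows. I expect no further difficulty, since the argument is purely a transitive combination of equalities followed by a quotation of the classical result.
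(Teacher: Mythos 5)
Your proposal is correct and matches the paper's own argument exactly: combine the unconditional equality $\QSi_2=\QPi_2$ from Corollary~\ref{cor:wow} with the two hypotheses to get $\Sigma_2=\Pi_2$, then invoke the classical collapse theorem. The extra remark on promise versus language classes is a sensible clarification but not needed for the paper's one-line derivation.
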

        This follows immediately since, as mentioned above, if $\sigmatwo=\pit$, then $\PH$ collapses to $\sigmatwo$. Thus, it is highly likely that either $\QSi_2\neq \Sigma_2$ or $\QPi_2\neq \Pi_2$ (or both).
\end{enumerate}

\subsubsection{Open questions}

{As far as general upper bounds on \QPH\ go, the currently best upper bound remains the naive one: The exponential-time analogue of PH, by which we mean constant-height towers of form $\NEXP^{\NEXP^{\NEXP\cdots}}$ (i.e. use each copy of NEXP to ``guess'' the next exponential size quantum proof, roughly speaking, just as in the proof that PH equals constant-height towers of NP oracles). An open question is to find a better upper bound on $\QPH$; we believe the naive bound to be loose.}

{One can also ask about the relationship between our $\QCPH$ and $\QPH$ classes and constant-height towers of the form $\QCMA^{\QCMA^{\QCMA\cdots}}$ (a ``QCMA-hierarchy'') and $\QMA^{\QMA^{\QMA\cdots}}$ (a ``QMA-hierarchy''), respectively. In this work, we have not studied the QCMA- and QMA-hierarchies, as they involve \emph{quantum} machines making oracle queries, and this in itself would presumably need to be correctly defined. For example, are superposition queries to the oracles allowed? Are the queries in-place? We believe this is an interesting avenue for future work. What we do observe here is that, since $\QMA \subseteq \PP$~\cite{KW00,Vy03,MW05,GY16}, if an appropriately defined QMA-hierarchy equals QPH, then it would presumably put QMA(2) in the Counting Hierarchy (CH) (and hence in PSPACE), which would again require a breakthrough in our understanding of unentangled quantum proofs (i.e. QMA(2)).}

{Finally, determining where in the complexity zoo $\QMA(2)$ belongs remains an important open question. Assuming alternating quantifiers \emph{do} add proving power to quantum proofs (the analogous assumption for classical proofs is widely believed), our work shows $\QMA(2)$ is strictly contained in NEXP. Can this statement be strengthened?}

\paragraph{Organization:}
We begin in Section~\ref{sec:preliminaries} by formally introducing relevant complexity classes. In Section~\ref{sec:toda} we show a quantum-classical analogue of Toda's theorem. Section~\ref{scn:fullyquantum} gives upper bounds on levels of \QPH, and Section~\ref{sec:karp-lipton} shows a Karp-Lipton-type theorem.

\section{{Definitions, preliminaries, and basic properties}}
\label{sec:preliminaries}

We begin by recalling the definition of uniformly-generated families of quantum circuits.

\begin{definition}[Polynomial-time uniform family of quantum circuits]
	A family of quantum circuits $\{ V_n \}_{n \in \natural}$ is said to be uniformly generated in polynomial time if there exists a polynomially bounded function $t:\natural\mapsto\natural$ and a deterministic Turing machine $M$ acting as follows. For every $n$-bit input $x$, $M$ outputs in time $t(n)$ a description of a quantum circuit $V_n$ (consisting of $1$-and $2$-qubit gates) that takes the all-zeros state as ancilla and outputs a single qubit. We say $V_n$ \emph{accepts} if measuring its output qubit in the computational basis yields $1$.
\end{definition}

Throughout this paper, we study \emph{promise problems}. A promise problem is a pair $A=(\ayes,\ano)$ such that $\ayes, \ano\subseteq\set{0,1}^\ast$, $\ayes \cup \ano \subset \set{0, 1}^*$ and $\ayes \cap \ano = \emptyset$. We now formally define each level of our quantum-classical polynomial hierarchy below.

\begin{definition}[$\QCSi_i$]\label{def:QCSigmam}
	Let $A=(\ayes,\ano)$ be a promise problem. We say that $A$ is in $\QCSi_i(c, s)$ for polynomial-time computable functions $c, s: \natural \mapsto [0, 1]$ if there exists a polynomially bounded function $p:\natural\mapsto\natural$ and a polynomial-time uniform family of quantum circuits $\{V_n\}_{n \in \natural}$ such that for every $n$-bit input $x$, $V_n$ takes in classical proofs ${y_1}\in \set{0,1}^{p(n)}, \ldots, {y_i}\in \set{0,1}^{p(n)}$ and outputs a single qubit, {such that:}
	\begin{itemize}
		\item Completeness: $x\in \ayes$ $\Rightarrow$ $\exists y_1 \forall y_2 \ldots Q_i y_i$ s.t. $\operatorname{Prob}[V_n \text{ accepts } (y_1, \ldots, y_i)] \geq c$.
		\item Soundness: $x\in \ano$ $\Rightarrow$ $\forall y_1 \exists y_2 \ldots \overline{Q}_i y_i$ s.t. $\operatorname{Prob}[V_n \text{ accepts } (y_1, \ldots, y_i)] \leq s$.
	\end{itemize}
	Here, $Q_i$ equals $\exists$ when $m$ is odd and equals $\forall$ otherwise and $\overline{Q}_i$ is the complementary quantifier to $Q_i$.
	\begin{equation}
	\text{Define } \; \QCSi_i := \bigcup_{{ c - s \in \Omega(1/\poly(n))}} \QCSi_i(c, s).
	\end{equation}
\end{definition}

Notice that the first level of this hierarchy corresponds to $\QCMA$. The complement of the $i^{th}$ level of the hierarchy, $\QCSi_i$, is the class $\QCPi_i$ defined below.

\begin{definition}[$\QCPi_i$]\label{def:QCPim}
	Let $A=(\ayes,\ano)$ be a promise problem. We say that $A \in \QCPi_i(c, s)$ for polynomial-time computable functions $c, s: \natural \mapsto [0, 1]$ if there exists a polynomially bounded function $p:\natural\mapsto\natural$ and a polynomial-time uniform family of quantum circuits $\{V_n\}_{n \in \natural}$ such that for every $n$-bit input $x$, $V_n$ takes in classical proofs ${y_1}\in \set{0,1}^{p(n)}, \ldots, {y_i}\in \set{0,1}^{p(n)}$ and outputs a single qubit, {such that:}
	\begin{itemize}
		\item Completeness: $x\in \ayes$ $\Rightarrow$ $\forall y_1 \exists y_2 \ldots Q_i y_i$ s.t. $\operatorname{Prob}[V_n \text{ accepts } (y_1, \ldots, y_i)] \geq c$.
		\item Soundness: $x\in \ano$ $\Rightarrow$ $\exists y_1 \forall y_2 \ldots \overline{Q}_i y_i$ s.t. s.t. $\operatorname{Prob}[V_n \text{ accepts } (y_1, \ldots, y_i)] \leq s$.
	\end{itemize}
	Here, $Q_i$ equals $\forall$ when $m$ is odd and equals $\exists$ otherwise and $\overline{Q}_i$ is the complementary quantifier to $Q_i$.
	\begin{equation}
	\text{Define } \; \QCPi_i := \bigcup_{{c - s \in \Omega(1/\poly(n))}} \QCPi_i(c, s).
	\end{equation}
\end{definition}

Now the corresponding quantum-classical polynomial hierarchy is defined as below.

\begin{definition}[Quantum-classical polynomial hierarchy]\label{def:QCPH}
	\[ \QCPH = \bigcup_{m \in \mathbb{N}} \; \QCSi_i = \bigcup_{m \in \mathbb{N}} \; \QCPi_i. \]
\end{definition}

\noindent A few remarks are in order. First, by encoding a polynomial time predicate into a quantum verification circuit, one can see that (where $\sigmai$ and $\pii$ refer to the $i^{th}$ level of the corresponding classical polynomial hierarchy)
\begin{align*}
\forall i, \; \sigmai \subseteq \QCSi_i, \quad \pii \subseteq \QCPi_i \quad \text{ and } \quad \PH \subseteq \QCPH.
\end{align*}

\noindent Second, a natural question is to what extent the completeness and soundness parameters of $\QCSi_i$ and $\QCPi_i$ can be improved. Towards achieving one-sided error, we apply known techniques to prove that ``every other level'' (see Theorem~\ref{res:err} for a formal statement) has \emph{perfect completeness} (i.e. we can improve the completeness parameter to $c = 1$), in addition to every level having inverse exponentially small soundness. This is shown using techniques from the proof of the following theorem.

\begin{theorem}[Jordan, Kobayashi, Nagaj, Nishimura~\cite{JKNN12}] \label{QCMAPC}
	$\class{QCMA}$ has perfect completeness i.e.
	\begin{equation}
		\QCMA = \QCMA(1, 1 - 1/\poly(n)).
	\end{equation}
\end{theorem}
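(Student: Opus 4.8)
The plan is to start from an arbitrary protocol witnessing $A \in \QCMA(c,s)$ with gap $c - s \geq 1/\poly(n)$ and transform it, in two stages, into a protocol with completeness exactly $1$ and soundness at most $1 - 1/\poly(n)$. The first stage is \emph{gap amplification}, and here the crucial structural feature of $\QCMA$ is that the witness $w$ is \emph{classical}. Because $w$ carries no quantum state, the verifier may feed the \emph{same} $w$ into many independent copies of $V$ without any entanglement arising across copies; running $V$ in parallel $t = \poly(n)$ times on $(x,w)$ and taking a threshold (majority) decision, then applying a Chernoff bound, drives completeness up to $1 - 2^{-r(n)}$ and soundness down to $2^{-r(n)}$ for any polynomial $r$ of my choosing. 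I would fix $r$ to be large relative to the polynomials appearing in the second stage. (This reusability is exactly the step that fails for quantum witnesses, which is why perfect completeness for $\QMA$ is not known by this route.)

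The second stage converts near-perfect completeness into \emph{exact} completeness via exact amplitude amplification. Recall that $k$ rounds of amplitude amplification applied to a subroutine whose accepting amplitude is $\sin\theta$ yield accepting probability $\sin^2\!\big((2k+1)\theta\big)$, so if the pre-amplification acceptance probability is \emph{exactly} $\sin^2\!\big(\pi/(2(2k+1))\big)$, then the output is accepted with probability exactly $1$. Amplitude amplification is the right tool precisely because the accepting and rejecting branches of $V\ket{0}\ket{w}$ are entangled with different ``garbage'' states, so one cannot simply rotate the output qubit to repair completeness; the reflections used in amplification handle the garbage automatically. The obstacle is that the verifier must hit the Grover target value exactly, whereas after amplification it only knows the honest acceptance probability $p_w$ lies in $[1-2^{-r(n)},1]$. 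This is where classicality pays off a second time: for a \emph{fixed} classical $w$ and a fixed gate set, $p_w$ is a single fixed number that the all-powerful honest prover can compute and commit to. I would therefore have the prover append to $w$ classical data specifying a ``damping'' parameter for an auxiliary coin register that the verifier mixes in before amplifying, chosen so that the effective pre-amplification probability equals the target $\sin^2\!\big(\pi/(2(2k+1))\big)$ exactly; $k = \poly(n)$ rounds of amplification then rotate the YES-case state exactly into the accepting subspace.

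The main obstacle, and the part needing the most care, is making this exact-adjustment gadget both \emph{realizable} and \emph{sound}. On the realizability side, one must argue that the required rotation/damping can be implemented exactly in the chosen gate set (e.g.\ by restricting to a gate set over which acceptance amplitudes are algebraic and the target angle is exactly achievable, or by a Solovay--Kitaev-style argument combined with choosing $k$ so the target is representable), so that completeness is genuinely $1$ rather than merely $1 - 2^{-\poly}$. On the soundness side, I must verify that no dishonest choice of $(w,\text{damping data})$ can push acceptance above $1 - 1/\poly(n)$ on NO instances: here the true pre-amplification probability is at most $\delta = 2^{-r(n)}$ for \emph{every} $w$, so after $k$ rounds the accepting probability is at most roughly $(2k+1)^2\delta$, which stays exponentially small provided $r \gg \log k$, and the gadget must be designed so that the prover-supplied data enters only as a sub-stochastic damping and hence can only \emph{decrease}, never increase, the accepting amplitude beyond this bound. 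Balancing the three polynomials --- the amplification parameter $r$, the number of rounds $k$, and the precision of the damping data --- so that the honest case is exactly $1$ while the dishonest case remains well below $1$ is the delicate bookkeeping, but it becomes routine once the gadget and gate set are fixed.
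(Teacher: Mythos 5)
Your proposal matches the paper's (i.e., Jordan--Kobayashi--Nagaj--Nishimura's) route essentially step for step: exploit the classical witness and a fixed gate set so that the honest acceptance probability is an exactly known rational $k/2^{\ell}$ which the prover can transmit, have the verifier renormalize to a fixed target value, and finish with exact amplitude amplification (Watrous's quantum rewinding), with soundness preserved because the prover-supplied adjustment is threshold-checked and can only lower the NO-case probability. The one realizability worry you flag --- hitting the Grover target exactly --- is resolved in the paper's version by targeting $1/2$ via an \emph{additive} mixture (branch (a) runs $V$, branch (b) accepts with the dyadic probability $1-k/2^{\ell}$), so every required bias is dyadic and a single rewinding round at $p=1/2$ yields acceptance probability exactly $4\cdot\tfrac12\cdot\tfrac12=1$, whereas your multiplicative damping to $\sin^2(\pi/(2(2k+1)))$ would generally require non-dyadic coin biases.
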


\noindent The proof of the above result {starts} by choosing a suitable gate-set for the $\QCMA$ verifier, i.e., Hadamard, Toffoli and CNOT gates~\cite{S03, A03}. This ensures that the acceptance probability for any proof $y$ can be expressed as $k/2^{\ell(|x|)}$ for an integer $k \in \{0, \ldots, 2^{\ell(|x|)} \}$ and a polynomially bounded integer function $\ell(|x|)$.
The verifier then asks the prover to send $k$ (expressed as a polynomial-size bit string) along with the classical proof. When $k$ is above a certain threshold, the verifier  chooses one of two tests with equal {probability}: (a) run the original verification circuit or (b) trivially accept with probability $> k/2^{\ell(|x|)}$.
This allows for the completeness to be reduced to exactly $1/2$ while the soundness is strictly bounded below $1/2$. Then by using the quantum rewinding technique~\cite{W09_3}, {$c$ can be boosted} to exactly $1$. The {ideas in this proof have been} adapted to several similar scenarios (see, e.g., \cite{KlGN12, GKS16}). We state our result below.

\begin{theorem}\label{res:err}
	For polynomially bounded functions $r, q: \natural \mapsto \natural$ and polynomial-time computable functions $c, s: \natural \mapsto [0, 1]$ such that for any $n$-bit input $c(n) - s(n) \geq 1/q(n)$, we have:
	\[ \begin{array}{rlllll}
	\textup{ For }  i \textup{ even:} &\QCSi_i(c, s) & = & \QCSi_i(1- 2^{-r}, 2^{-r}),  \\
	&\QCPi_i(c, s) & = & \QCPi_i(1, 2^{-r}), \\
	\textup{ for }  i \textup{ odd:} &\QCSi_i(c, s) & = & \QCSi_i(1, 2^{-r}), \\
	&\QCPi_i(c, s) & = & \QCPi_i(1 - 2^{-r}, 2^{-r}).
	\end{array} \]
\end{theorem}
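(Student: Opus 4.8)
The plan is to combine two standard ingredients: gap amplification, which pushes soundness down to $2^{-r}$ and completeness up to $1-2^{-r}$ at \emph{every} level, and the Jordan--Kobayashi--Nagaj--Nishimura perfect-completeness construction (Theorem~\ref{QCMAPC}), which I would apply to the \emph{innermost} proof precisely when that proof is existentially quantified. The key structural observation is that in the completeness branch the innermost quantifier $Q_i$ equals $\exists$ exactly for $\QCSi_i$ with $i$ odd and for $\QCPi_i$ with $i$ even; these are exactly the two cases in which the statement claims perfect completeness, so the case split is dictated by ``is the last quantifier existential?''

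First I would amplify the gap. Because all proofs in $\QCPH$ are classical strings, they can be reused verbatim across many copies of the verification circuit, so ordinary parallel repetition with a majority threshold does not touch the quantifier prefix at all. Fixing the proofs witnessing the yes (resp.\ no) case, a Chernoff bound shows that running $m=\Theta(r\cdot q^2)$ copies of $V_n$ on those same proofs and accepting iff at least a $(c+s)/2$ fraction of copies accept yields completeness $\geq 1-2^{-r}$ and soundness $\leq 2^{-r}$ for the \emph{same} proofs. Since the quantifier prefix is untouched, this establishes the two cases with a $1-2^{-r}$ completeness bound outright, namely $\QCSi_i(c,s)=\QCSi_i(1-2^{-r},2^{-r})$ for $i$ even and $\QCPi_i(c,s)=\QCPi_i(1-2^{-r},2^{-r})$ for $i$ odd.

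For the remaining two cases, where the innermost quantifier is existential, I would appeal to the construction behind Theorem~\ref{QCMAPC}. Fixing the outer proofs $y_1,\dots,y_{i-1}$, the residual computation over the innermost proof $y_i$ is exactly a $\QCMA$ verification, so I apply the JKNN transformation: after choosing the Hadamard/Toffoli/CNOT gate set so acceptance probabilities are of the form $k/2^{\ell}$, I augment $y_i$ with a claimed numerator $k$, run the threshold test, and boost completeness to exactly $1$ via quantum rewinding. Crucially, this transformation is a fixed modification of the circuit that does not depend on the outer proofs, so it yields a single $\QCSi_i$ (resp.\ $\QCPi_i$) verifier taking the augmented innermost proof $(y_i,k)$. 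In the yes case the innermost existential prover supplies an honest $(y_i,k)$ along every universal branch reachable from the witnessing prefix, giving acceptance probability exactly $1$. The soundness produced by JKNN is only $1-1/\poly$, which I then drive to $2^{-r}$ by taking the AND of $\Theta(r\cdot\poly)$ parallel copies (accept iff \emph{all} copies accept); since completeness is now exactly $1$, this leaves perfect completeness intact while multiplying the no-case acceptance probability down to $(1-1/\poly)^{\Theta(r\cdot\poly)}\leq 2^{-r}$.

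I expect the main obstacle to be the soundness analysis of the perfect-completeness step in the presence of alternations. In the no case the innermost quantifier is \emph{universal}, so the adversary controls not only $y_i$ but also the claimed value $k$; I must verify that the JKNN threshold-plus-rewinding guarantee remains sound against a \emph{dishonest} $k$, and that this holds uniformly over the (existentially chosen, in the no branch) witnessing outer proofs $y_2,\dots,y_{i-1}$. The clean way is to fix those witnessing outer proofs from the original no-case promise, invoke JKNN soundness for \emph{every} pair $(y_i,k)$, and conclude that the transformed prefix $\forall y_1\,\exists y_2\,\cdots\,\forall (y_i,k)$ still rejects with probability $\leq 1-1/\poly$ before re-amplification. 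Checking that the rewinding step, which only rewrites the innermost verification, commutes with the surrounding quantifiers --- and that perfect completeness survives the final AND-amplification --- is the part requiring genuine care; the two Chernoff/Valiant-style bookkeeping computations are otherwise routine.
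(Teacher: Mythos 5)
Your proposal is correct and follows essentially the same route as the paper: standard parallel repetition (which leaves the quantifier prefix untouched because the classical proofs are reused verbatim across copies) for the error reduction, plus a white-box application of the JKNN construction to the innermost existential proof---augmented with the claimed acceptance-probability numerator $k$---for perfect completeness, with soundness verified against every pair $(y_i,k)$ under the no-case quantifier prefix. The only quibble is the phrase ``still rejects with probability $\leq 1-1/\poly$,'' which should read ``accepts''; the surrounding computation makes clear you intend the latter.
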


\begin{proof}[Proof Sketch.]
To achieve perfect completeness (i.e. $c = 1$), the idea is to append to the register of the last proof (which must be an existential quantifier for this to work) a classical register containing the acceptance probability of the verification circuit $C$. {Specifically, for level $i$, for any set of $i-1$ proofs $y_1, \ldots, y_{i-1}$, the final (existential) proof $y_i$ is augmented with $k$, such that $\Pr[C({y_1, \ldots, y_i}) = 1] = k/2^{\ell(|x|)}$ (that this probability is rational is due solely to the use of an appropriate universal gate set, as done for Theorem~\ref{QCMAPC}, and is independent of how each $y_i$ for $i\in\set{1,\ldots, i-1}$ is quantified).} Then the proof of Theorem~\ref{QCMAPC} in~\cite{JKNN12} proves the result.
	The error reduction follows from standard arguments.
\end{proof}

\noindent Notice that by explicitly emulating the technique from \cite{JKNN12} we are using it as a \emph{white-box} and not a black box reduction. Hence, the issues discussed in Section~\ref{sec:intro} the arise from ``fixing proofs'' does not apply here.  We leave as an open problem the question of obtaining perfect completeness for the remaining levels of the hierarchy. This seems like a considerably harder
problem, with current proof techniques requiring the last quantifier to be existential.

Now, we move on to defining the fully quantum hierarchy.

\begin{definition}[$\QSi_i$]\label{def:QSigmam}
	A promise problem $A=(\ayes,\ano)$ is in $\QSi_i(c, s)$ for polynomial-time computable functions $c, s: \natural \mapsto [0, 1]$ if there exists a polynomially bounded function $p:\natural\mapsto\natural$ and a polynomial-time uniform family of quantum circuits $\{V_n\}_{n \in \natural}$ such that for every $n$-bit input $x$, $V_n$ takes $p(n)$-qubit density operators $\rho_1, \ldots, \rho_i$ as quantum proofs and outputs a single {qubit, then:}
	\begin{itemize}
		\item Completeness: If $x\in\ayes$, then $\exists \rho_1 \forall \rho_2 \ldots Q_i \rho_i$ such that $V_n$ accepts $({\rho_1 \otimes \rho_2 \otimes \cdots \otimes  \rho_i})$ with probability $\geq c$.
		\item Soundness: If $x\in\ano$, then $\forall \rho_1 \exists \rho_2 \ldots \overline{Q}_i \rho_i$ such that $V_n$ accepts $({\rho_1 \otimes \rho_2 \otimes \cdots \otimes  \rho_i})$ with probability $\leq s$.
	\end{itemize}
	Here, $Q_i$ equals $\forall$ when $m$ is even and equals $\exists$ otherwise, and $\overline{Q}_i$ is the complementary quantifier to $Q_i$.
	\begin{equation}
	\text{Define } \; \QSi_i = \bigcup_{\substack{c-s \in \Omega(1/\poly(n))}} \QSi_i(c, s).
	\end{equation}
\end{definition}

\noindent A few comments are in order: (1) {In contrast to the standard quantum circuit model, here we allow \emph{mixed} states as inputs to $V_n$; this can be formally modelled via the mixed state framework of~\cite{AKN98}.} (2) Clearly, $\QSi_1 = \QMA$. (3) We recover the definition of $\QMA(k)$ by ignoring the $\rho_i$ proofs, for $i$ even, in the definition of $\class{Q$\Sigma$}_{2k}$.

\begin{definition}[$\QPi_i$]\label{def:QPim}
	A promise problem $A=(\ayes,\ano)$ is in $\QPi_i(c, s)$ for polynomial-time computable functions $c, s: \natural \mapsto [0, 1]$ if there exists a polynomially bounded function $p:\natural\mapsto\natural$ and a polynomial-time uniform family of quantum circuits $\{V_n\}_{n \in \natural}$ such that for every $n$-bit input $x$, $V_n$ takes $p(n)$-qubit density operators $\rho_1, \ldots, \rho_i$ as quantum proofs and outputs a single {qubit, then:}
	\begin{itemize}
		\item Completeness: If $x\in\ayes$, then $\forall \rho_1 \exists \rho_2 \ldots Q_i \rho_i$ such that $V_n$ accepts $({\rho_1 \otimes \rho_2 \otimes \cdots \otimes  \rho_i})$ with probability $ \geq c$.
		\item Soundness: If $x\in\ano$, then $\exists \rho_1 \forall \rho_2 \ldots \overline{Q}_i \rho_i$ such that $V_n$ accepts $({\rho_1 \otimes \rho_2 \otimes \cdots \otimes  \rho_i})$ with probability $\leq s$.
	\end{itemize}
	Here, $Q_i$ equals $\exists$ when $m$ is even and equals $\forall$ otherwise, and $\overline{Q}_i$ is the complementary quantifier to $Q_i$.
	\begin{equation}
	\text{Define } \; \QPi_i = \bigcup_{c-s \in \Omega(1/\poly(n))} \QPi_i(c, s).
	\end{equation}
\end{definition}

The fully quantum polynomial hierarchy can now be defined as follows.

\begin{definition}[Quantum Polynomial-Hierarchy]\label{def:QPH}
	\[ \QPH = \bigcup_{m \in \mathbb{N}} \; \QSi_i = \bigcup_{m \in \mathbb{N}} \; \QPi_i. \]
\end{definition}

Finally, we recall the definition of $\pp$ as it repeatedly used up in various results throughout this work.

\begin{definition}[$\class{PP}$]
	\label{def:PP1}
	A language $\calL$ is in $\class{PP}$ if there exists a probabilistic polynomial time Turing machine $M$ such that  $x \in \calL \iff \Pr[M(x) = 1] > 1/2$.
\end{definition}

\section{A quantum-classical analogue of Toda's theorem}
\label{sec:toda}
In this section, we show an analogue of Toda's theorem to bound the power of $\QCPH$ (Theorem~\ref{thm:QCTODA}, Section~\ref{sscn:bounding}), and give evidence that the bound of Theorem~\ref{thm:QCTODA} is likely the best possible using our specific proof approach (Section~\ref{sscn:ppcomp}, Theorem~\ref{thm:ppcomplete}).

\subsection{\PrBQP}
\label{sec:prbqp}

Our proof of a ``quantum-classical Toda's theorem'' requires us to define the $\PrBQP$ class, which we do now.
\begin{definition}[$\PrBQP(c,s)$]\label{def:PBQP}
    A promise problem $A=(\ayes,\ano)$ is contained in $\PrBQP(c,s)$ for polynomial-time computable functions $c,s:\natural\mapsto[0,1]$ if there exists a polynomially bounded function $p:\natural\mapsto\natural$ such that $\forall \ell \in \natural, \ c(\ell) - s(\ell) \geq 2^{-p(\ell)}$, and a polynomial-time uniform family of quantum circuits $\{V_n\}_{n \in \natural}$ whose input is the all zeroes state and output is a single qubit. Furthermore, for an $n$-bit input $x$:
    \begin{itemize}
     \item Completeness: If $x \in \ayes$, then $V_n$ accepts with probability at least $c(n)$.
     \item Soundness: If $x \in \ano$, then ${V_n}$ accepts with probability at most $s(n)$.
    \end{itemize}
\end{definition}

\noindent In contrast, \BQP\ is defined such that the completeness and soundness parameters are $2/3$ and $1/3$, respectively (alternatively, the gap is least an inverse polynomial in $n$). We now give a useful observation and lemma.

\begin{obs}[Rational acceptance probabilities]\label{obs:gate}
    By fixing an appropriate universal gate set (e.g. Hadamard and Toffoli~\cite{A03}) for the description of $V_n$ in Definition~\ref{def:PBQP}, we assume henceforth, without loss of generality, that the acceptance probability of $V_n$ is a rational number that can be represented using at most $\poly(n)$ bits (this observation was used in the proof that $\QCMA$ has perfect completeness i.e., $c=1$~\cite{JKNN12}, {stated as Theorem~\ref{QCMAPC} here}).
\end{obs}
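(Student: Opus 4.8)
The plan is to decompose the statement into two parts. First, a \emph{structural} claim: any circuit built solely from Hadamard and Toffoli gates acting on $\ket{0\cdots 0}$ has acceptance probability of the dyadic form $k/2^{H}$, where $H$ is the number of Hadamard gates in the circuit and $k$ is a nonnegative integer. Second, a \emph{without-loss-of-generality} claim: restricting the verifier $V_n$ of Definition~\ref{def:PBQP} to this gate set costs only polynomial overhead while preserving the (possibly inverse-exponential) completeness–soundness gap. Granting both, the conclusion is immediate: since $V_n$ has $\poly(n)$ gates, $H\le\poly(n)$, and so the probability $k/2^{H}$ (with $0\le k\le 2^{H}$) admits a binary representation of $\le H+1=\poly(n)$ bits.

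For the structural claim, I would induct on the gates of $V_n$, tracking the amplitudes of the state in the computational basis. The inductive hypothesis is that after a prefix of the circuit containing $h$ Hadamard gates, every computational-basis amplitude is an integer multiple of $2^{-h/2}$. The base case holds since $\ket{0\cdots 0}$ has amplitudes in $\set{0,1}$. A Toffoli gate merely permutes basis states (its entries lie in $\set{0,1}$), so it preserves the hypothesis and leaves $h$ unchanged. A Hadamard gate on a single qubit maps a pair of amplitudes $a\cdot 2^{-h/2},\, b\cdot 2^{-h/2}$ with $a,b\in\mathbb{Z}$ to $(a+b)\cdot 2^{-(h+1)/2}$ and $(a-b)\cdot 2^{-(h+1)/2}$, preserving the hypothesis while incrementing the Hadamard count. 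Summing squared moduli over the accepting basis states then yields an acceptance probability $\bigl(\sum_x a_x^2\bigr)\cdot 2^{-H}=k/2^{H}$ with $k\in\mathbb{Z}_{\ge 0}$ and $k\le 2^{H}$.

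For the without-loss-of-generality claim, which I expect to be the main obstacle, I would invoke the universality of $\set{\text{Hadamard},\text{Toffoli}}$ for quantum computation~\cite{A03} (noting that although these gates generate only real orthogonal transformations, real amplitudes suffice for bounded-error quantum computation via the standard encoding of the imaginary component into an ancilla). Given an arbitrary $V_n$ with $T=\poly(n)$ gates, I would apply the Solovay–Kitaev theorem to replace each gate by a Hadamard–Toffoli subcircuit approximating it to operator-norm precision $\delta$, at a cost of $\poly(\log(1/\delta))$ gates per original gate. The delicate point is gap preservation in the inverse-exponential regime of $\PrBQP$: the total error is at most $T\delta$, so taking $\delta<2^{-p(n)}/(4T)$ keeps the new completeness at least $c-\text{gap}/4$ and the new soundness at most $s+\text{gap}/4$, leaving a strictly positive gap of size $\ge 2^{-(p(n)+1)}$, which is still inverse-exponential. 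Since $\log(1/\delta)=\poly(n)$ in this regime, the overhead remains polynomial, so the replacement circuit is a legitimate $\PrBQP$ verifier of the required form whose acceptance probability is, by the structural claim, a $\poly(n)$-bit dyadic rational. This completes the argument.
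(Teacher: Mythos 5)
Your proposal is correct and fills in, with full detail, exactly the argument the paper leaves implicit: the observation is stated without proof, deferring the dyadic-amplitude induction for Hadamard--Toffoli circuits to~\cite{A03,S03,JKNN12} and the gate-set conversion to the standard Solovay--Kitaev argument, which (as you rightly note) has only polynomial overhead because the $\PrBQP$ gap is inverse \emph{singly} exponential. Both your structural induction and your gap-preservation bookkeeping are sound, so this matches the paper's intended justification.
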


The following lemmas {help to characterize the complexity of $\PrBQP$}.

\begin{lemma}\label{l:containedPP}
    For any polynomial $p$, if $c - s \geq 1/2^{p(n)}$, then
    \begin{equation}
    	\PrBQP(c,s) \subseteq \pp
    \end{equation}
    when $c$ and $s$ are computable in polynomial time in the size of the input, $n$.
\end{lemma}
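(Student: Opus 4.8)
The plan is to show that the acceptance probability of the $\PrBQP(c,s)$ verifier can be computed \emph{exactly} as a $\class{GapP}$ function divided by a fixed power of two, and then to invoke the $\class{GapP}$-sign characterization of $\pp$. The conceptual heart of the argument is that, unlike the usual proof of $\BQP\subseteq\pp$ where one only separates probabilities $2/3$ and $1/3$, here the gap $c-s$ may be inverse-exponentially small; this causes no trouble precisely because the relevant quantity is computed on the nose rather than merely approximated.

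First I would invoke Observation~\ref{obs:gate} to fix the Hadamard/Toffoli gate set, so that every amplitude of $V_n$ on the all-zeros input is real and of the form $g(x,y)/\sqrt{2}^{\,h(n)}$, where $h(n)$ (the number of Hadamard gates) is polynomially bounded and $g(x,y)$ is the signed Feynman-path count from the all-zeros state to the basis state $y$. By the standard path-integral simulation underlying $\BQP\subseteq\pp$, $g$ is a $\class{GapP}$ function of $(x,y)$. Since $\class{GapP}$ is closed under multiplication and under exponential summation over a polynomial number of bits, the quantity
\[
G(x) := \sum_{y\,:\,V_n\text{ outputs }1} g(x,y)^2
\]
is a nonnegative $\class{GapP}$ function, and the acceptance probability satisfies $\Pr[V_n\text{ accepts}] = G(x)/2^{h(n)}$ exactly.

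Next I would use the polynomial-time computability of $c$ and $s$ together with the promise $c(n)-s(n)\ge 2^{-p(n)}$. Setting $t(n)=p(n)+2$, the dyadic rationals of denominator $2^{t(n)}$ are spaced $2^{-t(n)}=2^{-p(n)}/4$ apart, which is strictly less than the gap length, so the open interval $(s(n),c(n))$ contains a dyadic $\theta=a(n)/2^{t(n)}$; since $s$ is poly-time computable, such a $\theta$ can be found by rounding in polynomial time. Deciding membership then reduces to testing $\Pr[V_n\text{ accepts}]>\theta$, i.e. $G(x)/2^{h(n)} > a(n)/2^{t(n)}$, equivalently
\[
f(x) := 2^{t(n)}\,G(x) - a(n)\,2^{h(n)} > 0 .
\]
Both $2^{t(n)}$ and $a(n)\,2^{h(n)}$ are poly-time computable nonnegative integers bounded by $2^{\poly(n)}$, hence $\#\Po$ functions, so $f$ is a difference of products of $\class{GapP}$ and $\#\Po$ functions and therefore lies in $\class{GapP}$. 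By construction $x\in\ayes \Rightarrow f(x)>0$ and $x\in\ano \Rightarrow f(x)<0$, and the standard $\class{GapP}$-sign characterization of $\pp$ (extended to promise problems) yields $A\in\pp$.

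The step I expect to be the main obstacle — and the only place the inverse-exponential gap is relevant — is the exactness argument: one must verify that $G(x)$ equals $2^{h(n)}\Pr[V_n\text{ accepts}]$ precisely (not up to an additive error), so that testing the sign of $f(x)$ correctly separates $\ayes$ from $\ano$ even when $c-s$ is only $2^{-p(n)}$. This is exactly what the rational-amplitude Observation~\ref{obs:gate} buys us, and it is why a poly-size denominator $2^{h(n)}$, rather than an approximation scheme, is essential. The remaining steps — that $g$ is $\class{GapP}$ and the closure properties of $\class{GapP}$ — are routine and identical to the classical proof that $\BQP\subseteq\pp$.
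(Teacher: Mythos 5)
Your proof is correct, but it takes a genuinely different route from the paper's. The paper's argument is a short reduction to $\class{PQP}$: flip a coin with a suitably chosen rational bias $\gamma$ so that $(c,s)$ is mapped to $(c',s')$ with $c'>1/2$, $s'\leq 1/2$, and $c'-s'\in\Theta(c-s)$, observe that the resulting circuit is a $\class{PQP}$ computation, and invoke Watrous's theorem that $\class{PQP}=\pp$. You instead inline the path-integral/$\class{GapP}$ machinery that underlies that cited theorem: you write the acceptance probability exactly as $G(x)/2^{h(n)}$ for a nonnegative $\class{GapP}$ function $G$, place a dyadic threshold $\theta=a(n)/2^{t(n)}$ strictly inside the promise gap (possible since $c-s\geq 2^{-p(n)}$ and $c,s$ are poly-time computable), and test the sign of the $\class{GapP}$ function $2^{t(n)}G(x)-a(n)2^{h(n)}$. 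Both arguments ultimately rest on Observation~\ref{obs:gate}, but the paper's sketch hides the exact-arithmetic issue inside the black-boxed equality $\class{PQP}=\pp$ and inside the (unstated) verification that the biased-coin shift keeps the completeness strictly above $1/2$, whereas your version makes explicit exactly where and why exactness of the amplitudes is needed when the gap is only inverse exponential. The trade-off is the usual one: the paper's proof is shorter and modular, while yours is self-contained, gate-set-transparent, and arguably more robust, since choosing the threshold $\theta$ inside the gap rather than renormalizing to $1/2$ sidesteps any delicacy about strict versus non-strict inequalities at the midpoint.
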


\begin{proof}[Proof sketch.]
	Recall that the complexity class \class{PQP} is defined as \PP\ except with a uniform quantum circuit family $\set{Q_n}$ in place of a probabilistic Turing machine, i.e. for \textsf{YES} (\textsf{NO}) instances $Q_n$ accepts with probability $>1/2$ ($\leq 1/2$). Consider any $\PrBQP(c,s)$ circuit $V_x$ as in Definition~\ref{def:PBQP}. Then, by flipping a coin with appropriately chosen bias $\gamma\in \mathbb{Q}$ and choosing to either accept/reject with probability $\gamma$ and run $Q_n$ with probability $1-\gamma$, one may map $c,s$ to polynomial-time computable functions $c',s'$ such that
	\begin{equation}
	c'>1/2, \; s' \leq 1/2, \; \text{ and } \; c'-s'\in\Theta(c-s)
	\end{equation}
	(roughly, one loses about a factor of at most approximately $1/2$ in the gap). Thus,
	\begin{equation}
	\PrBQP(c,s)\subseteq \class{PQP}=\class{PP},
	\end{equation}
	where the last equality is shown in~\cite{W09_2}.
\end{proof}

\noindent As an aside, we remark the following.
\begin{corollary}\label{cor:equalsPP}
    Let $\mathbb{P}$ denote the set of all polynomials $p:\mathbb{N}\mapsto \mathbb{N}$. Then,
    \[
        \bigcup_{p\in \mathbb{P}}\PrBQP\left(\frac{1}{2}+\frac{1}{2^{p(n)}},\frac{1}{2}\right)=\PP.
    \]
\end{corollary}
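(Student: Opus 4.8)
The plan is to establish the two inclusions separately. The forward containment, $\bigcup_{p\in\mathbb{P}}\PrBQP\left(\frac12+\frac{1}{2^{p(n)}},\frac12\right)\subseteq\PP$, follows immediately from Lemma~\ref{l:containedPP}: for each fixed polynomial $p$, the completeness and soundness parameters $c=\frac12+\frac{1}{2^{p(n)}}$ and $s=\frac12$ are polynomial-time computable and satisfy $c-s=\frac{1}{2^{p(n)}}\geq\frac{1}{2^{p(n)}}$, so $\PrBQP\left(\frac12+\frac{1}{2^{p(n)}},\frac12\right)\subseteq\PP$. Taking the union over all $p\in\mathbb{P}$ preserves this containment.

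For the reverse containment, $\PP\subseteq\bigcup_{p\in\mathbb{P}}\PrBQP\left(\frac12+\frac{1}{2^{p(n)}},\frac12\right)$, I would start from a language $\calL\in\PP$ with an associated probabilistic polynomial-time Turing machine $M$ as in Definition~\ref{def:PP1}, and let $m(n)$ denote the (polynomially bounded) number of random bits $M$ uses on inputs of length $n$. The key observation is a dyadic-rational gap argument: since $M$ flips exactly $m(n)$ fair coins, its acceptance probability $\Pr[M(x)=1]$ is an integer multiple of $2^{-m(n)}$. Consequently, the strict inequality defining \PP\ sharpens automatically, namely $\Pr[M(x)=1]>\tfrac12$ is equivalent to $\Pr[M(x)=1]\geq\tfrac12+\tfrac{1}{2^{m(n)}}$, because $\tfrac12=2^{m(n)-1}/2^{m(n)}$ and the next larger multiple of $2^{-m(n)}$ is exactly $\tfrac12+\tfrac{1}{2^{m(n)}}$.

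It remains to realize $M$ as a polynomial-time uniform quantum circuit family $\{V_n\}$ whose acceptance probability equals that of $M$ on every input. This is the standard simulation of probabilistic computation by quantum circuits: generate the $m(n)$ random bits by applying Hadamard gates to fresh ancillae, implement the deterministic reversible part of $M$ using Toffoli and CNOT gates (the same Hadamard/Toffoli gate set flagged in Observation~\ref{obs:gate}), and measure the designated output qubit. Since the simulation is exact, $V_n$ accepts with probability precisely $\Pr[M(x)=1]$. Combining this with the dyadic-gap sharpening, inputs $x\in\calL$ are accepted with probability $\geq\tfrac12+\tfrac{1}{2^{m(n)}}$ and inputs $x\notin\calL$ with probability $\leq\tfrac12$, so $\calL\in\PrBQP\left(\frac12+\frac{1}{2^{m(n)}},\frac12\right)$ (viewing the language as a promise problem with $\ayes=\calL$, $\ano=\overline{\calL}$, which trivially satisfies the promise). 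Choosing $p=m$ places $\calL$ in the union, completing the inclusion.

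The argument is largely routine, and the only conceptually delicate step is the dyadic-rational gap observation, which is exactly what forces the completeness bound to be inverse-exponentially separated from $\tfrac12$ rather than merely strictly above it; this is what makes the completeness parameter $\tfrac12+2^{-p(n)}$ attainable and hence makes the union, rather than a single fixed gap, the correct object to equate with \PP. I would take care to note that the gate-set choice of Observation~\ref{obs:gate} is compatible with (indeed, subsumes) the reversible simulation of $M$, so no additional assumptions on acceptance probabilities are needed.
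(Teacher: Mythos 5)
Your proposal is correct and follows essentially the same route as the paper: the forward inclusion is Lemma~\ref{l:containedPP}, and the reverse inclusion is the dyadic-rational sharpening of the $\PP$ threshold (the paper phrases this as $\PP=\bigcup_p\class{PreciseBPP}(\tfrac12+2^{-p(n)},\tfrac12)$) followed by the observation that such a probabilistic computation is exactly simulable by a quantum circuit (the paper's $\class{PreciseBPP}(c,s)\subseteq\PrBQP(c,s)$). You have merely unpacked both steps in more detail than the paper does.
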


To prove Corollary~\ref{cor:equalsPP}, we need the classical counterpart to $\PrBQP$, denoted \class{Precise-BPP}. Accordingly, we define \class{Precise-BPP} analogous to Definition~\ref{def:PBQP} except by replacing the quantum circuit family $\set{V_n}_{n \in \natural}$ with a deterministic polynomial-time Turing machine which takes in polynomially many bits of randomness.

\begin{proof}
	The direction $\subseteq$ is given by Lemma~\ref{l:containedPP}. For the reverse containment, note that
	\begin{equation}
	\PP=\bigcup_{p}\class{PreciseBPP}\left(\frac{1}{2}+\frac{1}{2^{p(n)}},\frac{1}{2}\right),
	\end{equation}
	since PP can be defined as the set of decision problems of the form: Given as input a polynomial-time non-deterministic Turing machine $N$, do more than half of $N$'s computational paths accept? The claim now follows, since for all $c,s$ as in Definition~\ref{def:PBQP}, clearly $\class{PreciseBPP}(c,s)\subseteq \PrBQP(c,s)$.
\end{proof}

\noindent Note that this proof does not go through as is (assuming $\PP \neq \class{co-NP}$) when we fix (say) completeness $c=1$ and soundness $s=1-2^{-p(n)}$, for some polynomial $p$. This is because
\begin{equation}
	\bigcup_{p}\class{PreciseBPP}(1,1-\frac{1}{2^{p(n)}})=\class{co-NP}.
\end{equation}
Similarly, setting $s=0$ and $c=2^{-p(n)}$ yields $\NP$.

Finally, we define the promise problem $\BQPC(c, s)$, which is trivially $\PrBQP(c,s)$-complete when $c-s$ is an inverse exponential. 

\begin{definition}[$\BQPC(c,s)$]\label{def:PBQPCIRCUIT} Parameters $c,s:\natural\mapsto[0,1]$ are polynomial-time computable functions such that $c>s$.
	\begin{itemize}
		\item (Input) A classical description of quantum circuit $V_n$ (acting on $n$ qubits, consisting of $\poly(n)$ $1$ and $2$-qubit gates), taking in the all-zeroes state, and outputting a single qubit.
		\item (Output) Decide if $\operatorname{Pr}[V_n\text{ accepts }]\geq c$ or $\leq s$, assuming one of the two is the case.
	\end{itemize}
\end{definition}

\subsection{Bounding the power of \texorpdfstring{$\QCPH$}{the quantum-classical hierarchy}}\label{sscn:bounding}
Classically, PH can be defined in terms of the existential ($\exists$) and universal ($\forall$) operators; while it is not clear that one can also define \QCPH\ using these operators, they nevertheless prove useful in bounding the power of \QCPH.

\begin{definition}[Existential and universal quantifiers~\cite{W76, AW93}]\label{def:exists}
     For $\mathcal{C}$ a class of languages, $\exists\boldsymbol{\cdot}\mathcal{C}$ is defined as the set of languages $L$ such that there is a polynomial $p$ and set $A\in \mathcal{C}$ such that for input $x$,
     $
        x\in L \Leftrightarrow \left[\exists y\, (\abs{y}\leq p(\abs{x}))\text{ and }\langle x,y\rangle \in A\right].
     $
     The set $\forall\boldsymbol{\cdot}\mathcal{C}$ is defined similarly with $\exists$ replaced with $\forall$.
\end{definition}
\begin{remark}[{Languages versus promise problems}] \label{rem:langvsprom}
    Directly extending Definition~\ref{def:exists} to promise problems, gives rise to subtle issues. To demonstrate, recall that $\exists\boldsymbol {\cdot} \Po=\NP$. Then, let $(L,A)$ for $L\in\exists\boldsymbol\cdot\Po=\NP$ and $A\in\Po$ be as in Definition~\ref{def:exists}, such that $T_A$ is a polynomial-time Turing machine deciding $A$. If $x\in L$, there exists a bounded length witness $y^*$ such that $T_A$ accepts $\langle x,y^*\rangle$ and, for all $y'\neq y^*$, $T_A$ by definition either accepts or rejects $\langle x,y'\rangle$. Now consider instead $\exists\boldsymbol\cdot \BPP$, which \emph{a priori} seems equal to Merlin-Arthur (\MA). Applying the same definition of $\exists$, we should obtain a \class{BPP}  machine $T_A$ such that if $x\in L$, then for all $y'\neq y^*$, $T_A$ \emph{either accepts or rejects} $\langle x,y'\rangle$. But this means, by definition of \class{BPP}, that $\langle x,y'\rangle$ is either accepted or rejected with probability at least $2/3$, respectively. (Equivalently, for any fixed $y$, the machine $T_{A,y}$ must be a \class{BPP} machine.) Unfortunately, the definition of \MA\ \emph{makes no such promise} --- any $y'\neq y^*$ can be accepted with arbitrary probability when $x$ is a \textsf{YES} instance. Indeed, whether $\exists \boldsymbol \cdot \BPP = \MA$ remains an open question~\cite{FFKL03}.
\end{remark}

The following lemma is the main contribution of this section. To set context, adapting the ideas from Toda's proof of $\PH\subseteq\Po^{\PP}$ to \QCPH\ is problematic for at least two reasons:
\begin{enumerate}
    \item Remark~\ref{rem:langvsprom} says that it is not necessarily true that by fixing a proof $y$ to an \MA\ (resp. \QCMA) machine, the resulting machine is a \BPP\ (resp. \BQP) machine. This prevents the direct extension of recursive arguments, say from~\cite{T91}, to this regime.
    \item The ``Quantum Valiant Vazirani (QVV)'' theorem for QCMA (and MA)~\cite{ABBS08} is not a many-one reduction, but a \emph{Turing} reduction.
         Specifically, it produces a set of quantum circuits $\set{Q_i}$, \emph{at least one of which} is guaranteed to be a \textsf{YES} instance of some Unique-QCMA promise problem $\Gamma$ if the input $\Pi$ to the reduction was a \textsf{YES} instance. Unfortunately, some of the $Q_i$ may violate the promise gap of $\Gamma$, which implies that when such $Q_i$ are {substituted} into the Unique-QCMA oracle O, O returns an arbitrary answer. This does not pose a problem in~\cite{ABBS08}, as one-sided error suffices for that reduction --- so long as O accepts at least one $Q_i$, one safely concludes $\Pi$ was a \textsf{YES} instance. In the setting of Toda's theorem, however, the use of \emph{alternating} quantifiers turns this one-sided error into two-sided error. This renders the output of O useless, as one can no longer determine whether $\Pi$ was a \textsf{YES} or \textsf{NO} instance.
\end{enumerate}
To sidestep these issues, we adapt a high-level idea from~\cite{GY16}: With the help of an appropriate \emph{oracle}, one can sometimes detect ``invalid proofs'' (i.e. proofs in promise gaps of bounded error verifiers) and ``remove'' them. Indeed, we show that using a \PP\ oracle, one can eliminate the promise-gap of $\QCPH$ altogether, thus overcoming the limitations given above. This is accomplished by the following ``Cleaning Lemma''. We also show subsequently that it is highly unlikely for an oracle weaker than \PP\ to suffice for our particular proof technique (see Remark~\ref{rem:stronger} and Section~\ref{sscn:ppcomp}).

\begin{lemma}[Cleaning Lemma]\label{l:clean}
    For all $i\geq 0$,
    \[
        \QCSi_i\subseteq \exists\boldsymbol{\cdot}\forall\boldsymbol{\cdot}\cdots\boldsymbol{\cdot} Q_i\boldsymbol{\cdot} \class{P}^{\PrBQP}\subseteq \exists\boldsymbol{\cdot}\forall\boldsymbol{\cdot}\cdots\boldsymbol{\cdot} Q_i\boldsymbol{\cdot} \class{P}^{\PP},
    \]
    where $Q_i=\exists$ ($Q_i=\forall$) if $i$ is odd (even). An analogous statement holds for $\QCPi_i$.
\end{lemma}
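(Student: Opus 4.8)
The plan is to prove the Cleaning Lemma by induction on $i$, with the crucial work happening in the innermost level, where we convert a bounded-error \QCMA-style acceptance condition into an exact \PrBQP\ computation that can be queried by a deterministic polynomial-time machine. Let me sketch the structure. Consider a $\QCSi_i$ promise problem $A$ with verifier $\{V_n\}$ taking proofs $y_1,\dots,y_i$. By Theorem~\ref{res:err}, I may assume the completeness/soundness parameters are $c = 1 - 2^{-r}$ (or $c=1$) and $s = 2^{-r}$ for a suitable polynomial $r$; so the acceptance probabilities are pushed near $0$ or near $1$, and by Observation~\ref{obs:gate} the acceptance probability $\Pr[V_n\text{ accepts }(y_1,\dots,y_i)]$ is a rational with a $\poly(n)$-bit representation. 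The goal is to replace ``$\Pr[V_n\text{ accepts}] \geq c$'' with a \textbf{promise-free} predicate evaluable by a $\class{P}^{\PrBQP}$ machine given $(x,y_1,\dots,y_i)$.

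First I would handle the base case / innermost quantifier. Fix all proofs $y_1,\dots,y_i$; then $V_n$ on these fixed proofs is a fixed quantum circuit whose acceptance probability is some rational $q$. The question ``is $q \geq c$ or $q \leq s$?'' is exactly a \BQPC-type question, i.e., an instance of $\BQPC(c,s)$ with an inverse-exponential gap, which by the remark following Definition~\ref{def:PBQPCIRCUIT} is $\PrBQP(c,s)$-complete. The key point is that, unlike the original \QCMA\ acceptance condition, this reformulated predicate is defined for \emph{every} setting of the proofs: a deterministic machine with a \PrBQP\ oracle, given $(x,y_1,\dots,y_i)$, constructs the circuit $V_n$ with those proofs hardwired and queries whether its acceptance probability is high or low, obtaining a definite bit with no promise violation. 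This is precisely the ``cleaning'' step that sidesteps Remark~\ref{rem:langvsprom}: by moving to an inverse-exponential gap and to the exact question about a rational acceptance probability, no proof is left ``in the gap,'' so the language of the resulting $\class{P}^{\PrBQP}$ computation is genuinely a language rather than a promise problem. I would then argue that prepending the quantifiers $\exists\boldsymbol\cdot\forall\boldsymbol\cdot\cdots\boldsymbol\cdot Q_i\boldsymbol\cdot$ to this $\class{P}^{\PrBQP}$ language exactly recaptures $A$: in the \textsf{YES} case the quantifier string witnesses acceptance, and in the \textsf{NO} case the dual quantifier string witnesses rejection, with correctness inherited from the $c,s$ bounds of Theorem~\ref{res:err}.

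The second containment, $\exists\boldsymbol\cdot\cdots\boldsymbol\cdot Q_i\boldsymbol\cdot\class{P}^{\PrBQP}\subseteq\exists\boldsymbol\cdot\cdots\boldsymbol\cdot Q_i\boldsymbol\cdot\class{P}^{\PP}$, follows immediately from Lemma~\ref{l:containedPP} (giving $\PrBQP\subseteq\PP$) together with the observation that replacing an oracle by a more powerful one, uniformly inside the same block of quantifiers, preserves containment. The $\QCPi_i$ statement is dual and follows by the same argument with the roles of the quantifier patterns exchanged (or by taking complements).

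The main obstacle I anticipate is making the ``fix the proofs, ask the exact question'' step rigorous while respecting the quantifier semantics of the $\exists\boldsymbol\cdot/\forall\boldsymbol\cdot$ operators from Definition~\ref{def:exists}, which are defined for \emph{languages}, not promise problems. The delicate point is that the inner $\class{P}^{\PrBQP}$ computation must decide a bona fide language (every $\langle x, y_1,\dots,y_i\rangle$ gets a well-defined accept/reject), and I must verify that the error reduction of Theorem~\ref{res:err} genuinely guarantees that, after fixing proofs, the acceptance probability never lands in the $(s,c)$ gap in a way that would make the \PrBQP\ oracle's answer ambiguous relative to the needed threshold — this is exactly where the inverse-exponential gap and rational-probability facts (Observation~\ref{obs:gate}) must be combined carefully. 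Establishing that this reformulation commutes correctly with all the outer alternating quantifiers, so that the reconstructed quantifier expression decides $A$ on both \textsf{YES} and \textsf{NO} instances, is where I expect the real care to be required.
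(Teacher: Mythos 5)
Your overall architecture (fix the proofs, replace the bounded-error acceptance condition with a promise-free predicate decided by a $\class{P}^{\PrBQP}$ machine, then re-attach the quantifiers) matches the paper's, and your handling of the second containment and of $\QCPi_i$ is fine. But the core ``cleaning'' step as you describe it does not work. You propose to first amplify via Theorem~\ref{res:err} to $c = 1-2^{-r}$, $s = 2^{-r}$, and then, for a fixed $(y_1,\dots,y_i)$, make a \emph{single} oracle query asking whether $\Pr[V_n \text{ accepts}] \geq c$ or $\leq s$, claiming this query ``obtains a definite bit with no promise violation.'' That claim is false: error reduction only controls the acceptance probabilities attained along the quantified completeness/soundness guarantees, not for \emph{arbitrary} fixings of the proofs. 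A proof tuple not on the ``good'' strategy tree can perfectly well have acceptance probability landing in $(s,c)$ (e.g.\ a tuple accepted with probability $1/2$ stays at $1/2$ under majority-vote amplification), in which case your query violates the promise of $\BQPC(c,s)$, the oracle's answer is arbitrary, and the resulting machine is not a well-defined $\Po^{\PrBQP}$ machine (this is exactly the failure mode of Observation~\ref{obs:not}, and the reason Definition~\ref{def:exists} cannot then be applied). You flag this as ``the delicate point'' to verify at the end, but the verification you propose would fail — the gap cannot be closed by amplification.

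The paper's proof closes it differently, and this is the idea missing from your write-up: by Observation~\ref{obs:gate} the acceptance probability $p_{y_1^*,\dots,y_i^*}$ is an exact rational representable with $p(n)$ bits, i.e.\ an element of the discrete set $S$ of such multiples. Hence a query of the form ``is the acceptance probability $\geq a$ or $\leq b$'' for $a,b \in S$ straddling no element of $S$ is \emph{always} a valid $\BQPC(a,b)$ instance, regardless of which proofs were fixed. Running a binary search over such thresholds with $\poly(n)$ oracle calls computes $p_{y_1^*,\dots,y_i^*}$ \emph{exactly}; the $\Po$ machine then deterministically compares this exact value against $c$. Every oracle query satisfies its promise, so the machine is a genuine $\Po^{\PrBQP}$ (hence $\Po^{\PP}$) machine deciding a bona fide language, and Equations~(\ref{eq:1}) and (\ref{eq:2}) follow. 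Note also that the paper does not need Theorem~\ref{res:err} or any induction on $i$ for this lemma — the argument is direct, and the amplification you invoke is neither necessary nor sufficient for the cleaning step.
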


\begin{proof}
	Let $C$ be a $\QCSi_i$ verification circuit for a promise problem $\Pi$. Let $C_{y^*_1,\ldots,y^*_i}$ denote the quantum circuit obtained from $C$ by fixing values $y_1^*,\ldots,y_i^*$ of the $i$ classical proofs. In general, nothing can be said about the acceptance probability $p_{y^*_1,\ldots,y^*_i}$ of $C_{y^*_1,\ldots,y^*_i}$, except that, by Observation~\ref{obs:gate}, $p_{y^*_1,\ldots,y^*_i}$ is a rational number representable using $p(n)$ bits for some fixed polynomial $p$. Let $S$ denote the set of all rational numbers in $[0,1]$ representable using $p(n)$ bits of precision. (Note $\abs{S}\in\Theta(2^{p(n)})$.) Then, for any $a,b\in S$ with $a>b$, the triple $(C_{y^*_1,\ldots,y^*_i},a,b)$ is a valid $\BQPC(a,b)$ instance, in that $C_{y^*_1,\ldots,y^*_i}$ accepts with probability at least $a$ or at most $b$ for $a-b$ an inverse exponential. It follows that using binary search (by varying the values $a,b\in S$ with $a>b$) in conjunction with $\poly(n)$ calls to a $\BQPC(a,b)$ oracle, we may exactly and deterministically compute $p_{y^*_1,\ldots,y^*_i}$. Moreover, since for all such $a> b$, $\BQPC(a,b)\in \PrBQP(a,b)$, Lemma~\ref{l:containedPP} implies a $\BQPC(a,b)$ oracle call can be simulated with a \PP\ oracle. Denote the binary search subroutine using the \PP\ oracle as $B$.
	
	Using $C$ and $B$, we now construct an oracle Turing machine $C'$ as follows. Given any proofs  $y_1^*,\ldots, y_i^*$ as input, $C'$ uses $B$ to compute $p_{y^*_1,\ldots,y^*_i}$ for $C_{y^*_1,\ldots,y^*_i}$. If $p_{y^*_1,\ldots,y^*_i}\geq c$, $C'$ accepts with certainty, and if $p_{y^*_1,\ldots,y^*_i}<c$, $C'$ rejects with certainty. Suppose that the circuits $C$ and $C'$ return $1$ when the accept and $0$ when they reject. Two observations: (1) Since by construction, for any fixed $y^*_1,\ldots,y^*_i$, $B$ makes only makes ``valid'' $\BQPC(a,b)$ queries (i.e. satisfying the promise of $\BQPC(a,b)$), $C'$ is a $\Po^{\PP}$ machine (cf. Observation~\ref{obs:not}). (2) Since $C'_{y^*_1,\ldots,y^*_i}$ accepts if $C_{y^*_1,\ldots,y^*_i}$ accepts with probability at least $c$, and since $C'_{y^*_1,\ldots,y^*_i}$ rejects if $C_{y^*_1,\ldots,y^*_i}$ accepts with probability at most $s$, we conclude that
	\begin{eqnarray}
	\exists y_1 \forall y_2 \cdots Q_i y_i \operatorname{Prob}[C(y_1,\ldots, y_i) = 1]\geq c &\Leftrightarrow& \exists y_1 \forall y_2 \cdots Q_i y_i \ C'(y_1,\ldots, y_i)=1  \label{eq:1} \\
	\forall y_1 \exists y_2 \cdots \overline{Q}_i y_i \operatorname{Prob}[C(y_1,\ldots, y_i) = 1]\leq s &\Leftrightarrow& \forall y_1 \exists y_2 \cdots \overline{Q}_i y_i \ C'(y_1,\ldots, y_i) = 0 \label{eq:2}.
	\end{eqnarray}
	\eqref{eq:1} and~\eqref{eq:2} imply that we can simulate $\Pi$ with a $\exists\boldsymbol{\cdot}\forall\boldsymbol{\cdot}\cdots\boldsymbol{\cdot} Q_i\boldsymbol{\cdot} \class{P}^{\PP}$ computation. The proof for $\QCPi_i$ is analogous.
\end{proof}

\begin{remark}[Possibility of a stronger containment]\label{rem:stronger}
	A key question is whether one may replace the $\PrBQP$ oracle in the proof of Lemma~\ref{l:clean} with a weaker \BQP\ oracle. For example, consider the following alternative definition for oracle Turing machine $C'$: Given proofs $y_1^*,\ldots, y_i^*$, $C'$ plugs $C_{y_1^*,\ldots, y_i^*}$ into a BQP oracle and returns the oracle's answers. It is easy to see that in this case, Equations~(\ref{eq:1}) and (\ref{eq:2}) hold. However, $C'$ is \emph{not} necessarily a $\Po^{\BQP}$ machine, since for some settings of $y_1^*,\ldots, y_i^*$, its input to the \BQP\ oracle may violate the BQP promise, hence making the output of $C'$ ill-defined. To further illustrate this subtle point, consider Observation~\ref{obs:not}. Moreover, in Section~\ref{sscn:ppcomp} we show that the task the $\PrBQP$ oracle is used for in Lemma~\ref{l:clean} is in fact PP-complete; thus, it is highly unlikely that one can substitute a weaker oracle into the proof above.
\end{remark}

\begin{obs}[When a \Po\ machine querying a \BQP\ oracle is not a $\Po^{\BQP}$ machine]\label{obs:not} The proof of the Cleaning Lemma uses a $\Po^{\PrBQP}$ machine. Let us highlight a subtle reason why using a weaker $\BQP$ oracle instead might be difficult (indeed, in Section~\ref{sscn:ppcomp} we show that the task we use the $\PrBQP$ oracle for is $\PP$-complete).   Let $M$ denote the trivially \BQP-complete problem of determining whether a given polynomial-sized quantum circuit $Q$ accepts with probability at least $2/3$, or accepts with probability at most $1/3$, with the promise that one of the two is the case. Now consider the following polynomial time computation, $\Pi$, which is given access to an oracle $O_M$ for $M$: $\Pi$ inputs the Hadamard gate $H$ into $O_M$ and outputs $O_M$'s answer. Does it hold that $\Pi\in \Po^{\BQP}$? No. Since $H$ violates the promise of BQP, i.e. measuring the output of $H$ yields $0$ or $1$ with equal probability, the oracle $O_M$ can answer $0$ or $1$ arbitrarily, and so the output of $\Pi$ is not well-defined. Having a well-defined output, however, is required for a $\Po^{O_K}$ computation, where $K$ is any promise class~\cite{G06}.
\end{obs}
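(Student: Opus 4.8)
The plan is to establish Observation~\ref{obs:not} by exhibiting an explicit counterexample showing that a polynomial-time machine issuing queries to a \BQP\ oracle need not constitute a legal $\Po^{\BQP}$ computation. The conceptual point I would isolate first is that \BQP\ (like \QCMA\ or \PrBQP) is a \emph{promise} class, so the phrase ``oracle for \BQP'' is only meaningful relative to a convention for how the oracle behaves on promise-\emph{violating} inputs. I would adopt the standard convention of Goldreich~\cite{G06}: a language $L$ lies in $\Po^{K}$ for a promise class $K$ only if there is a single polynomial-time oracle machine that decides $L$ \emph{for every} oracle $O$ answering $K$ correctly on its promise-satisfying instances, with $O$ permitted to answer arbitrarily elsewhere. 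Under this convention, a legal $\Po^{K}$ computation must produce a well-defined output that is invariant to the adversarial answers of $O$ on promise-violating queries.

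With this definition in hand, the counterexample is immediate. Let $M$ be the canonical \BQP-complete problem of deciding whether an input circuit $Q$ accepts with probability $\geq 2/3$ or $\leq 1/3$, and let $\Pi$ be the polynomial-time machine that, on any input, hands the single-qubit circuit $H$ (Hadamard applied to $\ket{0}$, followed by a computational-basis measurement) to its oracle $O_M$ and returns the oracle's answer. The key fact --- a one-line calculation --- is that $H\ket{0}$ yields outcome $1$ with probability exactly $1/2$, which sits strictly inside the forbidden gap $(1/3,2/3)$; hence $H$ violates the promise of $M$. Consequently there exist two oracles $O_M, O_M'$, each answering $M$ correctly on all promise-satisfying instances, that disagree on the query $H$. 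The machine $\Pi$ then outputs different answers under $O_M$ versus $O_M'$, so it fails to define any language and is therefore not a valid $\Po^{\BQP}$ machine.

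I would close by contrasting this failure with why the Cleaning Lemma is unaffected, which clarifies the role of the observation. In the proof of Lemma~\ref{l:clean}, the binary-search subroutine $B$ only ever issues $\BQPC(a,b)$ queries, with $a-b$ an inverse exponential, against circuits $C_{y_1^*,\ldots,y_i^*}$ whose acceptance probability is a $p(n)$-bit rational (Observation~\ref{obs:gate}); by choosing the thresholds $a,b\in S$ appropriately, every such query genuinely satisfies the relevant promise, so the queries are legal $\PrBQP$ instances and the overall machine is a bona fide $\Po^{\PP}$ computation. The main (and essentially only) subtlety to get right is the definitional step in the first paragraph: one must pin down the Goldreich-style semantics of oracle access to a promise class, since under a naive ``fixed smart oracle'' reading the observation would be false. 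Once that convention is fixed, the rest reduces to the trivial probability computation for $H$ and the remark that a \BQP\ oracle supplies no inverse-exponential promise gap with which to detect and avoid such illegal queries.
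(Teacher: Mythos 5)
Your proposal is correct and follows essentially the same route as the paper: the paper's own justification is precisely that $H$ applied to $\ket{0}$ accepts with probability $1/2$, violating the promise of $M$, so the oracle may answer arbitrarily and $\Pi$'s output is not well-defined, which is disallowed for $\Po^{O_K}$ computations under the Goldreich convention~\cite{G06}. Your added formalization (two oracles $O_M, O_M'$ disagreeing on the query) and the closing contrast with the Cleaning Lemma are faithful elaborations of points the paper makes in the surrounding text rather than a different argument.
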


Using standard techniques, we next show the following.

\begin{lemma}\label{l:contain}
    For all $i\geq 0$, the following holds true:
    \begin{eqnarray*}
    	\exists\boldsymbol{\cdot}\forall\boldsymbol{\cdot}\cdots\boldsymbol{\cdot} Q_i\boldsymbol{\cdot} \class{P}^{\PP}
    	&\subseteq&\sigmai^{\PP},\\
    	\forall\boldsymbol{\cdot}\exists\boldsymbol{\cdot}\cdots\boldsymbol{\cdot} Q_i\boldsymbol{\cdot} \class{P}^{\PP}
    	&\subseteq&\pii^{\PP},
    \end{eqnarray*}
    where $Q_i=\exists$ (resp. $Q_i=\forall$) when $i$ is odd (resp. even) in the first containment and vice versa for the second containment.
\end{lemma}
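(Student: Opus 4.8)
The plan is to prove the two containments by a straightforward induction on $i$, where the base case $i=0$ is simply the statement $\class{P}^{\PP}\subseteq\class{P}^{\PP}$ (equivalently $\Sigma_0^{\PP}=\Pi_0^{\PP}=\class{P}^{\PP}$), and the inductive step peels off the outermost quantifier using the defining property of the $\exists\boldsymbol{\cdot}$ and $\forall\boldsymbol{\cdot}$ operators from Definition~\ref{def:exists}. The two statements are dual under complementation, so I would prove the first ($\exists\boldsymbol{\cdot}\forall\boldsymbol{\cdot}\cdots Q_i\boldsymbol{\cdot}\class{P}^{\PP}\subseteq\Sigma_i^{\PP}$) in detail and then remark that the $\Pi_i^{\PP}$ statement follows by swapping the roles of $\exists$ and $\forall$ throughout (or by taking complements, using $\Pi_i^{\PP}=\class{co}\text{-}\Sigma_i^{\PP}$ and that $\class{P}^{\PP}$ is closed under complement).

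For the inductive step, suppose the claim holds for $i-1$. A language in $\exists\boldsymbol{\cdot}\forall\boldsymbol{\cdot}\cdots\boldsymbol{\cdot}Q_i\boldsymbol{\cdot}\class{P}^{\PP}$ is, by definition of the outermost operator, of the form $\exists\boldsymbol{\cdot}\mathcal{C}$ where $\mathcal{C}=\forall\boldsymbol{\cdot}\cdots\boldsymbol{\cdot}Q_i\boldsymbol{\cdot}\class{P}^{\PP}$ is the remaining tower with $i-1$ quantifiers starting with $\forall$. By the (dual form of the) inductive hypothesis, $\mathcal{C}\subseteq\Pi_{i-1}^{\PP}$. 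Hence the language lies in $\exists\boldsymbol{\cdot}\Pi_{i-1}^{\PP}$. The key fact I would invoke is the standard characterization of the polynomial hierarchy relative to an oracle: for any oracle class, $\exists\boldsymbol{\cdot}\Pi_{i-1}^{\PP}=\Sigma_i^{\PP}$ and $\forall\boldsymbol{\cdot}\Sigma_{i-1}^{\PP}=\Pi_i^{\PP}$. This is exactly the quantifier-based definition of the relativized polynomial hierarchy (the oracle $\PP$ relativizes Definition~\ref{def:exists} in the obvious way), so applying it completes the step.

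The one point requiring care — and the step I expect to be the main obstacle — is verifying that the $\exists\boldsymbol{\cdot}$ operator of Definition~\ref{def:exists}, which is stated for classes of \emph{languages}, interacts cleanly with the $\PP$ oracle sitting at the bottom of the tower. Concretely, I must check that $\class{P}^{\PP}$ and each intermediate class $\Sigma_k^{\PP}$, $\Pi_k^{\PP}$ are genuine \emph{language} classes (not promise classes), so that Definition~\ref{def:exists} applies without the subtleties flagged in Remark~\ref{rem:langvsprom}. This is fine here precisely because the Cleaning Lemma has already replaced the bounded-error quantum verifier by a deterministic $\class{P}^{\PP}$ machine $C'$ that accepts or rejects every input with certainty; there is no promise gap left, so each level of the tower decides a bona fide language and the classical quantifier-swapping identities apply verbatim. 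I would state this observation explicitly to justify that ``standard techniques'' indeed go through, and then the induction closes.

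\begin{proof}
We prove the first containment by induction on $i$; the second follows by an identical argument with the roles of $\exists$ and $\forall$ interchanged (equivalently, by complementation, using that $\Pi_i^{\PP}=\class{co}\text{-}\Sigma_i^{\PP}$ and that $\class{P}^{\PP}$ is closed under complement). The base case $i=0$ is the trivial identity $\class{P}^{\PP}=\class{P}^{\PP}=\Sigma_0^{\PP}$.

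For the inductive step, assume both containments hold at level $i-1$. A language $L\in\exists\boldsymbol{\cdot}\forall\boldsymbol{\cdot}\cdots\boldsymbol{\cdot}Q_i\boldsymbol{\cdot}\class{P}^{\PP}$ has, by Definition~\ref{def:exists}, the form $L\in\exists\boldsymbol{\cdot}\mathcal{C}$ where $\mathcal{C}=\forall\boldsymbol{\cdot}\cdots\boldsymbol{\cdot}Q_i\boldsymbol{\cdot}\class{P}^{\PP}$ is the tower of $i-1$ quantifiers beginning with $\forall$. By the inductive hypothesis (second containment at level $i-1$), $\mathcal{C}\subseteq\Pi_{i-1}^{\PP}$, so $L\in\exists\boldsymbol{\cdot}\Pi_{i-1}^{\PP}$. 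Since the Cleaning Lemma has already eliminated the promise gap—i.e. the $\class{P}^{\PP}$ machine at the bottom of the tower decides a genuine \emph{language}, accepting or rejecting every input with certainty—each intermediate class $\Sigma_{k}^{\PP}$, $\Pi_{k}^{\PP}$ is likewise a bona fide language class, so the quantifier-based characterization of the relativized polynomial hierarchy applies without the subtleties of Remark~\ref{rem:langvsprom}. That characterization gives $\exists\boldsymbol{\cdot}\Pi_{i-1}^{\PP}=\Sigma_i^{\PP}$, whence $L\in\Sigma_i^{\PP}$. This completes the induction.
\end{proof}
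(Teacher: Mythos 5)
Your proof is correct and follows essentially the same route as the paper: an induction on $i$ that peels off the outermost quantifier and reduces to the level-$(i-1)$ statement of the dual containment. The only difference is cosmetic --- you invoke the relativized quantifier characterization $\exists\boldsymbol{\cdot}\Pi_{i-1}^{\PP}=\Sigma_i^{\PP}$ as a known fact, whereas the paper spells out that step explicitly by nondeterministically guessing $y_1$ and working with complete problems for each level and the oracle definition $\NP_i^{\PP}$; your added remark that the Cleaning Lemma leaves a genuine language class (so Definition~\ref{def:exists} applies without the promise-problem subtleties) is a correct and worthwhile observation.
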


\begin{proof}
	We show the first statement with containment in $\sigmai^{\pp}$; the second containment follows using an analogous proof. Let $\NP_i$ be defined recursively as $\NP_i := \NP^{\NP_{i-1}}$ with $\NP_1 := \NP$. We show that $\exists\boldsymbol{\cdot}\forall\boldsymbol{\cdot}\cdots\boldsymbol{\cdot} Q_i\boldsymbol{\cdot} \class{P}^{\PP} \subseteq \NP_i^{\PP}$, and then use the fact that {$\NP_i = \sigmai$ using the oracular definition for $\sigmai$}. Recall that MAJSAT is a \PP-complete language, where, given a Boolean formula $\phi$, one must decide if more than half of the possible assignments $x$ satisfy $\phi(x)=1$. For brevity, let $A_i$ denote the (trivially) $\exists\boldsymbol{\cdot}\forall\boldsymbol{\cdot}\cdots\boldsymbol{\cdot} Q_i\boldsymbol{\cdot} \class{P}^{\PP}$-complete language (under polynomial-time many-one reductions) --- given as input a polynomial time oracle Turing machine $T$ with access to a MAJSAT oracle, decide which of the two is the case:
\begin{eqnarray*}
		&&\exists y_1\forall y_2\cdots Q_i y_i \;\; \text{ such that }T\text{ accepts }\langle y_1,\ldots, y_i\rangle,\\
		&&\forall y_1\exists y_2\cdots \overline{Q}_i y_i\;\; \text{ such that }T\text{ rejects }\langle y_1,\ldots, y_i\rangle.
	\end{eqnarray*}
	Let $B_i$ denote the analogous trivially complete problem for $\forall\boldsymbol{\cdot}\exists\boldsymbol{\cdot}\cdots\boldsymbol{\cdot} Q_i\boldsymbol{\cdot} \class{P}^{\PP}$. We proceed by induction. The base case $i=0$ holds trivially since $\Sigma_0=\Po$ by definition. For the inductive step $i\geq 1$, let $L$ be a language in $\exists\boldsymbol{\cdot}\forall\boldsymbol{\cdot}\cdots\boldsymbol{\cdot} Q_{i}\boldsymbol{\cdot} \class{P}^{\PP}$. Then there exists a polynomial-time oracle Turing machine $T$, with access to a MAJSAT oracle, and such that $x\in L$ if and only if $\exists y_1$ such that
	\[
	\forall y_2 \exists y_3\cdots Q_i y_i \;\; \text{ such that } T \text{ accepts }\langle x,y_1,y_2,\cdots,y_i\rangle.
	\]
	By non-deterministically guessing $y_1$, it follows that $L\in \NP^{B_{i-1}}=\NP^{A_{i-1}}$. This equality holds since for all $i\geq 1$, $\NP^{B_i}=\NP^{A_i}$, as one can run the oracle for $A_i$ instead of $B_i$ and negate its answer. Since $A_{i-1}$ is an oracle for $\exists\boldsymbol{\cdot}\forall\boldsymbol{\cdot}\cdots\boldsymbol{\cdot} Q_{i-1}\boldsymbol{\cdot} \class{P}^{\PP}$, the induction hypothesis now implies that
	\begin{equation*}
	\exists\boldsymbol{\cdot}\forall\boldsymbol{\cdot}\cdots\boldsymbol{\cdot} Q_i\boldsymbol{\cdot} \class{P}^{\PP}
	\subseteq \NP^{{\NP_{i-1}}^{{\PP}}} = \NP_i^{{\PP}} = \sigmai^{{\PP}}.
	\end{equation*}
\end{proof}

We can now show the main theorem of this section.
\begin{reptheorem}{thm:QCTODA}
	$\QCPH\subseteq \Po^{\PP^{\PP}}$.
\end{reptheorem}
\begin{proof}
    The claim follows by combining the Cleaning Lemma (Lemma~\ref{l:clean}), Lemma~\ref{l:contain}, and Toda's theorem ($\class{PH}\subseteq \Po^{\PP}$), whose proof relativizes (see, e.g., page 4 of~\cite{F94})).
\end{proof}

\subsection{Detecting non-empty promise gaps is \PP-complete}\label{sscn:ppcomp}

The technique behind the Cleaning Lemma (Lemma~\ref{l:clean}) can essentially be viewed as using a PP oracle to determine whether a given quantum circuit accepts some input with probability within the promise gap $(s,c)$, where $c-s$ is an inverse polynomial. One can ask whether this rather powerful PP oracle can be replaced with a weaker oracle (Remark~\ref{rem:stronger})? We show that unless one deviates from our specific proof approach, the answer is negative. Specifically, we show that the problem of detecting non-empty promise gaps is \PP-complete, {even if} the gap is \emph{constant} in size. Let us begin by formalizing this problem.

\begin{definition}[$\DDET(c,s)$]\label{def:DEVIANT-DETECTION}
    Let $V_n$ be an input for $\BQPC(c,s)$. Then, output \textup{\textsf{YES}} if
    $
        \operatorname{Prob}[V_n\text{ accepts }]\in(s,c),
    $
    and \textup{\textsf{NO}} otherwise.
\end{definition}

We now show that \DDET\ is PP-complete.

\begin{lemma}\label{l:DDETinPP}
    For all $c,s$ with the $c-s$ gap at least an inverse exponential in input size,
    \[
    	\DDET(c,s) \in \PP .
    \]
\end{lemma}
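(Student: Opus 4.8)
The plan is to express $\DDET(c,s)$ as the intersection of two $\PP$ languages---one enforcing the lower threshold $p > s$ and one the upper threshold $p < c$, where $p := \operatorname{Pr}[V_n \text{ accepts}]$---and then invoke closure of $\PP$ under intersection. By Observation~\ref{obs:gate}, I may assume the gate set is such that $p$ is a dyadic rational $k/2^{\ell(n)}$ representable with $\poly(n)$ bits. Since $c,s$ are polynomial-time computable, I first replace them by dyadic approximations $\tilde{c},\tilde{s}$ of $\poly(n)$ bits, chosen fine enough (e.g.\ to $\ell(n)+2$ bits) that for every $\ell(n)$-bit dyadic value $p$ one has $p > s \Leftrightarrow p > \tilde{s}$ and $p < c \Leftrightarrow p < \tilde{c}$; such $\tilde{c},\tilde{s}$ are exactly preparable as coin biases using the same gate set.

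Next I would show $L_{>} := \{\, V_n : p > s \,\} \in \PP$. Recall that \class{PQP}---which accepts \textsf{YES} instances with probability $> 1/2$ and \textsf{NO} instances with probability $\leq 1/2$---satisfies $\class{PQP} = \PP$~\cite{W09_2}. Given $V_n$, I build a circuit $W$ which with probability $1/2$ runs $V_n$ and with probability $1/2$ accepts with probability $1 - \tilde{s}$; its acceptance probability is $q = \tfrac{1}{2}p + \tfrac{1}{2}(1 - \tilde{s}) = \tfrac{1}{2} + \tfrac{1}{2}(p - \tilde{s})$, so $q > 1/2 \Leftrightarrow p > \tilde{s} \Leftrightarrow p > s$ and $q \leq 1/2 \Leftrightarrow p \leq s$. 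Hence $W$ witnesses $L_{>} \in \class{PQP} = \PP$. Symmetrically, for $L_{<} := \{\, V_n : p < c \,\}$, I build $W'$ which with probability $1/2$ runs $V_n$ and negates its output (accepting with probability $1 - p$) and with probability $1/2$ accepts with probability $\tilde{c}$; then its acceptance probability is $q' = \tfrac{1}{2}(1 - p) + \tfrac{1}{2}\tilde{c} = \tfrac{1}{2} + \tfrac{1}{2}(\tilde{c} - p)$, giving $q' > 1/2 \Leftrightarrow p < \tilde{c} \Leftrightarrow p < c$, so $L_{<} \in \PP$.

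Finally, a \textsf{YES} instance of $\DDET(c,s)$ is precisely one with $s < p < c$, so $\DDET(c,s) = L_{>} \cap L_{<}$, and since $\PP$ is closed under intersection the result $\DDET(c,s) \in \PP$ follows. The point I expect to be delicate is that the interval condition carries \emph{no} promise gap (indeed $p$ may equal, or lie arbitrarily close to, $s$ or $c$), so I deliberately route through \class{PQP}, whose acceptance threshold at $1/2$ is sharp, rather than through $\PrBQP$, which requires an inverse-exponential gap; this is also precisely why a \PP-strength oracle, and not a weaker one, emerges. The two remaining technical ingredients are the exact dyadic realizability of the shifted acceptance probabilities $q,q'$ (handled by the approximation step above) and the closure of $\PP$ under intersection, which is the essential external fact tying the two threshold tests together.
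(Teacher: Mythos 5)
Your proof is correct, but it takes a genuinely different route from the paper's. The paper gives a single many-one reduction: after re-centering $(c,s)$ to straddle $1/2$, it runs two copies of the circuit and ANDs one output with the negation of the other, so that the new acceptance probability is $p(1-p)$; since this is maximized at $p=1/2$, instances with $p$ inside the gap are separated from those outside by a threshold near $1/4-\gamma^2$, and Observation~\ref{obs:gate} (discreteness of $p$) guarantees the separation is at least inverse exponential, at which point Lemma~\ref{l:containedPP} ($\BQPC(P,Q)\subseteq\PrBQP(P,Q)\subseteq\PP$) finishes. You instead split the interval condition into two one-sided strict thresholds, place each in $\class{PQP}=\PP$ exactly (exploiting the sharp $1/2$ threshold so no gap is needed), and combine them via closure of $\PP$ under intersection. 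What each buys: the paper's argument is a many-one reduction that stays entirely within the $\PrBQP\subseteq\PP$ machinery it has already built, whereas yours avoids the quadratic folding trick and the midpoint normalization but imports the nontrivial Beigel--Reingold--Spielman closure theorem as a black box; both ultimately lean on Observation~\ref{obs:gate}, the paper to lower-bound $P-Q$, you to make the dyadic roundings of $c,s$ preserve the strict comparisons. One small point to tighten: an arbitrary $(\ell(n)+2)$-bit approximation of $s$ (resp.\ $c$) does \emph{not} automatically satisfy $p>s\Leftrightarrow p>\tilde{s}$ for all $\ell(n)$-bit dyadic $p$; you should take $\tilde{s}$ to be $s$ rounded \emph{down} to $\ell(n)$ bits and $\tilde{c}$ rounded \emph{up}, which does give the stated equivalences. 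Since you state the required property explicitly and it is achievable, this is a presentational fix rather than a gap.
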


\begin{proof}
	Our approach to show containment in $\PP$ is to give a poly-time many-one reduction from $\DDET(c,s)$ with $c-s$ at least an inverse exponential to $\BQPC(P,Q)$ with $P-Q$ an inverse exponential. (Note that even if $c-s\in \Omega(1)$, we will still have $P-Q$ an inverse exponential.) Let $V_n$ be an input to $\DDET(c,s)$. We construct an instance $V''_n$ of $\BQPC(P,Q)$ as follows.
	
	The first step is to adjust the completeness and soundness parameters for \DDET\ so that they ``straddle'' the midpoint $1/2$. Formally, map $c>s$ to $c'>s'$, respectively, so that $c'(n)-1/2=1/2-s'(n)$. For this, construct the following circuit $V'_n$, whose completeness and soundness parameters we denote by $c'$ and $s'$, respectively.
	
	If $c(n)+s(n)>1$, then with probability
	\begin{equation}
		\alpha:=(c(n)+s(n)-1)/(c(n)+s(n)),
	\end{equation}
	reject, and with probability $1-\alpha$, run $V_n$ and output its answer.
	
	The case of $c(n)+s(n)<1$ is analogous, except with
	\begin{equation}
	\alpha :=(1-c(n)-s(n))/(2-c(n)-s(n)).
	\end{equation}
	Finally, if $c(n)+s(n)=1$, set $c'=c$ and $s'=s$. (Here, we use Observation~\ref{obs:gate}, which allows us to assume $c,s\in\mathbb{Q}$ with $\poly(n)$ bits of precision.)
	
	Next, map $V'_n$ to $V''_n$ as follows: Given a proof $y\in\set{0,1}^m$, (1) run two copies of $V'_n$ in parallel on $y$, (2) negate the output of the second copy of $V'_n$ via a Pauli $X$ gate, (3) apply an AND gate to both output qubits, and (4) measure in the standard basis. Let $p_y$ denote the probability that $V'_n$ accepts $y$. Then, $V_n''$ accepts $y$ with probability $p_y(1-p_y)$.
	
	\emph{Correctness.} Intuitively, since the function $f(x)=x(1-x)$ is maximized over $x\in[0,1]$ when $x=1/2$, the acceptance probability of $V''_n$ is maximized when $y$ falls into the promise gap of $V'_n$, i.e. $p_y\approx 1/2$. Formally, let $c'(n)=1/2+\gamma$ and $s'(n)=1/2-\gamma$ for $\gamma\in(0,1/2]$, and express $p_y=1/2+\delta$ for bias $\delta\in[-1/2,1/2]$. Then, $V''_n$ accepts $y$ with probability
	\begin{equation}
	p_y(1-p_y)=1/4-\delta^2.
	\end{equation}
	It follows that if $p_y\geq c'(n)$ or $p_y\leq s'(n)$, then $V_n''$ accepts $y$ with probability $Q\leq 1/4-\gamma^2$, and if $s'(n)<p_y<c'(n)$, then $V_n''$ accepts $y$ with probability $P > 1/4-\gamma^2$. By Observation~\ref{obs:gate}, we may assume
	\begin{equation}
	P - Q \in \Omega(1/\exp(n)),
	\end{equation}
	thus yielding that a \textsf{YES} instance of $\DDET(c,s)$ with at least inverse exponential $c-s$ is mapped to a \textsf{NO} instance of $\BQPC(P,Q)$ and vice versa with inverse exponential $P-Q$. The claim now follows by Lemma~\ref{l:containedPP}, which says $\BQPC(P,Q)\in\PP$.
\end{proof}

\begin{lemma}\label{l:DDETPPhard}
    There exist $c,s\in \Theta(1)$ such that $\DDET(c,s)$ is $\PP$-hard.
\end{lemma}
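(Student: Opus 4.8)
The plan is to give a polynomial-time many-one reduction from MAJSAT --- the canonical $\PP$-complete language --- to $\DDET(c,s)$ for the \emph{constants} $c=3/4$ and $s=1/4$. Given a MAJSAT instance, i.e.\ a Boolean formula $\phi$ on $m$ variables with $k$ satisfying assignments, the first step is to build a quantum circuit $Q_\phi$ that prepares the uniform superposition over all $2^m$ assignments (via Hadamard gates), reversibly evaluates $\phi$ (via Toffoli, CNOT, and $X$ gates), and accepts iff the evaluation qubit reads $1$. Then $\Pr[Q_\phi\text{ accepts}]=k/2^m$ exactly, and by Observation~\ref{obs:gate} this is a dyadic rational of $\poly(m)$ bits.

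Next I would compose $Q_\phi$ with a single fair coin to obtain the circuit $V_\phi$ submitted to $\DDET$: run $Q_\phi$, prepare a fresh qubit in $(\ket{0}+\ket{1})/\sqrt{2}$, and accept iff both the coin qubit and the output of $Q_\phi$ equal $1$ (i.e.\ AND the two output bits, realized by a Toffoli). This yields $\Pr[V_\phi\text{ accepts}]=\tfrac12\cdot\tfrac{k}{2^m}=k/2^{m+1}$.

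I would then verify correctness with $(c,s)=(3/4,1/4)$. We have $\Pr[V_\phi\text{ accepts}]\in(1/4,3/4)$ iff $2^{m-1}<k<3\cdot 2^{m-1}$; since $k\le 2^m=2\cdot 2^{m-1}<3\cdot 2^{m-1}$ always holds, the upper inequality is automatic, so the acceptance probability lies in the gap iff $k>2^{m-1}$, i.e.\ iff strictly more than half the assignments satisfy $\phi$, i.e.\ iff $\phi\in\text{MAJSAT}$. Concretely, \textsf{NO} instances ($k\le 2^{m-1}$) give acceptance probability at most $1/4=s$, which lies outside the open gap, while \textsf{YES} instances ($k\ge 2^{m-1}+1$) give acceptance probability strictly above $1/4$ and at most $1/2<3/4$, hence inside the gap. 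Therefore $\phi\in\text{MAJSAT}$ iff $V_\phi$ is a \textsf{YES} instance of $\DDET(3/4,1/4)$, establishing $\PP$-hardness with $c,s\in\Theta(1)$.

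The conceptual crux I expect to matter is recognizing that a monotone, \emph{single-copy} affine transformation of the acceptance probability (here, multiplication by $1/2$) already suffices: it squeezes the MAJSAT-\textsf{YES} region $\{k/2^m\ge 1/2+2^{-m}\}$ strictly inside the gap while pushing the \textsf{NO} region $\{k/2^m\le 1/2\}$ down to or below the lower gap boundary. No amplification is available (distinguishing $k/2^m=1/2$ from $1/2+2^{-m}$ at a constant gap would require exponentially many repetitions), and --- in contrast to the membership direction in Lemma~\ref{l:DDETinPP} --- no nonmonotone ``folding'' via $x(1-x)$ is required, precisely because $k\le 2^m$ forces the upper gap constraint to hold for free.
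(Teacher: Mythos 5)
Your proposal is correct and is essentially identical to the paper's proof: both reduce from MAJSAT by building a circuit whose acceptance probability is $k/2^m$ and then halving it (your Toffoli-with-a-fair-coin is the same operation as the paper's ``reject with probability $1/2$, else run the circuit''), with the same constants $c=3/4$, $s=1/4$ and the same case analysis placing \textsf{YES} instances inside the open gap and \textsf{NO} instances at or below $1/4$.
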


\begin{proof}
	Let $\phi:\set{0,1}^n\mapsto\set{0,1}$ be an instance of the PP-complete problem MAJSAT (see proof of Lemma~\ref{l:contain}). We construct an instance $V_n$ of $\DDET(c,s)$ with $c-s\in\Theta(1)$ as follows. Let $V'_n$ be a poly-size quantum circuit which prepares the state
\begin{equation}
	2^{-n/2}\sum_{x\in\set{0,1}^n}\ket{x}_A\ket{\phi(x)}_B\in(\complex^2)^{\otimes n+1},
	\end{equation}
	then measures register $B$ in the standard basis, and accepts if and only if it obtains result $1$. If $\phi$ is a \textsf{YES} instance, then $V'_n$ accepts with probability in range $[1/2+1/2^n,1]$, and if $\phi$ is a \textsf{NO} instance, $V'_n$ accepts with probability in range $[0,1/2]$. Thus, setting (for example) $c=3/4$, $s=1/4$, and constructing circuit $V_n$ which with probability $1/2$ rejects, and with probability $1/2$ runs $V'_n$ and outputs its answer, yields the claim.
\end{proof}

\noindent Lemmas~\ref{l:DDETinPP} and~\ref{l:DDETPPhard} immediately yield the following.
\begin{theorem}\label{thm:ppcomplete}
    There exist $c,s\in \Theta(1)$ such that $\DDET(c,s)$ is $\PP$-complete.
\end{theorem}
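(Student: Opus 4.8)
The plan is to establish Theorem~\ref{thm:ppcomplete} as an immediate consequence of the two preceding lemmas, since $\PP$-completeness is by definition the conjunction of membership in $\PP$ and $\PP$-hardness. First I would invoke Lemma~\ref{l:DDETinPP}, which already gives $\DDET(c,s) \in \PP$ for \emph{any} $c,s$ whose gap is at least inverse exponential in the input size; in particular this holds for any $c,s \in \Theta(1)$. Next I would invoke Lemma~\ref{l:DDETPPhard}, which exhibits \emph{specific} constants $c,s \in \Theta(1)$ (concretely $c=3/4$, $s=1/4$) for which $\DDET(c,s)$ is $\PP$-hard. The key observation tying these together is that the two lemmas can be applied to the \emph{same} pair of constants: the membership result is universal over all constant gaps, so it applies to precisely the $c,s$ for which hardness was shown.

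Concretely, I would fix the witnessing constants $c=3/4$ and $s=1/4$ supplied by Lemma~\ref{l:DDETPPhard}. For these constants, $c-s = 1/2 \in \Theta(1)$ is certainly at least an inverse exponential in the input size, so the hypothesis of Lemma~\ref{l:DDETinPP} is satisfied and we obtain $\DDET(3/4,1/4) \in \PP$. Simultaneously, Lemma~\ref{l:DDETPPhard} gives that $\DDET(3/4,1/4)$ is $\PP$-hard under polynomial-time many-one reductions. Combining these two facts yields that $\DDET(3/4,1/4)$ is $\PP$-complete, which establishes the existence claim with $c,s \in \Theta(1)$.

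There is essentially no obstacle at this stage of the argument, since all the substantive work has already been discharged in Lemmas~\ref{l:DDETinPP} and~\ref{l:DDETPPhard}. The only point requiring a moment of care is the \emph{compatibility} of the quantifiers: Lemma~\ref{l:DDETinPP} is stated for all $c,s$ with sufficiently large gap, whereas Lemma~\ref{l:DDETPPhard} is an existential statement providing particular constants. The theorem is correctly phrased as an existential claim (``there exist $c,s \in \Theta(1)$''), so one must merely verify that the constants from the hardness lemma fall within the range covered by the membership lemma --- which they trivially do, since a constant gap exceeds any inverse-exponential threshold. Thus the proof reduces to a single sentence invoking both lemmas for the same constants.

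\begin{proof}
	By Lemma~\ref{l:DDETPPhard}, there exist constants $c,s\in\Theta(1)$ (e.g.\ $c=3/4$, $s=1/4$) such that $\DDET(c,s)$ is $\PP$-hard. For these same constants, the gap $c-s$ is $\Theta(1)$ and hence at least inverse exponential in the input size, so Lemma~\ref{l:DDETinPP} gives $\DDET(c,s)\in\PP$. Combining these two facts, $\DDET(c,s)$ is $\PP$-complete.
\end{proof}
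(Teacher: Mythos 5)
Your proposal is correct and matches the paper's own (one-line) argument exactly: the paper states that Lemmas~\ref{l:DDETinPP} and~\ref{l:DDETPPhard} immediately yield the theorem, which is precisely the combination you spell out. Your extra care in checking that the constants $c=3/4$, $s=1/4$ from the hardness lemma satisfy the gap hypothesis of the membership lemma is a fine, if minor, elaboration.
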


\section{Bounding the {complexity} of \texorpdfstring{$\QSi_2$ and $\QSi_3$}{levels in the quantum hierarchy}}
\label{scn:fullyquantum}

In this section, we upper bound the complexity of the second and third levels of our fully quantum hierarchy. For brevity, we sometimes use shorthand $\QSi_2$ and $\QPi_2$ to refer to $\QSi_2(c,s)$ and $\QPi_2(c,s)$, respectively, for completeness and soundness parameters $c$ and $s$, respectively. We begin by restating Theorem~\ref{thm:infEXP} as follows.

\begin{theorem}
\label{thm:inEXP}
For any polynomial $r$ {and input size $n$}, if $c - s \geq 1/{2^{2^{r(n)}}}$, then
\begin{equation}
	\QSi_2(c,s) \subseteq \EXP
	\quad \text{ and } \quad
	\QPi_2(c,s) \subseteq \EXP
\end{equation}
{when $c$ and $s$ are computable in exponential time in $n$.
(Note for classes with small completeness-soundness gaps such as these, a gate set must be fixed\footnote{{The Solovay-Kitaev algorithm (see e.g.~\cite{DN06}) allows one to convert between gate sets in time scaling polylogarithmically in $1/\epsilon$ per gate, where $\epsilon$ is the desired approximation precision per gate. Thus, the setting of doubly exponentially small precision takes superpolynomial overhead to convert between gate sets, which is problematic for promise classes involving \emph{polynomial}-time uniform circuit families (such as $\QSi_2$ and $\QPi_2$). For this reason, in the small gap regime, one can ``circumvent'' the problem by fixing a gate set when defining the class.}}. However, this result is independent of a fixed gate set.)}
\end{theorem}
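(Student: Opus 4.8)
The plan is to cast the decision problem for $\QSi_2(c,s)$ as the $\max_{\rho_1}\min_{\rho_2}$ optimization of Equation~(\ref{eq:alpha1}) restricted to two proofs, and to show that this bilinear optimization can be solved to sufficient additive precision in exponential time. First I would write the acceptance probability as $\inner{C}{\rho_1\otimes\rho_2}$, where $C\in\pos{\X_1\otimes\X_2}$ is the POVM element for \textsf{ACCEPT} of the verifier $V_n$ on its (exponentially large) Hilbert space. The key observation is that for \emph{fixed} $\rho_1$, the inner optimization $\min_{\rho_2}\inner{C}{\rho_1\otimes\rho_2}$ is a linear functional of $\rho_2$ minimized over the (convex, compact) set of density operators; by linearity its optimum is attained at an extreme point, and in fact equals $\lambda_{\min}\bigl(\tr_{\X_1}[(\rho_1\otimes\I)C]\bigr)$, the smallest eigenvalue of the partial-trace-induced operator $M(\rho_1):=\tr_{\X_1}\!\bigl[(\rho_1\otimes\I_{\X_2})\,C\bigr]$ acting on $\X_2$. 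Thus
\begin{equation}
\alpha=\max_{\rho_1}\;\lambda_{\min}\bigl(M(\rho_1)\bigr),
\end{equation}
and since $M(\rho_1)$ is linear in $\rho_1$ and $\lambda_{\min}(\cdot)$ is concave, the outer problem is a \emph{concave} maximization over the density operators $\rho_1$. This is precisely the structure of a semidefinite program: introduce a scalar variable $t$ and maximize $t$ subject to $M(\rho_1)\succeq t\I$, $\tr(\rho_1)=1$, $\rho_1\succeq 0$. I would write this SDP and its dual explicitly and verify strong duality (e.g.\ via a Slater point, taking $\rho_1$ maximally mixed).

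Next I would bound the size of the SDP and invoke the Ellipsoid Method (as flagged in the abstract). The matrices involved act on $2^{p(n)}$-dimensional spaces, so the SDP has exponentially many variables and exponential-size constraint matrices; crucially, however, its bit-complexity is still ``only'' exponential, and the entries of $C$ are efficiently computable because $C$ is the explicit circuit operator of the polynomial-time-uniform family $\{V_n\}$. I would cite the standard fact (e.g.\ Grötschel--Lovász--Schrijver) that an SDP can be approximated to additive error $\epsilon$ in time polynomial in the dimension, the bit-length of the data, and $\log(1/\epsilon)$; applied here with dimension $2^{\poly(n)}$ and $\epsilon$ chosen so that $\log(1/\epsilon)$ absorbs the doubly-exponentially small gap, this yields an overall $\exp(n)$ running time. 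Computing $\alpha$ to additive precision $\epsilon < (c-s)/2$ then suffices to decide between the \textsf{YES} case ($\alpha\geq c$) and \textsf{NO} case ($\alpha\leq s$), placing $\QSi_2(c,s)\subseteq\EXP$. The $\QPi_2$ containment follows by the symmetric argument, swapping the roles of $\max$ and $\min$ (or, equivalently, solving the dual), which is again an SDP of the same size.

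The main obstacle is the precision/running-time bookkeeping in the regime $c-s\geq 2^{-2^{r(n)}}$. To certify the answer one must solve the SDP to additive accuracy finer than the gap, i.e.\ $\epsilon\approx 2^{-2^{r(n)}}$, so $\log(1/\epsilon)\approx 2^{r(n)}$ is itself exponential; I must check carefully that the Ellipsoid Method's running-time dependence on $\log(1/\epsilon)$, combined with the exponential dimension, still yields total time $2^{\poly(n)}$ rather than doubly-exponential time. This requires confirming that the Ellipsoid iteration count scales \emph{polynomially} (not exponentially) in $\log(1/\epsilon)$ and that each separation-oracle call---which amounts to an eigenvalue/eigenvector computation on an exponential-size matrix, i.e.\ testing $M(\rho_1)\succeq t\I$---is itself doable in exponential time with the requisite precision. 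A secondary subtlety is ensuring the SDP data has a guaranteed facial/ball structure (an explicit enclosing ball for the feasible region and an inscribed ball witnessing Slater's condition) so that the polynomial-time guarantees of the Ellipsoid Method genuinely apply; the maximally mixed state furnishes a convenient interior point for this. Finally, I would note that the whole argument is manifestly \emph{gate-set independent}, since $C$ is determined by the circuit's unitary and the precision is handled inside the SDP solver rather than by gate approximation, matching the theorem's claim.
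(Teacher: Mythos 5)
Your proposal follows essentially the same route as the paper's proof: rewrite the inner minimization as $\lambda_{\min}\bigl(\tr_1[(\rho_1\otimes I)C]\bigr)$, lift to an SDP in $(\rho_1,t)$ with constraint $tI\preceq\tr_1[(\rho_1\otimes I)C]$, and run the Ellipsoid Method with explicit enclosing/inscribed balls (the paper likewise uses a scaled maximally mixed state as the Slater point) while checking that the $\poly(\log(1/\epsilon))$ dependence keeps the total time singly exponential even for $\epsilon\approx 2^{-2^{r(n)}}$. The only cosmetic difference is that the paper first replaces $C$ by an explicit rational approximation $C'$ with exponentially many bits of precision (since gate entries may be irrational) and handles $\QPi_2$ by complementation with $\coEXP=\EXP$ rather than by your symmetric convex-minimization argument, but these are equivalent in substance.
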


\begin{proof}
{It suffices to show the first containment for $\QSi_2$, the second containment holds by taking complements and noting that $\coEXP = \EXP$}.

Given a $\QSi_2$ instance, let its two proofs be denoted $\rho_1$ and $\rho_2$, with the former existentially quantified and the latter universally quantified. Let $\alpha$ be the maximum acceptance probability of a $\QSi_2(c, s)$ protocol, i.e. the special case of~(\ref{eq:alpha1}) such that
 \begin{equation}\label{eqn:simpler}
    \alpha := \max_{\rho_1} \min_{\rho_2} \;
    \inner{C}{\rho_1 \otimes \rho_2}
\end{equation}
for accepting POVM operator $C$.
We wish to decide in exponential time whether $\alpha \geq c$ or $\alpha \leq s$. Since the promise gap satisfies $c - s \geq 1/{2^{2^{r(n)}}}$, it suffices to approximate $\alpha$ within additive error (say) $\frac{1}{4}(c-s)$.
Hence, we show how to compute $\gamma\in\real$ such that $|\gamma - \alpha| \leq 1/(4 \cdot {2^{2^{r(n)}}})$ in exponential time.

Beginning with~(\ref{eqn:simpler}), note that we can write $C$ as
\begin{equation}
C = \tr_{\rm anc} \left[ (I\otimes \ketbra{0\cdots 0}{0 \cdots 0}_{\rm anc})V^\dagger_n(\ketbra{1}{1}_{\rm out}\otimes I)V_n (I\otimes \ketbra{0\cdots 0}{0 \cdots 0}_{\rm anc}) \right]
\end{equation}
for verification circuit $V_n$. By definition, $V_n$ is generated by a polynomial-time Turing machine, which we assume specifies $V_n$ via a sequence of gates from a  universal gate set $G$ (e.g., $\{$CNOT, H, T$\}$). Since we wish to proceed via numerical optimization techniques, we begin by computing a numerical approximation $C'$ to $C$. Specifically, in exponential time,
we can approximate each entry\footnote{This can be accomplished in exponential time as follows. Replace gate set $G$ with $G'$ by approximating each entry of each gate in $G$ using $2^{s(n)}$ bits of precision, for some sufficiently large, fixed polynomial $s$. Define $C'$ as $C$, except each use of a gate $U\in G$ is replaced with its approximation $U'\in G'$. Then, via the well-known bound $\snorm{U_m\cdots U_1-V_m\cdots V_1}\leq\sum_{i=1}^m\snorm{U_i-V_i}$ (for unitary $U_i,V_i$), it follows that $\snorm{C'-C}\in O(\poly(n)/(2^{2^{s(n)}})$, since $V_n$ contains $\poly(n)$ gates. Here, $\snorm{A}=\max_{\ket{\psi}} \norm{A\ket{\psi}}_2$ for unit vectors $\ket{\psi}$ denotes the spectral or operator norm. Finally, apply the fact that $\max_{i,j}\abs{A(i,j)}\leq \snorm{A}$ (p. 314 of~\cite{HJ90}).}
of $C$ using $2^{q(n)}$ bits of precision, for some polynomial $q$. Therefore, we have
\begin{equation} \label{bound1}
|\inner{C - C'}{\sigma_1 \otimes \sigma_2}|
\leq
\| C - C' \|_2 \| \sigma_1 \otimes \sigma_2 \|_2
\leq
\| C - C' \|_2
=
O\left(2^{2p(n)-2^{q(n)}}\right)
\end{equation}
for any density matrices $\sigma_1$, $\sigma_2$. (Recall $p(n)$ is the size of each proof, for some polynomial $p$.)
 Therefore, for sufficiently large polynomial $q$, we have that

\begin{equation}
	\alpha' := \max_{\rho_1} \ \min_{\rho_2} \ \{ \inner{C'}{\rho_1 \otimes \rho_2} : \tr(\rho_1) = \tr(\rho_2) = 1, \; \rho_1, \rho_2 \succeq 0 \}, \label{eq:approxalpha1}
	\end{equation}
satisfies $\abs{\alpha - \alpha'} \leq \frac{1}{8} \cdot 2^{-2^{r(n)}} \leq \frac{1}{8}(c-s)$.

	We now use SDP duality (in a manner reminiscent of LP solutions for the Chebyshev approximation problem, p.~293 of~\cite{BV04}) to rephrase \eqref{eq:approxalpha1} as an SDP. Suppose we fix a feasible $\rho_1$ and solve the inner optimization problem in~\eqref{eq:approxalpha1}. Then:
	\begin{equation}
	\alpha'(\rho_1) := \min_{\rho_2} \ \{ \inner{C'}{\rho_1 \otimes \rho_2} : \tr(\rho_2) = 1, \; \rho_2 \succeq 0 \}.
	\end{equation}
	We can rewrite
	$
	\inner{C'}{\rho_1 \otimes \rho_2}$
	as
	$\inner{\tr_{1}[(\rho_1 \otimes I)C']}{\rho_2}
	$
	where $\tr_1$ is the partial trace over the register that $\rho_1$ acts on. Additionally, as
	$ \tr_{1}[(\rho_1 \otimes I)C'] = \tr_{1}[(\rho_1^{1/2} \otimes I)C(\rho_1^{1/2} \otimes I)] $, this term
	is Hermitian and positive semidefinite.
	This implies that the best choice for $\rho_2$ is a rank-$1$ projector onto the eigenspace corresponding to the least eigenvalue.
	In other words,
	$\alpha'(\rho_1) = \lambda_{\min}(\tr_{1}[(\rho_1 \otimes I)C'])$
	where $\lambda_{\min}(X)$ denotes the least eigenvalue of the operator $X$.
	For fixed $\rho_1$, this minimum eigenvalue calculation can be rephrased via the dual optimization program for $\alpha'(\rho_1)$,
	\begin{equation}
	\alpha'(\rho_1) = \max_t \ \{ t : t I \preceq \tr_{1}[(\rho_1 \otimes I)C'] \}.
	\end{equation}
	Re-introducing the maximization over $\rho_1$, we hence obtain
	\begin{equation}
	\alpha' = \max_{\rho_1, t} \ \{ t : t I \preceq \tr_{1}[(\rho_1 \otimes I) C'], \; \tr(\rho_1) = 1, \; \rho_1 \succeq 0 \},
	\end{equation}
	which is a semidefinite program.
	
	With an SDP in hand, we now apply the Ellipsoid Method to obtain an estimate, $\gamma$, for $\alpha'$. Note that not all SDPs can be solved in polynomial time, as the runtime of the Ellipsoid Method depends in part on two parameters, $R$ and $\epsilon$, where $R$ is the radius of a ball (with respect to the Euclidean norm) containing the feasible region, and $\epsilon$ is the radius of a ball contained in the feasible region (see \cite{GLS93} for details). For this reason, we give an equivalent SDP which allows us to bound $R$ and $\epsilon$ as follows. First, relax the constraint ${\tr(\rho) = 1}$ to $\tr(\rho) \leq 1$. Second, replace $t$ with $t_1 - t_2$ where $t_1, t_2 \geq 0$. From context, we know $t$ is a probability, and so we have the implicit constraint $t \in [0,1]$. {Therefore}, we add redundant constraints $t_1, t_2 \leq 100$ without changing $\alpha'$. Thus, we have the following reformulation of $\alpha'$.
	\begin{equation}
	\alpha' = \max_{\rho_1, t_1, t_2} \ \{ t_1 - t_2 : (t_1 - t_2) I \preceq \tr_{1}[(\rho_1 \otimes I) C'], \; \tr(\rho_1) \leq 1, \; \rho_1 \succeq 0, \ t_1, t_2 \in [0, 100] \}. \label{eq:alpha2}
	\end{equation}
	
	We can now use the Ellipsoid Method to approximately solve this SDP in time that is exponential in $n$. We follow a similar analysis to~\cite{W09} and find a $\gamma$ such that $\abs{\gamma - \alpha'} \leq \epsilon$ in time
	\begin{equation} \label{runtime}
	\poly(\log(R),\log(1/\epsilon), n', m, J),
	\end{equation}
	for parameters $R$, $\epsilon$, $n'$, $m$, and $J$ defined as:
	\begin{itemize}
		\item $R$: This is equal to the maximum of $\| \rho_1 \oplus t_1 \oplus t_2 \|_2$ over all feasible $(\rho_1, t_1, t_2)$. Since we have
\begin{equation}
\| \rho_1 \oplus t_1 \oplus t_2 \|_2
\leq
\| \rho_1 \oplus t_1 \oplus t_2 \|_1
=
\tr(\rho_1 \oplus t_1 \oplus t_2)
=
\tr(\rho_1) + t_1 + t_2
\leq 201,
\end{equation}
for feasible $(\rho_1, t_1, t_2)$,
we can set $R = 201$.		

		\item $\epsilon$: This is the radius of a small ball contained in the feasible region. Specifically, $\epsilon$ is defined so that there exists feasible $(\rho_1, t_1, t_2)$ such that $(\rho_1, t_1, t_2) + (\sigma, \tau_1, \tau_2)$ is feasible for all $(\sigma, \tau_1, \tau_2)$ with $\| \sigma \oplus  \tau_1 \oplus \tau_2 \|_2 \leq \epsilon$.
Since we have 		
\begin{equation}
\epsilon \geq
\| \sigma \oplus \tau_1 \oplus \tau_2 \|_2
=
\|\sigma \|_2 + |\tau_1| + |\tau_2|
\geq
\max \{ \|\sigma \|_2, |\tau_1|, |\tau_2| \},
\end{equation}
we will use the more convenient bound
$\max \{ \|\sigma \|_2, \tau_1, \tau_2 \} \leq \epsilon$ for the analysis.
We choose the interior point
\begin{equation}
\rho_1 = \frac{1}{\dim(\rho_1)^2} I,
\quad
t_1 = 10,
\quad
t_2 = 20,
\end{equation}
and
\begin{equation}
\epsilon = \frac{1}{8} \cdot 2^{(-2^{r(n)})} \leq \frac{1}{8}(c-s).
\end{equation}
Note that this has the sufficiently small accuracy we require.

We now prove that $(\rho_1, t_1, t_2) + (\sigma, \tau_1, \tau_2)$ is feasible so long as
$\max \{ \| \sigma \|_2, \tau_1, \tau_2 \} \leq \epsilon$. One can check that for these values, we have
\begin{enumerate}
\item $\rho_1 + \sigma \succeq \rho_1 - \| \sigma \|_{\infty} I \succeq \rho_1 - \epsilon I \succeq 0$, \; (where we used $\| \sigma \|_2 \leq \epsilon$ implies $\| \sigma \|_{\infty} \leq \epsilon$),
\item $\tr(\rho_1 + \sigma)
	= \tr(\rho_1) + \tr(\sigma)
	\leq \tr(\rho_1) + \| \sigma \|_2 \| I \|_2
	\leq \frac{1}{\dim(\rho_1)} + \epsilon \sqrt{\dim(\rho_1)}
	\leq 1$
, \; (using the Cauchy-Schwarz inequality),
\item $t_1 + \tau_1 \in [0, 100]$ and $t_2 + \tau_2 \in [0, 100]$, \;
\item $( (t_1 + \tau_1) - (t_2 + \tau_2)) I \prec 0 \preceq \tr_{1} [ ( (\rho_1 + \sigma) \otimes I) C']$, \; (since $(t_1 + \tau_1) - (t_2 + \tau_2) < 0$ and $\rho_1 + \sigma \succeq 0$ as shown above).
\end{enumerate}
		\item $n'$: The dimension of $\rho_1 \oplus t_1 \oplus t_2$, which is equal to the sum of the dimensions, i.e.,  		
			$O(2^{p(n)})$.
		\item $m$: The dimension of the operators appearing in the constraints. Note, from~\eqref{eq:alpha2}, that constraint ${(t_1 - t_2) I \preceq \tr_{1}[(\rho_1 \otimes I)C]}$
		involves operators acting on a space of dimension $O(2^{p(n)})$. Moreover, there are only $3$ other inequality constraints: ${\tr(\rho) \leq 1}$ and $t_1, t_2 \leq 100$. Thus, $m = O(2^{p(n)})$.
		\item $J$: The maximum bit-length of the entries in $C'$,
		{which is $2^{q(n)}$, by definition}.
	\end{itemize}

	We conclude that the $\EXP$ protocol approximates $\alpha'$ via $\gamma$, which is correct up to an additive error of $\frac{1}{8}(c-s)$. Finally, if $\gamma \geq (c+s)/2$, we output YES (i.e. $x \in A_{\yes}$). Otherwise, we output NO (i.e. $x \in A_{\no}$).
\end{proof}

Using the power of non-determinism, we can also bound the {complexity} of $\QSi_3$ and $\QPi_3$.

\begin{theorem}\label{thm:inNEXP}
For any polynomial $r$ and input size $n$, if $c-s\geq 1/r(n)$, then
\begin{equation}
	\QMA(2) \subseteq \QSi_3 \subseteq \NEXP
	\quad \text{ and } \quad
	\mathrm{co} \textrm{-} \QMA(2) \subseteq \QPi_3 \subseteq \mathrm{co} \textrm{-} \NEXP,
\end{equation}
where all classes have completeness and soundness $c$ and $s$, respectively. Moreover, if we allow smaller gaps ({in principle, gaps which are at most inverse singly exponential in $n$ suffice}), such as $c - s \geq 1/{2^{2^{r(n)}}}$, then
\begin{equation}
	\QMA(2)(c,s) = \QSi_3(c,s) = \NEXP
	\quad \text{ and } \quad
	\mathrm{co} \textrm{-} \QMA(2) = \QPi_3(c,s) = \mathrm{co} \textrm{-} \NEXP.
\end{equation}
Here, we assume $c$ and $s$ are computable in {exponential time in $n$.
(Note for classes with doubly exponentially small completeness-soundness gaps, a gate set must be fixed. However, this result is independent of the choice of a fixed gate set.)}
\end{theorem}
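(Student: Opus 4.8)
The plan is to establish the four containments first in the inverse-polynomial gap regime and then bootstrap them to equalities once the gap is small. For the leftmost containments I would use the embedding noted after Definition~\ref{def:QSigmam}: a $\QMA(2)$ verifier taking proofs $\rho_1,\rho_3$ is simulated by a $\QSi_3$ verifier $\exists\rho_1\forall\rho_2\exists\rho_3$ that simply ignores the universally-quantified middle proof $\rho_2$, giving $\QMA(2)\subseteq\QSi_3$; the statement $\mathrm{co}\textrm{-}\QMA(2)\subseteq\QPi_3$ then follows by complementing. The crux is the upper bound $\QSi_3\subseteq\NEXP$.

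To prove $\QSi_3\subseteq\NEXP$, recall the optimal acceptance probability of a $\QSi_3$ protocol is
\[
\alpha=\max_{\rho_1}\min_{\rho_2}\max_{\rho_3}\inner{C}{\rho_1\otimes\rho_2\otimes\rho_3},
\]
and I must decide $\alpha\geq c$ versus $\alpha\leq s$. The idea is to peel off the outermost maximization non-deterministically: the $\NEXP$ machine guesses a description of $\rho_1$, after which $\min_{\rho_2}\max_{\rho_3}\inner{C}{\rho_1\otimes\rho_2\otimes\rho_3}$ is exactly a $\QPi_2$-style optimization against the fixed effective operator $C_{\rho_1}:=\tr_1[(\rho_1\otimes I\otimes I)C]$. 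I would then reuse the machinery behind Theorem~\ref{thm:inEXP} (the $\QPi_2\subseteq\EXP$ half): numerically approximate $C_{\rho_1}$ entrywise, cast the inner min-max as the same SDP, and solve it with the Ellipsoid Method to additive error below $\tfrac14(c-s)$ in exponential time, accepting iff the computed value exceeds $(c+s)/2$. Since $\rho_1$ acts on $p(n)$ qubits, guessing its $2^{2p(n)}$ entries to the required precision costs at most $2^{O(\poly(n))}$ bits, within the guess budget of $\NEXP$, and the subsequent SDP solve runs in $\EXP$, so the whole procedure is a valid $\NEXP$ computation.

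The main obstacle is correctness on \textsf{NO} instances, where the guessed $\rho_1$ ranges over \emph{all} guesses rather than just an optimal one, so a malformed guess could spuriously produce a large value. I would handle this in two steps. First, before running the SDP, the machine verifies in exponential time that the guessed matrix is Hermitian, has trace within an inverse-exponential tolerance of $1$, and has no eigenvalue below an inverse-exponential threshold $\delta$; any guess failing this check is rejected. A guess passing the check lies within trace distance $\eta=O(\dim\cdot\delta)$ of a genuine density matrix (zero out the small negative eigenmass and renormalize), so taking $\delta$ inverse-exponential makes $\eta$ as small as needed. Second, since $\inner{C}{\rho_1\otimes\rho_2\otimes\rho_3}$ is $1$-Lipschitz in $\rho_1$ in trace distance (as $\|C\|_{\infty}\le 1$), perturbing $\rho_1$ by $\eta$ moves the inner min-max value by at most $\eta$. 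Hence on a \textsf{YES} instance the discretization of an optimal $\rho_1^*$ has value $\geq c-\eta$ and is accepted, while on a \textsf{NO} instance every accepted guess corresponds to a true density matrix of value $\leq s$, giving computed value $\leq s+\eta+\tfrac14(c-s)<(c+s)/2$ and is rejected. Arranging $\eta$ plus the SDP error to stay below $\tfrac12(c-s)$ is precisely where the tolerance of Theorem~\ref{thm:inEXP} to doubly-exponentially small gaps is used, and the same bounds show the argument survives down to $c-s\geq 1/2^{2^{r(n)}}$. The co-statement $\QPi_3\subseteq\coNEXP$ is then immediate, since the complement of a $\QPi_3$ problem is a $\QSi_3$ problem, which we have just placed in $\NEXP$.

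Finally, for the equalities in the small-gap regime I would chain our containments with the result of~\cite{attila} that $\QMA(2)$ with inverse-(singly-)exponentially small gap already equals $\NEXP$. For $c-s\geq 1/2^{2^{r(n)}}$, which in particular covers the inverse singly-exponential gaps required by~\cite{attila}, we obtain $\NEXP=\QMA(2)(c,s)\subseteq\QSi_3(c,s)\subseteq\NEXP$, forcing $\QMA(2)(c,s)=\QSi_3(c,s)=\NEXP$; complementing yields $\mathrm{co}\textrm{-}\QMA(2)=\QPi_3(c,s)=\coNEXP$. The upper-bound direction here is exactly the $\QSi_3\subseteq\NEXP$ argument above, whose precision analysis was set up to tolerate such small gaps, so no additional work is needed.
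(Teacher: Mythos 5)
Your proposal is correct and follows essentially the same route as the paper: embed $\QMA(2)$ into $\QSi_3$ by ignoring the universally-quantified middle proof, non-deterministically guess a precision-bounded approximation of the outer proof $\rho_1$, hard-code it into the POVM element to reduce to the $\QPi_2$-style min-max SDP solved in $\EXP$ via the Ellipsoid Method as in Theorem~\ref{thm:inEXP}, and invoke Pereszl\'{e}nyi's result for the equalities in the small-gap regime. Your explicit verification that an arbitrary guess is (close to) a genuine density operator before running the SDP is a slightly more careful treatment of soundness than the paper's, which leaves that step implicit, but it does not change the argument.
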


\begin{proof}
{Again, we prove the statements involving $\QSi_3$ and the analogous statements involving $\QPi_3$ follow by taking complements.}

Consider the maximum acceptance probability of a $\QSi_3$ protocol,
\begin{equation}
\beta := \max_{\rho_1} \ \min_{\rho_2} \ \max_{\rho_3} \ \{ \inner{C}{\rho_1 \otimes \rho_2 \otimes \rho_3} : \tr(\rho_1) = \tr(\rho_2) = \tr(\rho_3) = 1, \; \rho_1, \rho_2, \rho_3 \succeq 0 \}
\end{equation}
where $C$ is the POVM element corresponding to the verifier accepting. 	
As in the proof of Theorem~\ref{thm:inEXP},
define $C'$ to be equal to $C$ where each entry is correct up to $2^{q(n)}$ bits of precision. Consider now the optimization problem
\begin{equation}
\beta' := \max_{\rho_1} \ \min_{\rho_2} \ \max_{\rho_3} \ \{ \inner{C'}{\rho_1 \otimes \rho_2 \otimes \rho_3} : \tr(\rho_1) = \tr(\rho_2) = \tr(\rho_3) = 1, \; \rho_1, \rho_2, \rho_3 \succeq 0 \}
\end{equation}
and note that $| \beta' - \beta | = O(2^{3p(n)-2^{q(n)}})$ by an argument similar to \eqref{bound1}.
	
Now, suppose we non-deterministically guess a value for the optimal $\rho_1^{\star}$ (which, recall, exists by compactness) by a matrix, each of whose entries is specified up to $2^{t(n)}$ bits of precision for some polynomial $t$. Call this approximation $\rho_1'$. We now consider the optimization problem
\begin{align}
\beta'' & := \min_{\rho_2} \ \max_{\rho_3} \ \{ \inner{C'}{\rho_1' \otimes \rho_2 \otimes \rho_3} : \tr(\rho_2) = \tr(\rho_3) = 1, \; \rho_2, \rho_3 \succeq 0 \} \\
& = \min_{\rho_2} \ \max_{\rho_3} \ \{ \inner{C''}{\rho_2 \otimes \rho_3} : \tr(\rho_2) = \tr(\rho_3) = 1, \; \rho_2, \rho_3 \succeq 0 \}
\end{align}
where $C'' := \tr_1 [(\rho_1' \otimes I \otimes I)C']$
is the matrix which hardcodes $\rho'_1$ into $C'$.
We now bound $| \beta' - \beta'' |$.
For any density operators $\sigma_2$ and $\sigma_3$, we have
\begin{align}
|\inner{C'}{(\rho_1^{\star} - \rho_1') \otimes \sigma_2 \otimes \sigma_3}|
& \leq
\| C' \|_2 \| \rho_1^{\star} - \rho_1' \|_2 \\
\| C' \|_2
& \leq
\| C' - C \|_2 + \| C \|_2
\leq
2 \| I \|_2
=
O(2^{\frac{3 p(n)}{2}}) \\
\| \rho_1^{\star} - \rho_1' \|_2
& = O(2^{p(n)-2^{t(n)}}).
\end{align}
Thus we have
$| \beta' - \beta'' | = O(2^{{\frac{5 p(n)}{2}}-2^{t(n)}})$ and therefore we can choose $q$ and $t$ such that
\begin{equation}
| \beta - \beta'' | \leq | \beta - \beta' | + | \beta' - \beta''| \leq \frac{1}{8} \cdot 2^{-2^{r(n)}} \leq \frac{1}{8} (c-s)
\end{equation}
as before.

Since $C''$ still has exponential bit-length, is positive semidefinite, and can be computed in non-deterministic exponential time, we can repeat the arguments from Theorem~\ref{thm:inEXP} to find $\gamma$ in exponential time such that $|\beta'' - \gamma| \leq \frac{1}{8} (c-s)$.
Thus, if $\gamma \geq \frac{1}{2}(c+s)$ we output YES (i.e., $x \in \ayes$). Otherwise, we output NO (i.e., $x \in \ano$).
This yields
	\begin{equation} \label{NEXPlemma}
	\QSi_3 \subseteq \NEXP,
	\end{equation}
	even when $\QSi_3$ has small gap.
	The rest of the theorem holds by a result of
	Pereszl\'{e}nyi~\cite{attila}, who proved that the equality
	$
		\QMA(2)(c,s) = \NEXP
	$
	holds {when $c - s \geq 1/2^{2^r}$ for polynomial $r$}.
	Combining this with the fact that $\QMA(2) \subseteq \QSi_3$ and \eqref{NEXPlemma} finishes the proof.
\end{proof}

\section{A Karp-Lipton type theorem}
\label{sec:karp-lipton}

The Karp-Lipton~\cite{KL80} theorem showed that if $\NP\subseteq \Ppoly$ (i.e. if \NP\ can be solved by polynomial-size non-uniform circuits), then $\sigmatwo=\pit$ (which in turn collapses PH collapses to its second level). Then, building on the conjecture that the polynomial hierarchy is infinite, this result implies that $\NP \not\subset \Ppoly$ (a stronger claim than $\Po \neq \NP$ as $\Po \subseteq \Ppoly$). Some attempts to separate $\NP$ from $\Po$ use this as a basis to try and prove the stronger claim instead. For instance, this has lead to the approach of proving super-polynomial circuit lower bounds for circuits of \NP-complete problems. Here, we show that the proof technique used by Karp and Lipton carries over directly to the quantum setting, \emph{provided} one uses the stronger hypothesis
\begin{equation}
	\PrQCMA\subseteq \Bpoly
\end{equation}
(as opposed to $\QCMA\subseteq \Bpoly$). Whether this causes \QCPH\ to collapse to its second level, however, remains open (see Remark~\ref{rem:collapse} below). We begin by formally defining \Bpoly\ and \PrQCMA.

\begin{definition}[$\Bpoly$] \label{def:Bpoly}
	A promise problem $\Pi=(\ayes,\ano)$ is in $\Bpoly$ if there exists a polynomial-sized family of quantum circuits
	$\{C_n\}_{n \in \mathbb{N}}$ and a collection of binary advice strings $\{a_n\}_{n \in \mathbb{N}}$ with $|a_n| = \poly(n)$, such that for all $n$ and all strings $x$ where $|x| = n$, $\Pr[C_n(\ket{x}, \ket{a_n}) = 1] \geq 2/3$ if $x \in \ayes$ and $\Pr[C_n(\ket{x}, \ket{a_n}) = 1] \leq 1/3$ if $x \in \ano$.
\end{definition}

\noindent Equivalently, $\Bpoly$ is the set of promise problems solvable by a \emph{non-uniform} family of poly-sized bounded error quantum circuits. It is used as a quantum analogue for $\Ppoly$ in this scenario. Here, we remark on the use of $\mpoly$ instead of $\poly$ in Definition~\ref{def:Bpoly}. Note that $\BQP_{/\poly}$ accepts Karp-Lipton style advice i.e. it is a $\BQP$ circuit that accepts a poly-sized advice string to provide \emph{some answer} with probability at least $2/3$ even if the ``advice is bad''. On the other hand, $\Bpoly$ accepts Merlin style advice i.e. it is a $\BQP$ circuit accepting poly-sized classical advice such that the output is correct with probability at least $2/3$ if the ``advice is good''. Note $\BQP_{/\poly}$ versus $\Bpoly$ is analogous to the ``$\exists\boldsymbol{\cdot}\BPP$ versus \MA'' phenomenon. Moreover, as we are concerned with variations of $\QCMA$, and not $\exists\boldsymbol{\cdot}\BQP$, $\Bpoly$ is the right candidate for us.

\begin{definition}[$\PrQCMA$]\label{def:PrQCMA}
    A promise problem $A=(\ayes,\ano)$ is said to be in $\PrQCMA(c,s)$ for polynomial-time computable functions $c,s:\natural\mapsto[0,1]$ if there exists polynomially bounded functions $p,q:\natural\mapsto\natural$ such that $\forall \ell \in \natural, \ c(\ell) - s(\ell) \geq 2^{-q(\ell)}$, and there exists a polynomial-time uniform family of quantum circuits $\{V_n\}_{n \in \natural}$ that takes a classical proof ${y}\in \set{0,1}^{p(n)}$ and outputs a single qubit. Moreover, for an $n$-bit input $x$:
    \begin{itemize}
     \item Completeness: If $x\in \ayes$, then $\exists\ y$ such that $V_n$ accepts $y$ with probability at least $c(n)$.
     \item Soundness: If $x\in \ano$, then $\forall\ y$, ${V_x}$ accepts $y$ with probability at most $s(n)$.
    \end{itemize}
    Define $\PrQCMA=\bigcup_{c,s} \PrQCMA(c,s)$.
\end{definition}

\begin{obs}
	\label{obs:prqcma1}
	The proof of Theorem~\ref{QCMAPC} and footnote 2 in~\cite{JKNN12} show that by choosing an appropriate universal gate set (e.g. Hadamard, Toffoli, NOT), one has that
	\begin{equation}
		\PrQCMA=\PrQCMA(1,1-1/\exp(n)).
	\end{equation}
\end{obs}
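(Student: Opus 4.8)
The plan is to prove the two inclusions separately. The direction $\PrQCMA(1,1-1/\exp(n)) \subseteq \PrQCMA$ is immediate: a verifier with perfect completeness and soundness $1-1/\exp(n)$ has promise gap $1/\exp(n) \geq 2^{-p(\ell)}$ for a suitable polynomial $p$, so it satisfies Definition~\ref{def:PrQCMA} directly. The content is the reverse inclusion $\PrQCMA \subseteq \PrQCMA(1,1-1/\exp(n))$, and for this I would reuse the perfect-completeness construction of~\cite{JKNN12} underlying Theorem~\ref{QCMAPC} as a white-box, the only change being that the incoming promise gap is inverse exponential rather than inverse polynomial.

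Concretely, start from a $\PrQCMA(c,s)$ verifier $V_n$ with classical proof $y$ and gap $c-s \geq 2^{-q(n)}$. First I would fix the gate set of Observation~\ref{obs:gate} (Hadamard, Toffoli, NOT), so that on any proof $y$ the acceptance probability is a rational $p_y = k_y/N$ with $N = 2^{\ell(n)}$, $k_y$ an integer, and $\ell$ polynomially bounded; I would take the decomposition precise enough that $\ell(n) \geq q(n)$, which guarantees $(c-s)N \geq 2^{\ell(n)-q(n)} \geq 1$ and hence the existence of an integer threshold $T$ with $sN < T \leq cN$. Following~\cite{JKNN12}, the new verifier receives the augmented classical proof $(y,k)$ (with $k$ a claimed integer) and: (i) rejects outright if $k < T$; and (ii) otherwise flips a fair coin, on one branch running $V_n$ on $y$ and on the other accepting with probability $1 - k/N$. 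For the \emph{honest} proof $(y, k_y)$ this yields acceptance probability $\tfrac12(k_y/N) + \tfrac12(1-k_y/N) = 1/2$ \emph{exactly}, independent of $k_y$; since a \textsf{YES} proof has $k_y \geq cN \geq T$ it survives the threshold, so completeness becomes exactly $1/2$. In the \textsf{NO} case, every surviving proof has $k \geq T > sN$ while $k_y \leq sN$, giving acceptance $\tfrac12 + (k_y-k)/(2N) \leq \tfrac12 - (c-s)/2$, i.e.\ below $1/2$ by an inverse-exponential margin $\delta := (c-s)/2 \geq 2^{-(q(n)+1)}$.

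Finally I would apply Watrous's quantum rewinding technique~\cite{W09_3}, exactly as in~\cite{JKNN12}, to the honest proof whose acceptance is exactly $1/2$, boosting it to exactly $1$ and thereby achieving perfect completeness. The main obstacle—and the one genuinely quantitative step—is tracking soundness through this amplification: since the \textsf{NO}-case acceptance before rewinding is $1/2-\delta$ and the procedure that sends $1/2$ to $1$ maps a value $\tfrac12-\delta$ to roughly $4(\tfrac12-\delta)(\tfrac12+\delta) = 1 - 4\delta^2$, the resulting soundness is $1 - \Omega((c-s)^2) = 1 - 1/\exp(n)$, rather than the $1-1/\poly(n)$ obtained in the ordinary $\QCMA$ setting of Theorem~\ref{QCMAPC}. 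This inverse-exponential (as opposed to doubly-exponential) degradation is precisely what the statement asserts, and verifying that the gap propagates with only this quadratic loss is the point requiring care. Since the whole construction keeps the proof $(y,k)$ classical, the final object is a legitimate $\PrQCMA(1, 1-1/\exp(n))$ verifier, completing the argument.
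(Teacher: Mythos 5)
Your proposal is correct and matches the paper's intended argument: the paper gives no explicit proof of this observation, simply invoking the perfect-completeness construction of~\cite{JKNN12} (threshold test on a claimed $k$, the balanced coin-flip making the honest acceptance exactly $1/2$, then quantum rewinding) as a white box, which is exactly what you carry out, including the key quantitative point that the quadratic loss from rewinding keeps the soundness gap inverse exponential. One small arithmetic quibble: with only $sN < T \le cN$ guaranteed, a surviving \textsf{NO}-proof need only satisfy $k - k_y \ge 1$, giving acceptance at most $\tfrac12 - \tfrac{1}{2N} = \tfrac12 - 2^{-\ell(n)-1}$ rather than your claimed $\tfrac12 - (c-s)/2$; this margin is still inverse exponential, so the final soundness $1 - 1/\exp(n)$ is unaffected.
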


As an aside, note that $\QCMA$ is defined with $c - s \in \Omega(1/\poly(n))$. Recall from the discussion in Section~\ref{sec:results} that the main obstacle to the recursive arguments that work well for \NP\ in~\cite{KL80} is the ``promise problem'' nature of $\QCPi_2$ and \QCMA. However, exploiting the perfect completeness of \PrQCMA\
and the fact that $\forall s\leq s' <c, \; \PrQCMA(c, s) \subseteq \PrQCMA(c, s')$, we "recover'' the notion of a decision problem in a rigorous sense by working with \PrQCMA\ as demonstrated below.

\begin{claim}\label{clm:decision}
For every promise problem $\Pi' = (\ayes, \ano) \in \PrQCMA(c, s)$ with verifier $V'$, there exists a verifier $V$ (a poly-time uniform quantum circuit family), a polynomial $q$ and a decision problem $\Pi=(\ayes,\set{0,1}^*\setminus\ayes)$ such that $\Pi \in \PrQCMA(1,1-2^{-q(n)})$ with verifier $V$. Moreover, for all proofs $y$, $V$ accepts $y$ with probability either $1$ or at most $1-2^{-q(n)}$.
\end{claim}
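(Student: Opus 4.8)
The plan is to obtain $V$ directly from the perfect-completeness machinery of~\cite{JKNN12}, and then to read off the ``decision'' structure from the \emph{rationality} of the acceptance probabilities. Concretely, starting from the verifier $V'$ for $\Pi' \in \PrQCMA(c,s)$, I would first invoke Observation~\ref{obs:prqcma1} (the \PrQCMA\ analogue of Theorem~\ref{QCMAPC}), which applied to $\Pi'$ produces a poly-time uniform family $V = \{V_n\}$ witnessing $\Pi' \in \PrQCMA(1, 1 - 2^{-q(n)})$ for some polynomial $q$. The essential point is that this construction fixes a universal gate set (Hadamard, Toffoli, NOT), so that by Observation~\ref{obs:gate}, for every input $x$ and \emph{every} proof $y$ the acceptance probability of $V$ is a dyadic rational of the form $k/2^{q(n)}$ with $k \in \{0,\dots,2^{q(n)}\}$. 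I would enlarge $q$ if necessary so that a single polynomial simultaneously controls the soundness gap and the bit-length of these probabilities.

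The decision structure is then immediate from this dyadic form. For any proof $y$, the acceptance probability $k/2^{q(n)}$ equals $1$ exactly when $k = 2^{q(n)}$, and is otherwise at most $(2^{q(n)}-1)/2^{q(n)} = 1 - 2^{-q(n)}$; there is simply no representable value strictly between $1-2^{-q(n)}$ and $1$. This yields the ``Moreover'' clause verbatim: for all $y$, $V$ accepts $y$ with probability either $1$ or at most $1-2^{-q(n)}$, and no proof is ever accepted inside the gap.

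It remains to package this as a genuine decision problem. I would take the YES set of $\Pi$ to be $\{\, x : \exists\, y \text{ s.t.\ } V \text{ accepts } (x,y) \text{ with probability } 1 \,\}$ and declare every other string a NO instance, so that $\Pi = (\ayes,\set{0,1}^*\setminus\ayes)$ with $\ayes$ interpreted as this (well-defined) set. Perfect completeness of $V$ guarantees that $\ayes$ contains every original YES instance, while the soundness bound $1-2^{-q(n)}$ guarantees that no original NO instance admits a probability-$1$ proof; hence $\Pi$ agrees with $\Pi'$ on its promise and, by the previous paragraph, lies in $\PrQCMA(1, 1-2^{-q(n)})$ with verifier $V$.

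The step I expect to carry the real weight is the one exploiting rationality: the entire ``decision problem'' phenomenon hinges on the acceptance probabilities living on the dyadic grid $\{k/2^{q(n)}\}$, which forces the automatic gap just below $1$ that classifies \emph{every} proof --- including ``junk'' proofs and proofs on promise-violating inputs --- as either perfectly accepting or clearly rejecting. This is precisely why fixing the gate set of Observation~\ref{obs:gate} is not cosmetic, and why the \emph{white-box} use of~\cite{JKNN12} matters: the perfect-completeness construction there requires the relevant proof to be existentially quantified, which is exactly the (single) quantifier available in the \PrQCMA\ setting, so none of the ``fixing a proof'' obstructions discussed in Section~\ref{sec:intro} arise here.
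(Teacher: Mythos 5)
Your proposal is correct and follows essentially the same route as the paper's own proof: invoke Observation~\ref{obs:prqcma1} for perfect completeness, then use the dyadic grid of acceptance probabilities from Observation~\ref{obs:gate} to force every non-unit acceptance probability down to at most $1-2^{-q(n)}$ (your ``enlarge $q$'' step plays the same role as the paper's case analysis on whether $p(n)\geq q(n)$ or vice versa). Your explicit definition of the new \textsf{YES} set as $\{x : \exists y \text{ accepted with probability } 1\}$ is a welcome clarification of a point the paper leaves implicit, but it is not a different argument.
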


\begin{proof}By Observation~\ref{obs:prqcma1}, we may assume
	\begin{equation}
	\PrQCMA(c,s)=\PrQCMA(1,1-2^{-p(n)})
	\end{equation}
	for some polynomial $p$. Let $\Pi = (\ayes, \ano) \in \PrQCMA(1, 1-2^{-p(n)})$ be a promise problem with verifier $V$. The concern is that for $x\in \ayes$, there may exist a proof $y$ accepted by $V$ with probability in $(1,1-2^{-p(n)})$. By Observation~\ref{obs:gate}, we may assume the acceptance probabilities of $V$ are integer multiples of $2^{-q(n)}$ for some polynomial $q$. Since $p$ and $q$ are polynomials, there exists $n_0\geq 0$ such that $\forall n\geq n_0$, either $p(n)\geq q(n)$ or vice versa. Thus, updating the soundness parameter to $1-2^{-p(n)}$ in the former case and to $1-2^{-q(n)}$ in the latter case ensures that no proofs are accepted by $V$ in the promise gap for sufficiently large $n$. This yields the second claim of the observation. The first claim now also follows, since if no proofs are accepted in the promise gap, then certainly the optimal proof is also not accepted in the gap.
\end{proof}

Note that the same process fails to map a promise problem $\Pi' \in \QCMA(1, s)$ to a corresponding decision problem $\Pi \in \QCMA(1, s')$ where $ 1 < s' \leq s$. As shown above, $s'$ could very well be exponentially close to $1$, which would violate the requirement, by definition, for $\QCMA$ that the promise gap should be an inverse polynomial function in the input size.

Building on this "decision problem'' flavour of $\PrQCMA$, we first show:

\begin{lemma}
\label{lem:ckt-eq}
Suppose $\PrQCMA \subseteq \Bpoly$. Then, for every promise problem $\Pi=(\ayes,\ano)$ in $\PrQCMA$ and every $n$-bit input $x$, there exists a polynomially bounded function $p:\natural \mapsto \natural$ and a bounded error polynomial time non-uniform quantum circuit family $\set{C_n}_{n \in \natural}$ such that:
\begin{itemize}
\item {if $x \in \ayes$, then $C_n$ outputs valid proof $y \in \set{0, 1}^{p(n)}$ such that $(x, y)$ is accepted by the corresponding $\PrQCMA$ verifier with probability $1$;}
\item {if $x \in \ano$, then $C_n$ outputs a symbol $\bot$ with probability exponentially close to $1$ signifying that there is no $y \in \set{0, 1}^{p(n)}$, such that $(x, y)$ is accepted by the corresponding $\PrQCMA$ verifier with probability $1$.}
\end{itemize}
\end{lemma}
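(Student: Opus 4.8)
The plan is to carry out the standard Karp--Lipton \emph{search-to-decision} (witness-finding) reduction, while staying inside the ``decision problem'' regime afforded by Claim~\ref{clm:decision}. First I would use Claim~\ref{clm:decision} together with Observation~\ref{obs:prqcma1} to replace $\Pi$ by an equivalent $\PrQCMA(1,1-2^{-q(n)})$ problem whose verifier $V$ accepts every proof with probability either exactly $1$ or at most $1-2^{-q(n)}$; in particular no proof lands in the promise gap. Next I would define a \emph{prefix} promise problem $\Pi'$ over inputs $\langle x,z\rangle$, where $z\in\set{0,1}^{\le p(|x|)}$, declaring $\langle x,z\rangle$ a \textsf{YES} instance exactly when some completion $y'$ makes $V$ accept $(x,zy')$ with probability $1$, and a \textsf{NO} instance otherwise. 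The verifier for $\Pi'$ takes the remaining bits $y'$ as its proof and simulates $V$ on $(x,zy')$. Because $V$ has perfect completeness and its gap structure forces that whenever no completion is accepted with probability $1$ then \emph{every} completion is accepted with probability at most $1-2^{-q(n)}$, the problem $\Pi'$ is again a \emph{total} problem lying in $\PrQCMA(1,1-2^{-q(n)})$.

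Applying the hypothesis $\PrQCMA\subseteq\Bpoly$ to $\Pi'$ then yields a poly-size non-uniform quantum circuit family $\{D_m\}$ with advice $\{b_m\}$ deciding $\Pi'$ with bounded error; a routine parallel-repetition-and-majority amplification (which leaves the advice unchanged) drives the error down to $2^{-\poly(n)}$. I would then build $C_n$ to reconstruct a witness one bit at a time: starting from the empty prefix, at step $k$ query the amplified $D$ on $\langle x,z0\rangle$ to decide whether the current prefix extends (through $0$) to a completion accepted with probability $1$, and set the next bit accordingly (in the \textsf{YES} case, if $z0$ fails then the accepting witness must continue with $1$, so $z1$ succeeds). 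After $p(n)$ steps this produces a full string $y$. The very first query, on the empty prefix, tests whether $x\in\ayes$ at all, so if it returns \textsf{NO} we immediately output $\bot$. The relevant advice strings $b_m$ (one per input length $|\langle x,z\rangle|$, of which there are at most $p(n)+1$, each of length $\poly(n)$) are hardcoded into $C_n$, so the total advice remains polynomial and depends only on $n$, and $C_n$ is a bounded-error poly-time non-uniform quantum circuit.

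For correctness I would union-bound over the $O(p(n))$ queries: each is answered correctly with probability $1-2^{-\poly(n)}$, so with probability exponentially close to $1$ every query is correct. In the \textsf{YES} case this means the reconstructed $y$ is a genuine completion accepted by $V$ with probability $1$; in the \textsf{NO} case the first query reports \textsf{NO} and $C_n$ outputs $\bot$, as required.

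The main obstacle --- and the reason the statement is phrased with $\PrQCMA$ rather than $\QCMA$ --- is ensuring that the prefix problem $\Pi'$ genuinely respects the $\PrQCMA$ promise, i.e.\ that no prefix instance has its best completion accepted with probability in the gap. Were that to occur, the $\Bpoly$ circuit for $\Pi'$ could return an arbitrary answer on such an instance (the ``$\exists\boldsymbol{\cdot}\BPP$ versus \MA'' phenomenon), derailing the bit-by-bit search. This is exactly what Claim~\ref{clm:decision} rules out: the perfect completeness and the ``every proof accepted with probability $1$ or at most $1-2^{-q(n)}$'' structure force $\Pi'$ to be total, which is the crux that makes the otherwise-standard reduction go through.
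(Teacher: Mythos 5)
Your proposal is correct and follows essentially the same route as the paper's proof: invoke Claim~\ref{clm:decision} to eliminate proofs in the promise gap, apply the $\PrQCMA\subseteq\Bpoly$ hypothesis to the resulting (total) prefix/decision problems, and recover a witness bit-by-bit via SAT-style self-reducibility, outputting $\bot$ if the initial membership test fails. Your explicit formulation of the prefix promise problem and the union bound over the $O(p(n))$ amplified queries is just a slightly more careful write-up of the same argument.
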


\begin{proof}
	To begin, recall from Claim~\ref{clm:decision} that we may assume that a given promise problem $\Pi$ in \PrQCMA\ has (a) completeness/soundness parameters $(1,1-2^{-q(n)})$ for a polynomial $q$ and (b) a verifier $V_n$ for an $n$-bit input $x$ which accepts no proofs with probability in the promise gap. Since
	$\PrQCMA\subseteq \Bpoly$,
	by assumption, there exists a non-uniform polynomial-size quantum circuit family $\set{C'_n}_{n \in \natural}$ that accepts $x$ as advice such that for any $x\in\ayes$, $C'_n$ accepts with probability at least $2/3$ and rejects with probability $2/3$ otherwise. By using standard parallel repetition, we may assume without loss of generality that $C'_n$ accepts or rejects the corresponding cases with probability at least $1-2^{-p(n)}$ for some polynomial $p$. Now, we would like to construct a \Bpoly\ circuit $C_n$ that uses the input $x$ and a description of $C'_n$ as polynomial-sized advice and outputs a valid proof $y$ such that $V_n$ accepts $(x, y)$ with probability $1$.
	
	The construction of $\set{C_n}$ is now as follows. $C_n$ first runs $C'_n$ using $x$ to check if $x\in \ayes$; if not, it rejects and outputs $\bot$. To find a proof $y$, we now use standard self-reducibility ideas from SAT, coupled with the crucial Claim~\ref{clm:decision}. Specifically, fix $y_{1} = 0$ (i.e. the first bit of $y$) to obtain a new circuit $C'_{n,1}$, run $C'_{n,1}$ on $(x,y_1)$ and record its answer $z_1\in\set{0,1}$. Since no proofs are accepted in the gap as per Claim~\ref{clm:decision}, $C'_{n,1}$ is a valid \PrQCMA\ machine (i.e. satisfying the promise of the completeness/soundness parameters). Thus, with high probability, if $z_1=1$ there is an accepting proof for $x$ whose first bit is $0$ and if $z_1=0$, there is a proof with the first bit set to $1$. Hence, we can fix $y_1$'s accordingly. Iterating this process successively for all remaining bits of $y$ yields the claim.
\end{proof}

We next give a quantum-classical analogue of the Karp-Lipton theorem.

\begin{reptheorem}{thm:QCKL}
[A  Quantum-Classical Karp-Lipton Theorem]
If $\PrQCMA \subseteq \Bpoly$ then $\QCPi_2 = \QCSi_2$.
\end{reptheorem}

\begin{proof}
	We essentially follow the proof of the original Karp-Lipton theorem, coupled with careful use of Observation~\ref{obs:gate}. To show $\QCPi_2 = \QCSi_2$, it suffices to show that $\QCPi_2 \subseteq \QCSi_2$. To see this, consider promise problem $A \in \QCSi_2$. Now, $\bar{A}$ (the complement of $A$) is in $\QCPi_2$ by definition. However, if $\QCPi_2 \subseteq \QCSi_2$, then $\bar{A} \in \QCSi_2$,  which in turn implies by definition that $A \in \QCPi_2$, as desired.
	
	To show $\QCPi_2 \subseteq \QCSi_2$, let $A = (\ayes, \ano)$ be a $\QCPi_2$ problem.
	As $\QCPi_{2}$ has perfect completeness from Result~\ref{res:err}, there exist polynomials $p, r$ and a polynomial time uniform family of quantum circuits $\{V_i\}_{i \in \mathbb{N}}$ that take as input a string $x \in \{0, 1\}^{n}$ for some $n \in \natural$, two classical proofs $u, v \in \{0, 1\}^{p(n)}$, and outputs a single qubit such that:
	\begin{align}
	x \in \ayes \quad & \mathrel{{\Rightarrow}} \quad \forall u \; \exists v  \; \pr[V_{n}(x, u, v) = 1] = 1, \label{eq:3c} \\
	x \in \ano \quad & \mathrel{{\Rightarrow}} \quad \exists u \; \forall v \;  \pr[V_{n}(x, u, v) = 1] \leq \frac{1}{2^{r(n)}}. \label{eq:4c}
	\end{align}
	
	Let us now highlight the difficulty in proving the claim for \QCMA\ instead of \PrQCMA. Specifically, if we fix the first proof $u$, what we would ideally require is that the resulting existentially quantified computation over $v$, denoted $M_u$\footnote{{Notice that $M_u$ is the remnant of the verifier circuit obtained from $V_n$ when the first proof register is loaded with $u$. In a slight abuse of notation, $M_u$ will be referred to both as a computation and as a circuit.}}, is in \QCMA. Indeed, if $x\in\ayes$, then for any fixed $u$, there exists a $v$ causing $M_u$ to accept with certainty. The problem arises when $x\in \ano$, in which case, we require that for all $v$, $M_u$ accepts with probability at most $s$ for some soundness parameter $s$ inverse polynomially gapped away from $1$. Unfortunately, the definition of $\QCPi_2$ only ensures this holds for \emph{some} $u$, and not necessarily \emph{all} $u$. To circumvent this, we use  Observation~\ref{obs:gate}, which implies we may assume $M_u$'s acceptance probabilities are given by rational numbers with $\poly(n)$ bits of precision (assuming an appropriate universal gate set is used). It follows that if $M_u$ does not accept some $v$ with probability $1$, then it must reject $v$ with probability at least $1-2^{-q(n)}$ for some efficiently computable polynomial $q$. Thus, by definition $M_u$ is a $\PrQCMA(1,1-2^{-q(n)})$ computation, to which we may now apply our hypothesis that $\PrQCMA\subseteq \Bpoly$. (Note: There is a subtle point here --- the precise choice of $q$ depends on the length of circuit $M_u$, which in turn depends on the Hamming weight of $u$, since we can simulate ``fixing'' $u$ by adding appropriate Pauli $X$ gates to our circuit. Nevertheless, it is trivial to choose a polynomial $q$ which provides sufficient precision in our rational approximation in order to accommodate the fixing of proofs $u$ of any Hamming weight.)
	
	Since for any fixed $u$, $M_u$ denotes a $\PrQCMA$ computation, our assumption says that there exists a non-uniform family of polynomial-sized bounded-error quantum circuits $\set{Q'_n}_{n \in \natural}$ that accepts $(x, u)$ and a description of $M_u$ as advice and outputs a bit such that:
	\begin{itemize}
		\item if there exists a proof $v$ such that $M_u$ accepts $(x,v)$ with probability $1$, then $Q'_n$ accepts $(x, u, M_u)$ with probability at least $2/3$, and
		\item if for all proofs $v$, $M_u$ accepts $(x,v)$ with probability at most $1-2^{-q(n)}$, then $Q'_n$ accepts $(x, u, M_u)$ with probability at most $1/3$.
	\end{itemize}
	\noindent Crucially, the set $\set{Q'_n}$ is non-uniform, and thus $Q'_n$ depends only on $n$, not the choice of $x$ or $u$.
	
	Continuing, from Lemma~\ref{lem:ckt-eq}, we now conclude there exists a bounded-error polynomial time non-uniform quantum circuit family $\set{Q_n}_{n \in \natural}$ which, whenever $x\in\ayes$, outputs a proof $v$ which $M_u$ accepts with certainty. For clarity, note that $Q_n$ accepts $(x, u)$ and a description of $M_u$ as advice and outputs a string $v \in \set{0, 1}^{p(n)}$. Suppose $Q_n$ outputs the correct answer with probability at least $1-2^{-s(n)}$ for some polynomial $s$, as per Lemma~\ref{lem:ckt-eq}. Using the existence and non-uniformity of $\set{Q_n}$, as done in the proof of the classical Karp-Lipton theorem, we claim we may now swap the order of the quantifiers and write:
	\begin{itemize}
		\item If $x\in\ayes$, then $\exists Q_n\;\forall u \pr[V_{n}(x, u, Q_n(x,u, M_u)) = 1] \geq 1-2^{-s(n)}$, and
		\item if $x\in\ano$, then $\forall Q_n\;\exists u \pr[V_{n}(x, u, Q_n(x, u, M_u)) = 1] \leq \frac{1}{2^{r(n)}}$.
	\end{itemize}
	This would imply the desired claim that $\QCPi_2 \subseteq \QCSi_2$.
	
	To see that we may indeed swap quantifiers in this fashion, assume first that $x\in\ayes$. Then, choosing the non-uniform circuit family from Lemma~\ref{lem:ckt-eq} yields that for any fixed $x$ and $u$, with probability at least $1-2^{-s(n)}$, $Q_n$ outputs a proof $v$ such that $C_n$ accepts $(x,u,v)$ with probability $1$. Conversely, if $x\in \ano$, since for an appropriate choice of $u$, there are \emph{no} proofs $v$ such that $C_n$ accepts $(x,u,v)$ with probability more than $2^{-r(n)}$. Then, clearly no choice of $Q_n$ is able to generate a proof $Q_n(x,u, M_u)$ such that $C_{n}$ accepts $(x, u, Q_n(x,u, M_u))$ with probability more than $2^{-r(n)}$.
\end{proof}

\begin{remark}[\textbf{Collapse of \QCPH?}]\label{rem:collapse}
An appeal of the classical Karp-Lipton theorem is that it implies that if $\NP\subseteq\Ppoly$, then PH collapses to its second level; this is because if $\pit=\sigmatwo$, then \PH\ collapses to $\sigmatwo$. Does an analogous statement hold for \QCPH\ as a result of Theorem~\ref{thm:QCKL}? Unfortunately, the answer is not clear. The problem is similar to that outlined in Remark~\ref{rem:langvsprom}. Namely, classically $\pit=\sigmatwo$ collapses \PH\ since for any $\Pi_3$ decision problem, fixing the first (universally) quantified proof yields a $\sigmatwo$ computation. But this can be replaced with a $\pit$ computation by assumption, yielding a computation with quantifiers $\forall\forall\exists$, which trivially collapses to $\forall\exists$, i.e. $\Pi_3\subseteq \pit$. In contrast, for (say) $\QCPi_3$, similar to the phenomenon in Remark~\ref{rem:langvsprom}, fixing the first (universally) quantified proof does \emph{not} necessarily yield a $\QCSi_2$ computation. Thus, a recursive application of the assumption $\QCSi_2=\QCPi_2$ cannot straightforwardly be applied.
\end{remark}

Since $\PrQCMA$ plays an important role in Theorem~\ref{thm:QCKL}, we close with an upper bound on $\PrQCMA$.

\begin{lemma}\label{lem:bnd}
$\PrQCMA \subseteq \NP^{\PP}$.
\end{lemma}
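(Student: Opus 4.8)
The plan is to decide a $\PrQCMA$ promise problem $A=(\ayes,\ano)$ by the natural ``guess-and-count'' strategy: use the $\NP$ layer to nondeterministically guess the classical proof, and use a $\PP$ oracle to test the acceptance probability of the resulting proofless circuit. Concretely, fix a verifier family $\{V_n\}_{n\in\natural}$ for $A$ with proof length $p(n)$ and gap $c(n)-s(n)\geq 2^{-q(n)}$, and by Observation~\ref{obs:gate} assume a gate set for which, once a proof $y$ is hardcoded into $V_n$, the acceptance probability $\Pr[V_n\text{ accepts }(x,y)]$ is a rational number of the form $k/2^{m(n)}$ for a polynomial $m$. Pick any threshold $t(n)$ with $s(n)<t(n)<c(n)$ (for instance $t=(c+s)/2$, rounded to $m(n)$ bits). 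The $\NP^{\PP}$ machine, on input $x$, guesses $y\in\{0,1\}^{p(n)}$, queries the oracle on the question ``is $\Pr[V_n\text{ accepts }(x,y)]>t(n)$?'', and accepts iff the oracle answers yes.

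The first step is to argue that this threshold question is a (total) language in $\PP$. This is exactly the machinery underlying Lemma~\ref{l:containedPP}: hardcoding $y$ turns $V_n$ into a proofless circuit whose acceptance probability, written as a Feynman path sum over the fixed gate set, equals $g(x,y)/2^{m(n)}$ for a $\mathrm{GapP}$ function $g$; since $t(n)$ is an efficiently computable dyadic rational, the predicate $g(x,y)>t(n)\cdot 2^{m(n)}$ is a sign test of a $\mathrm{GapP}$ function against an $\mathrm{FP}$ function, hence decided by a $\PP$ language. (Equivalently, one may run the bias-shift reduction from the proof of Lemma~\ref{l:containedPP} to move $t$ to $1/2$ and invoke $\class{PQP}=\PP$.) The crucial point is that, unlike the promise class $\PrBQP$, this threshold language is \emph{total}: every circuit has a well-defined acceptance probability that is either above or below $t$, so the oracle always returns a definite bit.

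Correctness then follows from the one-sided acceptance semantics of $\NP$. If $x\in\ayes$, completeness gives a proof $y^\ast$ with $\Pr[V_n\text{ accepts }(x,y^\ast)]\geq c>t$, so the branch guessing $y^\ast$ accepts, and the machine accepts; branches guessing ``gap'' proofs (whose acceptance probability lies in $(s,c)$) may answer either way, but this is harmless since a single accepting branch suffices. If $x\in\ano$, soundness guarantees $\Pr[V_n\text{ accepts }(x,y)]\leq s<t$ for \emph{every} $y$, so the oracle answers no on all branches and the machine rejects. I expect the only genuinely subtle point — and the step most at risk of being glossed over — to be precisely this promise-gap issue: the reason it causes no trouble is that we query a total threshold language rather than the promise problem $\PrBQP(c,s)$ itself, and we exploit that the fixed threshold $t$ sits strictly inside the gap $(s,c)$, so ``bad'' proofs in the $\ayes$ case never create a spurious accepting branch in the $\ano$ case. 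The rest is routine, giving $\PrQCMA\subseteq\NP^{\PP}$.
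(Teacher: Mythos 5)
Your proof is correct, but it takes a genuinely different route from the paper's. The paper first invokes Claim~\ref{clm:decision} (which rests on the perfect-completeness machinery of~\cite{JKNN12} via Observation~\ref{obs:prqcma1}) to replace the given verifier by one that accepts \emph{no} proof with probability inside the promise gap; only then is every fixed-proof circuit $V_y$ a \emph{valid} $\PrBQP$ instance, so that $\PrQCMA\subseteq\exists\boldsymbol{\cdot}\PrBQP\subseteq\NP^{\PrBQP}\subseteq\NP^{\PP}$ goes through with all oracle queries respecting the promise. You instead leave the gap proofs in place and query a \emph{total} threshold language (``is the acceptance probability $>t$?'' for a fixed $t$ strictly inside $(s,c)$), which is in $\PP$ by the standard $\class{GapP}$-versus-$\class{FP}$ sign test; you then absorb the gap proofs using the one-sided acceptance semantics of $\NP$ --- spurious branches can only arise in the $\ayes$ case, where they are harmless, while soundness rules them out entirely in the $\ano$ case. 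Your argument is more elementary and self-contained: it needs neither perfect completeness nor Claim~\ref{clm:decision}, and it sidesteps the promise-oracle well-definedness issue (Observation~\ref{obs:not}) by construction rather than by cleaning. What the paper's route buys is uniformity with the rest of Section~\ref{sec:karp-lipton}, where Claim~\ref{clm:decision} is needed anyway (e.g.\ for Lemma~\ref{lem:ckt-eq}), so the lemma comes essentially for free once that claim is in hand. The one point worth tightening in your write-up is the choice of $t$: rather than ``rounding $(c+s)/2$ to $m(n)$ bits,'' just note that $c,s$ are polynomial-time computable rationals, so $(c+s)/2$ itself is an efficiently computable dyadic rational lying strictly between $s$ and $c$, and the comparison against $g(x,y)/2^{m(n)}$ needs no further adjustment.
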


\begin{proof}
	Let $V$ be a $\PrQCMA$ verifier. Using Claim~\ref{clm:decision}, we may assume that for any proof $y$, $V$ either accepts $y$ with probability $1$ or rejects with probability at most $1-2^{-q(n)}$. Thus, for any fixed $y$, the resulting computation $V_y$ is a $\PrBQP$ computation. This implies $\PrQCMA\subseteq \exists\cdot\PrBQP$ (see also Remark~\ref{rem:langvsprom}). But by Definition~\ref{def:exists},  $\exists\cdot\PrBQP\subseteq \NP^{\PrBQP}$. Combining this with Lemma~\ref{l:containedPP}, which says $\PrBQP\subseteq\PP$, yields the claim.
\end{proof}

\section*{Acknowledgements}
SG and AS thank the Centre for Quantum Technologies at the National University of Singapore for their support and hospitality, where part of this research was carried out. SG acknowledges support from NSF grants CCF-1526189 and CCF-1617710. AS is supported by the Department of Defense. JY was supported by a Virginia Commonwealth University Presidential Scholarship. Research at the Centre for Quantum Technologies is partially funded by the Singapore Ministry of Education and the National Research Foundation under grant R-710-000-012-135. This research was supported in part by the QuantERA ERA-NET Cofund project QuantAlgo.
This research was supported in part by Perimeter Institute for Theoretical Physics.
Research at Perimeter Institute is supported by the Government of Canada through
the Department of Innovation, Science and Economic Development Canada and by
the Province of Ontario through the Ministry of Research, Innovation and Science.

\bibliographystyle{alpha}
\bibliography{gsssy_bib}

\newcommand{\etalchar}[1]{$^{#1}$}
\begin{thebibliography}{ABOBS08}

\bibitem[ABD{\etalchar{+}}09]{ABDFS09}
S.~Aaronson, S.~Beigi, A.~Drucker, B.~Fefferman, and P.~Shor.
\newblock The power of unentanglement.
\newblock {\em Theory of Computing}, 5:1--42, 2009.

\bibitem[ABOBS08]{ABBS08}
D.~Aharonov, M.~Ben-Or, F.~Brand{\~{a}}o, and O.~Sattath.
\newblock The pursuit for uniqueness: {E}xtending {V}aliant-{V}azirani theorem
  to the probabilistic and quantum settings.
\newblock Available at arXiv.org e-Print quant-ph/0810.4840v1, 2008.

\bibitem[ACGK17]{ACGK17}
S.~Aaronson, A.~Cojocaru, A.~Gheorghiu, and E.~Kashefi.
\newblock On the implausibility of classical client blind quantum computing.
\newblock Available at arXiv.org e-Print quant-ph/1704.08482, 2017.

\bibitem[AD14]{AD14}
S.~Aaronson and A.~Drucker.
\newblock A full characterization of quantum advice.
\newblock {\em {SIAM} Journal on Computing}, 43(3):1131--1183, 2014.

\bibitem[Aha03]{A03}
D.~Aharonov.
\newblock A simple proof that {T}offoli and {H}adamard are quantum universal.
\newblock Available at arXiv.org e-Print quant-ph/0301040, January 2003.

\bibitem[AKN98]{AKN98}
D.~Aharonov, A.~Kitaev, and N.~Nisan.
\newblock Quantum circuits with mixed states.
\newblock In {\em Proceedings of 13th ACM Symposium on Theory of Computing
  ({STOC} 1998)}, pages 20--30, 1998.

\bibitem[Amb14]{A14}
A.~Ambainis.
\newblock On physical problems that are slightly more difficult than {QMA}.
\newblock In {\em Proceedings of 29th IEEE Conference on Computational
  Complexity (CCC 2014)}, pages 32--43, 2014.

\bibitem[AN02]{AN02}
D.~Aharonov and T.~Naveh.
\newblock Quantum {NP} - {A} survey.
\newblock Available at arXiv.org e-Print quant-ph/0210077v1, 2002.

\bibitem[AW93]{AW93}
E.~W. Allender and K.~W. Wagner.
\newblock {\em Counting hierarchies: {P}olynomial time and constant depth
  circuits}, pages 469--483.
\newblock World Scientific, 1993.

\bibitem[Bei10]{Bei08}
S.~Beigi.
\newblock {NP} vs {$\text{QMA}_{\log}(2)$}.
\newblock {\em Quantum Information and Computation}, 10:0141--0151, 2010.

\bibitem[BT09]{BT10}
H.~Blier and A.~Tapp.
\newblock All languages in {NP} have very short quantum proofs.
\newblock In {\em Proceedings of the 3rd International Conference on Quantum,
  Nano and Micro Technologies}, pages 34--37, 2009.

\bibitem[BV04]{BV04}
S.~Boyd and L.~Vandenberghe.
\newblock {\em Convex Optimization}.
\newblock Cambridge University Press, 2004.

\bibitem[DN06]{DN06}
C.~M. Dawson and M.~A. Nielsen.
\newblock The solovay-kitaev algorithm.
\newblock {\em Quantum Information and Computation}, 6(1), 2006.

\bibitem[FFKL03]{FFKL03}
S.~Fenner, L.~Fortnow, S.~A. Kurtz, and L.~Li.
\newblock An oracle builder's toolkit.
\newblock {\em Information and Computation}, 182(2):95 -- 136, 2003.

\bibitem[For94]{F94}
L.~Fortnow.
\newblock The role of relativization in complexity theory.
\newblock {\em Bulletin of the European Association for Theoretical Computer
  Science}, 52:52--229, 1994.

\bibitem[FR99]{FR99}
L.~Fortnow and J.~Rogers.
\newblock Complexity limitations on quantum computation.
\newblock {\em Journal of Computer and System Sciences}, 59(2):240--252, 1999.

\bibitem[GK12]{GK12}
S.~Gharibian and J.~Kempe.
\newblock Hardness of approximation for quantum problems.
\newblock In {\em Proceedings of 39th International Colloquium on Automata,
  Languages and Programming ({ICALP} 2012)}, pages 387--398, 2012.

\bibitem[GKS16]{GKS16}
A.~B. Grilo, I.~Kerenidis, and J.~Sikora.
\newblock {QMA} with subset state witnesses.
\newblock {\em Chicago Journal of Theoretical Computer Science}, 4, 2016.

\bibitem[GLS93]{GLS93}
M.~Gr\"{o}tschel, L.~Lov\'{a}sz, and A.~Schrijver.
\newblock {\em Geometric Algorithms and Combinatorial Optimization}.
\newblock Springer-Verlag, 1993.

\bibitem[Gol06]{G06}
O.~Goldreich.
\newblock On promise problems: {A} survey.
\newblock {\em Theoretical Computer Science}, 3895:254--290, 2006.

\bibitem[GY18]{GY16}
S.~Gharibian and J.~Yirka.
\newblock The complexity of simulating local measurements on quantum systems.
\newblock In Mark~M. Wilde, editor, {\em 12th Conference on the Theory of
  Quantum Computation, Communication and Cryptography (TQC 2017)}, volume~73 of
  {\em Leibniz International Proceedings in Informatics (LIPIcs)}, pages
  2:1--2:17, Dagstuhl, Germany, 2018. Schloss Dagstuhl--Leibniz-Zentrum fuer
  Informatik.

\bibitem[HJ90]{HJ90}
R.~A. Horn and C.~H. Johnson.
\newblock {\em Matrix Analysis}.
\newblock Cambridge University Press, 1990.

\bibitem[JKNN12]{JKNN12}
S.~P. Jordan, H.~Kobayashi, D.~Nagaj, and H.~Nishimura.
\newblock Achieving perfect completeness in classical-witness quantum
  {M}erlin-{A}rthur proof systems.
\newblock {\em Quantum Information \& Computation}, 12(5 \& 6):461--471, 2012.

\bibitem[JW09]{QRGone}
Rahul Jain and John Watrous.
\newblock Parallel approximation of non-interactive zero-sum quantum games.
\newblock In {\em {Proceedings of the 24th Annual IEEE Conference on
  Computational Complexity}}, pages 243--253, 2009.

\bibitem[KL80]{KL80}
R.~M. Karp and R.~J. Lipton.
\newblock Some connections between nonuniform and uniform complexity classes.
\newblock In {\em Proceedings of the Twelfth Annual ACM Symposium on Theory of
  Computing}, STOC '80, pages 302--309, New York, NY, USA, 1980. ACM.

\bibitem[KlGN15]{KlGN12}
H.~Kobayashi, F.~le~Gall, and H.~Nishimura.
\newblock Stronger methods of making quantum interactive proofs perfectly
  complete.
\newblock {\em {SIAM} Journal on Computing}, 44(2):243--289, 2015.

\bibitem[KSV02]{KSV02}
A.~Kitaev, A.~Shen, and M.~Vyalyi.
\newblock {\em Classical and Quantum Computation}.
\newblock American Mathematical Society, 2002.

\bibitem[KW00]{KW00}
A.~Kitaev and J.~Watrous.
\newblock Parallelization, amplification, and exponential time simulation of
  quantum interactive proof systems.
\newblock In {\em Proceedings of the 32nd ACM Symposium on Theory of Computing
  (STOC 2000)}, pages 608--617, 2000.

\bibitem[LCV07]{LCV07}
Y.-K. Liu, M.~Christandl, and F.~Verstraete.
\newblock Quantum computational complexity of the {N}-representability problem:
  {QMA} complete.
\newblock {\em Physical Review Letters}, 98:110503, 2007.

\bibitem[LFKN92]{LFKN92}
C.~Lund, L.~Fortnow, H.~Karloff, and N.~Nisan.
\newblock Algebraic methods for interactive proof systems.
\newblock {\em Journal of the ACM}, 39(4):859--868, October 1992.

\bibitem[LGG17]{LG17}
J.~Lockhart and C.~E. Gonz{\'a}lez-Guill{\'e}n.
\newblock Quantum state isomorphism.
\newblock {\em arXiv preprint arXiv:1709.09622}, 2017.

\bibitem[MS72]{MS72}
A.~Meyer and L.~Stockmeyer.
\newblock The equivalence problem for regular expressions with squaring
  requires exponential time.
\newblock In {\em Proceedings of the 13th Symposium on Foundations of Computer
  Science}, pages 125--129, 1972.

\bibitem[MW05]{MW05}
C.~Marriott and J.~Watrous.
\newblock Quantum {A}rthur-{M}erlin games.
\newblock {\em Computational Complexity}, 14(2):122--152, 2005.

\bibitem[Per12]{attila}
A.~Pereszl\'{e}nyi.
\newblock Multi-prover quantum {M}erlin-{A}rthur proof systems with small gap.
\newblock Available at arXiv.org e-Print quant-ph/1205.2761, 2012.

\bibitem[Shi03]{S03}
Y.~Shi.
\newblock Both {T}offoli and controlled-{NOT} need little help to do universal
  quantum computing.
\newblock {\em Quantum Information and Computation}, 3(1):84--92, January 2003.

\bibitem[Tod91]{T91}
S.~Toda.
\newblock {PP} is as hard as the {P}olynomial-{T}ime {H}ierarchy.
\newblock {\em SIAM Journal on Computing}, 20:865--877, 1991.

\bibitem[Vin05]{Vin05}
N.~V. Vinodchandran.
\newblock A note on the circuit complexity of pp.
\newblock {\em Theoretical Computer Science}, 347(1-2):415--418, November 2005.

\bibitem[vN28]{vN28}
J.~von Neumann.
\newblock Zur {T}heorie der {G}esellschaftspiele.
\newblock 100(1):295--320, 1928.

\bibitem[VV86]{VV86}
L.~G. Valiant and V.~V. Vazirani.
\newblock {NP} is as easy as detecting unique solutions.
\newblock {\em Theoretical Computer Science}, 47:85--93, 1986.

\bibitem[Vya03]{Vy03}
M.~Vyalyi.
\newblock {QMA$=$PP implies that {PP} contains {PH}.}
\newblock {\em Electronic Colloquium on Computational Complexity}, 2003.

\bibitem[Wat09a]{W09_2}
J.~Watrous.
\newblock {\em Encyclopedia of Complexity and System Science}, chapter Quantum
  Computational Complexity.
\newblock Springer, 2009.

\bibitem[Wat09b]{W09}
J.~Watrous.
\newblock Semidefinite programs for completely bounded norms.
\newblock {\em Theory of Computing}, 5:217--238, 2009.

\bibitem[Wat09c]{W09_3}
J.~Watrous.
\newblock Zero-knowledge against quantum attacks.
\newblock {\em SIAM Journal on Computing}, 39(1):25--58, 2009.

\bibitem[Wra76]{W76}
C.~Wrathall.
\newblock Complete sets and the {P}olynomial-{T}ime {H}ierarchy.
\newblock {\em Theoretical Computer Science}, 3(1):23 -- 33, 1976.

\bibitem[Yam02]{Y02}
T.~Yamakami.
\newblock Quantum {NP} and a quantum hierarchy.
\newblock In {\em Proceedings of the 2nd IFIP International Conference on
  Theoretical Computer Science}, pages 323--336. Kluwer Academic Publishers,
  2002.

\end{thebibliography}

\end{document}